\newif\iflong\longtrue
\tikzstyle{vertex}=[circle, fill, inner sep=0pt, minimum size=6pt]
\newcommand{\vertex}{\node[vertex]}
\newcommand{\gettikzxy}[3]{%
  \tikz@scan@one@point\pgfutil@firstofone#1\relax
  \edef#2{\the\pgf@x}%
  \edef#3{\the\pgf@y}%
}
\newcommand{\proofsubparagraph}{\subparagraph}
\crefname{ineq}{inequality}{inequalities}
\crefname{problem}{Problem}{Problems}
\crefname{claim}{Claim}{Claims}
\newtheorem{theorem}{Theorem}
\theoremstyle{definition}
\theoremstyle{definition}
\newtheorem{definition}{Definition}
\theoremstyle{remark}
\theoremstyle{plain}
\newtheorem{corollary}[theorem]{Corollary}
\newtheorem{lemma}[theorem]{Lemma}
\newtheorem*{lemma*}{Lemma}
\newtheorem*{claim}{Claim}
\newcommand*{\myproofname}{Proof}
\newenvironment{claimproof}[1][\myproofname]{\begin{proof}[#1]}{\end{proof}}
\newcommand{\poly}{\operatorname{poly}}
\newcommand{\polylog}{\operatorname{polylog}}
\newcommand{\cm}{\operatorname{cm}}
\newcommand{\nae}{\operatorname{NAE}}
\def\X{\ensuremath\mathcal{X}}
\def\R{\ensuremath\mathcal{R}}
\def\true{{\texttt{true}}}
\def\false{{\texttt{false}}}
\newcommand{\PLS}{{\normalfont\sffamily{PLS}}}
\newcommand{\PSPACE}{{\normalfont\sffamily{PSPACE}}}
\newcommand{\NP}{{\normalfont\sffamily{NP}}}
\newcommand{\hartigan}{{\textsc{2-Means/Flip}}}
\newcommand{\khartigan}{{\textsc{$k$-Means/Flip}}}
\newcommand{\kkhartigan}{{\textsc{$(k+1)$-Means/Flip}}}
\newcommand{\emaxcut}{{\textsc{Squared Euclidean Max Cut/Flip}}}
\newcommand{\emaxcutnosq}{{\textsc{Euclidean Max Cut/Flip}}}
\newcommand{\maxcut}{{\textsc{Max Cut/Flip}}}
\newcommand{\maxcutd}[1][]{{\textsc{Max Cut-$#1$/Flip}}}
\newcommand{\maxcutdistinctd}[1][]{{\textsc{Distinct Max Cut-$#1$/Flip}}}
\newcommand{\minbisection}{{\textsc{Odd Min Bisection/Flip}}}
\newcommand{\maxbisection}{{\textsc{Odd Max Bisection/Flip}}}
\newcommand{\posnaesat}[1][]{\textsc{Pos NAE $#1$-SAT/Flip}}
\newcommand{\halfposnaesat}[1][]{\textsc{Odd Half Pos NAE $#1$-SAT/Flip}}
\newcommand{\densestcut}{{\textsc{Densest Cut/Flip}}}
\newcommand{\sparsestcut}{{\textsc{Sparsest Cut/Flip}}}
\newcommand{\flip}{{\textsc{Flip}}}
\newcommand{\problemdef}[3]{
    \vspace{-3pt}
    {\centering
    \bigskip\par\noindent%
    \renewcommand{\arraystretch}{1.2}%
      \begin{tabularx}{0.9\textwidth}{@{}l@{\hspace{3pt}}X}
        \hline\hline%
        \multicolumn{2}{@{}l}{\normalsize\textsc{#1}} \\[\fboxsep]%
        \normalsize\textbf{Input:}    & \normalsize#2 \\
        \normalsize\textbf{Output:} & \normalsize#3\\\hline\hline%
      \end{tabularx}
    \medskip\par
    \vspace{2pt}}%
}
\let\originalleft\left
\let\originalright\right
\renewcommand{\left}{\mathopen{}\mathclose\bgroup\originalleft}
\renewcommand{\right}{\aftergroup\egroup\originalright}
\title{Complexity of Local Search for Euclidean Clustering Problem}
\author[1]{Bodo Manthey}
\author[2]{Nils Morawietz}
\author[1]{Jesse van Rhijn}
\author[3]{Frank Sommer}
\affil[1]{Faculty of Electrical Engineering, Mathematics, and Computer Science, University of Twente, The Netherlands}
\affil[2]{Friedrich Schiller University Jena, Institute of Computer Science, Germany}
\affil[3]{Institute of Logic and Computation, TU Wien, Vienna, Austria}
\affil[ ]{\textit {\{b.manthey, j.vanrhijn\}@utwente.nl, nils.morawietz@uni-jena.de, fsommer@ac.tuwien.ac.at}}
\date{}
\begin{document}
\maketitle

\begin{abstract}
    We show that the simplest local search heuristics for two natural Euclidean clustering
    problems are \PLS{}-\iflong complete\else hard\fi.
    First, we show that the Hartigan--Wong method, which is essentially the \textsc{Flip} heuristic, for \textsc{$k$-Means} clustering is \PLS{}-\iflong complete\else hard\fi,
    even when $k = 2$.
    Second, we show the same result for the \textsc{Flip} heuristic for \textsc{Max Cut}, even
    when the edge weights are given by the (squared) Euclidean distances between
    the points in some set $\X \subseteq \mathds{R}^d$; a problem
    which is equivalent to \textsc{Min Sum 2-Clustering}. 
\end{abstract}

\section{Introduction}

Clustering problems arise frequently in various fields of application.
In these problems, one is given a set of objects, often represented
as points in $\mathds{R}^d$, and is asked to partition the set into
\emph{clusters}, such that the objects within a cluster
are similar to one another by some measure. For points in $\mathds{R}^d$, a natural
measure is the (squared) Euclidean distance between two objects.
In this paper, we consider two Euclidean clustering problems that use this similarity
measure: \textsc{$k$-Means} clustering and \textsc{Squared Euclidean Max Cut}. 

\subparagraph*{\boldmath \texorpdfstring{$k$}{k}-Means.}

One well-studied clustering problem is \textsc{$k$-Means}
\cite{berkhinSurveyClusteringData2006,jainDataClustering502010}.
In this problem, one is given a set of points $\X \subseteq \mathds{R}^d$ and an integer
$k$. The goal is to partition $\X$ into exactly $k$ clusters such that
the total squared distance of each point to the centroid of its cluster
is minimized. Formally, one seeks to minimize the clustering cost

$$
    \sum_{i=1}^k \sum_{x \in C_i} \|x - \cm(C_i)\|^2 \quad\text{where}\quad \cm(C_i)
        = \frac{1}{|C_i|} \sum_{x \in C_i} x.
$$

Being \NP-hard even when $k = 2$ \cite{ADHP09} or when $\X \subseteq \mathds{R}^2$
\cite{mahajanPlanarKmeansProblem2012},
\textsc{$k$-Means} has been extensively studied from the perspective of approximation
algorithms\iflong, with several such approaches having emerged\else ~\fi    \cite{arthurKmeansAdvantagesCareful2007,hasegawaEfficientAlgorithmsVarianceBased2000,kanungoEfficientKmeansClustering2002,matousekApproximateGeometricClustering2000}. 
Nevertheless, local search remains the method of choice for practitioners
\cite{berkhinSurveyClusteringData2006,jainDataClustering502010}.

The most well-known local search algorithm for \textsc{$k$-Means} is
Lloyd's method \cite{lloydLeastSquaresQuantization1982}.
Here, one alternates between two steps in each iteration.
In the first step, each point is assigned to its closest cluster center, and in the second step the
cluster centers are recalculated from the newly formed clusters.

This algorithm was shown to have worst-case super-polynomial running time by
Arthur and Vassilvitskii \cite{arthurHowSlowKmeans2006},
with the result later improved to exponential running time
even in the plane by Vattani \cite{vattaniKmeansRequiresExponentially2011}.
Moreover, Roughgarden and Wang \cite{roughgardenComplexityKmeansMethod2016}
showed it can implicitly solve \PSPACE-complete problems. 
On the other hand, Arthur et al.\ \cite{arthurSmoothedAnalysisKMeans2011} proved that Lloyd's method has
smoothed polynomial running time on Gaussian-perturbed point sets,
providing a degree of explanation for its effectiveness in practice.

Recently Telgarsky and Vattani~\cite{telgarskyHartiganMethodKmeans2010}
revived interest in another, older local search
method for \textsc{$k$-Means} due to Hartigan and Wong \cite{hartiganAlgorithm136KMeans1979}.
This algorithm, the Hartigan--Wong method, is much simpler:
one searches for a single point that can be reassigned to some other cluster
for a strict improvement in the clustering cost. 
In other words, the Hartigan--Wong method is the \textsc{Flip} heuristic.
In the following, we always use \textsc{Flip} instead of Hartigan--Wong to indicate that this is the most simple heuristic for this problem and to keep the name for the used heuristic consistent.
Despite this simplicity,
Telgarsky and Vattani~\cite{telgarskyHartiganMethodKmeans2010} show that the \textsc{Flip} heuristic is more powerful than Lloyd's method, in the sense
that the former can sometimes improve clusterings produced by the latter, while the converse
does not hold.

A similar construction to that of Vattani for Lloyd's method
shows that there exist instances on which the \textsc{Flip} heuristic
can take exponentially many iterations to find a local optimum,
even when all points lie on a line \cite{mantheyWorstCaseSmoothedAnalysis2024b}.
However, this example follows a contrived sequence of iterations. Moreover,
\textsc{$k$-Means} can be solved optimally for instances in which all points lie on a line.
Thus, the question
remains whether stronger worst-case examples exist, and what the complexity
of finding locally optimal clusterings is.

\subparagraph*{Squared Euclidean Max Cut.}

Another clustering problem similar to \textsc{$k$-Means} is
\textsc{Squared Euclidean Max Cut}.
Recall that \textsc{Max Cut} asks for a subset
of vertices $S$ of a weighted graph $G = (V, E)$, such that the
total weight of the edges with one endpoint in $S$ and one in~$V \setminus S$
is maximized. This problem emerges in numerous applications, from graph clustering
to circuit design to statistical physics~\cite{barahonaApplicationCombinatorialOptimization1988,boykovInteractiveGraphCuts2001}.

In \textsc{Squared Euclidean Max Cut}, one identifies
the vertices of $G$ with a set $\X \subseteq \mathds{R}^d$, and assigns
each edge a weight equal to the squared Euclidean distance between its endpoints.
This problem is equivalent to \textsc{Min Sum 2-Clustering} (although not in approximation),
where one seeks to minimize

$$
    \sum_{x, y \in X} \|x - y\|^2 + \sum_{x, y \in Y} \|x - y\|^2
$$
over all partitions $(X, Y)$ of $\X$.
Also this special case of \textsc{Max Cut} is \NP-hard \cite{ageevComplexityWeightedMaxcut2014}.
In a clustering context, the problem was studied by 
Schulman \cite{schulmanClusteringEdgecostMinimization2000}
and Hasegawa et al.\ \cite{hasegawaEfficientAlgorithmsVarianceBased2000},
leading to exact and approximation algorithms.

Given the computational hardness of \textsc{Max Cut}, practitioners
often turn to heuristics.
Some of the resulting algorithms are
very successful, such as the Kernighan-Lin heuristic \cite{kernighanEfficientHeuristicProcedure1970}
and the Fiduccia-Mattheyses algorithm \cite{fiducciaLinearTimeHeuristicImproving1982a}.
Johnson et al.\ \cite{johnsonHowEasyLocal1988}
note that the simple \textsc{Flip} heuristic, where one moves a single vertex
from one side of the cut to the other, tends to converge quickly in practice
.
Sch\"affer and Yannakakis \cite{schafferSimpleLocalSearch1991} later showed
that it has exponential running time in the worst case.

One may wonder whether \textsc{Flip} performs
better for \textsc{Squared Euclidean Max Cut}. Etscheid and R\"oglin
\cite{etscheidSmoothedAnalysisSquared2015a,etscheidWorstCaseAnalysisMaxCut2018}
performed a smoothed analysis of \textsc{Flip} in this context, showing a smoothed
running time of $2^{O(d)}\cdot\poly(n, 1/\sigma)$ for Gaussian-perturbed instances,
where $\sigma$ denotes the standard deviation of the Gaussian noise.
On the other hand, they also exhibited an instance in $\mathds{R}^2$
on which there exists an exponential-length improving sequence of iterations,
with the caveat that not all edges are present in the instance. Like for
\textsc{$k$-Means}, one may ask whether stronger examples exist (e.g.\ on complete
graphs), and what the complexity of finding \flip-optimal solutions is.

\subparagraph*{Complexity of Local Search.}

The existence of instances with worst-case exponential running time is
common for local search heuristics. 
To investigate this phenomenon, and local search heuristics in general,
Johnson et al.~\cite{johnsonHowEasyLocal1988} defined a complexity class
\PLS{}, for polynomial local search. 
The class is designed to capture the properties
of commonly used local search heuristics and contains pairs consisting of an optimization 
problem \textsc{P} and a local search heuristic \textsc{$\mathcal{N}$}.
In the following we denote such a pair as \textsc{P/$\mathcal{N}$}. 
\PLS{}-complete problems have~the property
that their natural local search algorithms have worst-case exponential running time.
\iflong In the same work, Johnson et al. \else Johnson et al.~\cite{johnsonHowEasyLocal1988} \fi showed that the Kernighan-Lin heuristic for
the \textsc{Max Bisection} problem (a variant of \textsc{Max Cut}, where the parts of the partition must be of equal size)
is \PLS{}-complete. 
This stands
in contrast to the empirical success of this algorithm~\cite{kernighanEfficientHeuristicProcedure1970}.

Building on this work, Sch\"affer and Yannakakis \cite{schafferSimpleLocalSearch1991}
later proved that a host of very simple local search heuristics are \PLS{}-complete,
including the \textsc{Flip} heuristic for \textsc{Max Cut}. This refuted a conjecture by
Johnson et al., who doubted that such simple heuristics could be \PLS{}-complete.
Els\"asser and Tscheuschner \cite{elsasserSettlingComplexityLocal2011}
later showed that this remains true even in the very
restricted variant where the input graph has maximum degree five, which we will
refer to as \textsc{Max Cut-5}.

Sch\"affer and Yannakakis defined a new type of \PLS{}-reduction
called a \emph{tight} reduction. In addition to showing completeness for \PLS{},
this type of reduction also transfers stronger properties on the running time
of local search heuristics between \PLS{} problems.

Since the introduction of \PLS{}, many local search problems have been shown to be
\PLS{}-complete, including such successful heuristics as Lin-Kernighan's algorithm for the
TSP \cite{papadimitriouComplexityLinKernighan1992} or the~\textsc{$k$-Swap}-neighborhood heuristic for~\textsc{Weighted Independent Set}~\cite{komusiewiczMorawietzFindingSwap2022} for~$k\geq 3$. 
For a non-exhaustive list,
see Michiels, Korst and Aarts \cite[Appendix C]{michielsTheoreticalAspectsLocal2007}.

\subparagraph*{Our Contribution.}

Given the existence of \textsc{$k$-Means} instances where the
\textsc{Flip} heuristic has worst-case exponential running time, one may
ask whether this heuristic is \PLS{}-\iflong complete\else hard\fi. In this work,
we answer this question in the affirmative.
\begin{theorem}\label{thm:hartigan}
    For each~$k\ge 2$, \khartigan{} is \PLS{}-\iflong complete\else hard\fi.
\end{theorem}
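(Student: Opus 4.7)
Membership in \PLS{} is immediate: for any clustering the Hartigan--Wong neighborhood has size $O(kn)$, and each candidate move has a cost delta that is polynomial-time computable via the standard centroid-update formulas. For hardness I plan to prove the case $k=2$ first and then lift to $k \geq 3$ by appending $k-2$ tight dummy clusters placed at enormous mutual squared distances from each other and from the original point set; these auxiliary clusters are then frozen throughout every Hartigan--Wong run, reducing the dynamics to the original $2$-Means instance. The $k=2$ case is the substantive one, and I would obtain it by a tight PLS-reduction from \emaxcut{}, which I would establish as \PLS-complete separately via an embedding reduction from bounded-degree \maxcut{}.

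\textbf{Obstacle.} The cost change of transferring $p$ from cluster $X$ to cluster $Y$ in $2$-Means is
$$
\frac{|Y|}{|Y|+1}\|p-\cm(Y)\|^2 - \frac{|X|-1}{|X|}\|p-\cm(X \setminus \{p\})\|^2,
$$
whereas the corresponding Flip gain in \emaxcut{} is $\sum_{q \in X \setminus \{p\}}\|p-q\|^2 - \sum_{q \in Y}\|p-q\|^2$, independent of cluster sizes. Because $2$-Means coincides with \textsc{Min Sum 2-Clustering} (and hence with \emaxcut{}) only when the clusters are perfectly balanced, the size-dependent coefficients obstruct a move-by-move correspondence, so the reduction cannot be a trivial ``same point set'' map.

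\textbf{The anchor construction.} My plan is to augment $\X \subseteq \mathds{R}^d$ with two large anchor sets $A_1, A_2$ of equal size $N$, embedded in additional coordinate directions so that (i) the squared distance between any point of $A_1$ and any point of $A_2$ is enormous compared with all distances involving $\X$, and (ii) each $A_i$ has centroid zero in those new coordinates and lies in the orthogonal complement of $\X$. Scaling all original \emaxcut{} squared distances up by a large integer $M$ and initializing $2$-Means at $(A_1 \cup X_0, A_2 \cup Y_0)$, property~(i) forces every anchor to be frozen throughout the Hartigan--Wong run, while property~(ii) ensures that the gain of moving an $\X$-point reflects the geometry of $\X$ rather than of the anchors. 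For $N = \poly(|\X|, M)$ chosen large enough, the sign and the ranking of every Hartigan--Wong gain should match those of the corresponding Flip gain, giving a bijection between improving-move graphs and hence a tight reduction; mapping the final $2$-Means optimum $(A_1 \cup X^{*}, A_2 \cup Y^{*})$ to the cut $(X^{*}, Y^{*})$ yields the solution-recovery function.

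\textbf{Main difficulty.} The hard part is a uniform perturbation estimate: expanding each $\|p-\cm(\cdot)\|^2$ around the anchor centroid, the Hartigan--Wong gain of moving $p$ decomposes into a rescaled Flip gain plus correction terms of order $1/N$ involving the within-cluster scatters and the size imbalance $|X|-|Y|$. These corrections must be dominated by the gap $M$ between distinct Flip gains \emph{simultaneously for every configuration reachable along the Hartigan--Wong run}, not just on average. Since Hartigan--Wong may traverse super-polynomially many configurations, the bound must be worst-case uniform, and choosing $N$ and $M$ polynomially in $|\X|$ and the weight range while maintaining this bound is the delicate calculation that separates a plain polynomial reduction from a tight one.
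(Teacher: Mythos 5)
There is a genuine gap, and it lies in the central claim of your anchor construction. Write $\Sigma_S=\sum_{q\in S}q$ and $\Sigma_T=\sum_{q\in T}q$. With anchors of size $N$ centered at the origin and orthogonal to $\X$, the cost change of moving $p$ from $A_1\cup S$ to $A_2\cup T$ works out to
$\Delta=\tfrac{2}{N}\bigl[\langle p,\Sigma_{S\setminus\{p\}}\rangle-\langle p,\Sigma_T\rangle\bigr]+O(\poly(n)\cdot M/N^2)$,
while the \flip{} gain of the cut is
$g=(|S|-1-|T|)\|p\|^2+\bigl(\sum_{q\in S\setminus\{p\}}\|q\|^2-\sum_{q\in T}\|q\|^2\bigr)-2\bigl[\langle p,\Sigma_{S\setminus\{p\}}\rangle-\langle p,\Sigma_T\rangle\bigr]$.
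Hence $\Delta=\tfrac{1}{N}\bigl[(|S|-1-|T|)\|p\|^2+\sum_{q\in S\setminus\{p\}}\|q\|^2-\sum_{q\in T}\|q\|^2-g\bigr]+O(\poly(n)\cdot M/N^2)$: the size- and norm-dependent terms sit at the \emph{same} $1/N$ order as the rescaled \flip{} gain, not at order $1/N^2$. The anchors only neutralize the $|C|/(|C|\pm 1)$ prefactors; they do not make the Hartigan--Wong gain proportional to the cut gain. Equivalently, to leading order the anchored instance minimizes $\mathrm{const}-\tfrac{1}{N}(\|\Sigma_S\|^2+\|\Sigma_T\|^2)$, i.e.\ it maximizes $w(S,T)-\bigl(|T|\sum_{q\in S}\|q\|^2+|S|\sum_{q\in T}\|q\|^2\bigr)$ rather than $w(S,T)$, so its local optima need not induce locally optimal cuts. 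Since the offending terms scale with $M$ and with $1/N$ exactly as the signal does, no choice of $N=\poly(|\X|,M)$ (or any $N$) makes them subdominant; the ``uniform perturbation estimate'' you flag as the main difficulty cannot be carried out. These corrections cancel only when all points have equal norm and $|S|=|T|+1$, i.e.\ on balanced configurations --- but the Hartigan--Wong neighborhood does not enforce balance, so handling unbalanced configurations is precisely the problem, not a lower-order nuisance.

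A second, structural issue: your base problem \emaxcut{} is not available off the shelf --- it is Theorem~2 of this paper, and there is no simple ``embedding reduction from bounded-degree \textsc{Max Cut}'', because squared Euclidean weights force the cost into the form $|X|\cdot|Y|\cdot w(E)-w(X,Y)$, which is why the paper has to build the cardinality-constrained chain through \halfposnaesat[3], \halfposnaesat[2] and \minmaxbisection{}. The paper then avoids your approximation problem entirely: it reduces \maxbisection{} to \densestcut{} (adding $n^4$ heavy matching edges), and maps \densestcut{} to \hartigan{} via the Aloise et al.\ embedding, under which the $2$-Means cost is \emph{exactly} $w(E)-n\cdot w(Q,R)/(|Q|\cdot|R|)$; the ratio objective absorbs the cluster-size dependence identically instead of approximately. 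Your lift from $k$ to $k+1$ by far-away dummy points is essentially the paper's \Cref{lemma:k-means} and is fine, but the $k=2$ core of your plan does not go through as described.
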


Just as with \khartigan{}, we ask whether \textsc{Squared Euclidean Max Cut}
with the \textsc{Flip} heuristic is \PLS{}-\iflong complete\else hard\fi. Again, we answer this question
affirmatively. In addition, we show the same result for
\textsc{Euclidean Max Cut}, where the distances between the points are not squared.
\begin{theorem}\label{thm:maxcut}
    \emaxcutnosq{} and \emaxcut{} are \PLS{}-\iflong complete\else hard \fi.
\end{theorem}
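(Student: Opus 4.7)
My plan is to give a \PLS-reduction from \maxcut{}, shown \PLS-complete by Sch\"affer and Yannakakis, to \emaxcut{}. Given a weighted source instance $G=(V,E,w)$ with positive integer edge weights, I would realize each vertex $v\in V$ as a point $p_v\in\mathds{R}^d$ on a common sphere of radius $r$ about the origin, chosen so that $\|p_u-p_v\|^2 = w_{uv}$ for $\{u,v\}\in E$ and $\|p_u-p_v\|^2 = \delta$ for non-edges, with $\delta$ a small positive constant. The identity on vertex labels then identifies cuts of the Euclidean instance with cuts of $G$, and the reduction claim is that every \flip-local optimum of the Euclidean instance projects to a \flip-local optimum of~$G$.

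Under the common-norm condition $\|p_v\|^2 = r^2$, a direct computation via $\|p_u-p_v\|^2 = 2r^2 - 2\,p_u\cdot p_v$ shows that the change in Euclidean cut value when flipping a vertex $v\in S$ simplifies to
$$
\Delta_E \;=\; \Delta_G + \delta\cdot\bigl[2|S|-n-1 - (\deg_S(v) - \deg_{\bar S}(v))\bigr],
$$
where $\Delta_G = W_v^S - W_v^{\bar S}$ is precisely the flip improvement in the source max cut. The residual bracket is bounded by $O(n)$ in absolute value. Since $G$ has integer weights, every nonzero $\Delta_G$ satisfies $|\Delta_G|\geq 1$, so choosing $\delta = 1/(3n)$ makes the residual term strictly smaller than $|\Delta_G|$ for every improving flip. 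Consequently $\operatorname{sign}(\Delta_E) = \operatorname{sign}(\Delta_G)$ whenever $\Delta_G\neq 0$, which is enough to ensure that any \flip-local optimum of the Euclidean instance is also \flip-locally optimal in~$G$.

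The main obstacle is realising the prescribed squared distances as a genuine Euclidean embedding, i.e.\ ensuring that the Gram matrix with $G_{uu}=r^2$ and $G_{uv}=r^2-\tfrac12\|p_u-p_v\|^2$ is positive semidefinite. A spectral estimate shows that taking $r^2$ larger than the spectral radius of the edge-plus-$\delta$ weight matrix suffices; since this bound is polynomial in $n$ and the maximum weight, a Cholesky factorisation yields the points in polynomial time, with $d\leq n$. For \emaxcutnosq{}, we use the same construction but with $r$ taken polynomially larger still, so that all pairwise squared distances lie in a narrow relative window about $2r^2$; on this window the map $t\mapsto\sqrt{t}$ is nearly affine with slope $1/(2\sqrt{2}\,r)$, and a short Taylor-remainder analysis shows that the quadratic error in approximating $\|p_u-p_v\|$ contributes, summed over the at most $n$ terms of a flip, less than $|\Delta_E|/(2\sqrt{2}\,r)$. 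Hence the sign of every flip improvement is preserved under the square-root transformation, and \PLS-completeness transfers from \emaxcut{} to \emaxcutnosq{}.
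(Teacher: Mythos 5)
There is a genuine gap, and it sits exactly at the step you flag as "the main obstacle": the prescribed squared distances cannot be realized by any point set, and the spectral fix you invoke does not work. With all points on a common sphere of radius $r$, the Gram matrix is forced to be $G = r^2 J - \tfrac12 D$, where $J$ is the all-ones matrix and $D$ is your matrix of prescribed squared distances ($w_{uv}$ on edges, $\delta$ on non-edges). On the hyperplane $\mathbf{1}^{\perp}$ the term $r^2J$ vanishes identically, so positive semidefiniteness requires $y^{\top}Dy\le 0$ for every $y\perp\mathbf{1}$ (Schoenberg's negative-type condition) \emph{independently of $r$}; taking $r^2$ beyond any spectral radius buys you nothing there. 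And your $D$ violates this condition badly: already for a single heavy edge $\{u,v\}$ of weight $W$ and a third vertex $z$ non-adjacent to both, a Euclidean realization would need $\sqrt{W}=\|p_u-p_v\|\le\|p_u-p_z\|+\|p_z-p_v\|=2\sqrt{\delta}$, i.e.\ $W\le 4\delta$, which fails for any nontrivial weights once $\delta=1/(3n)$. So the embedding step collapses, and with it the whole reduction, including the subsequent square-root/Taylor argument for \emaxcutnosq{}, which presupposes the same (nonexistent) point set.

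This failure is not a repairable technicality but the structural reason the paper does not reduce directly from \maxcut{}: in Euclidean instances the pairwise weights have the form $\|x\|^2+\|y\|^2-2\langle x,y\rangle$, so every realizable weight carries an additive "norm" part that cannot be made uniformly negligible relative to the combinatorial part (your $\delta$-window idea is precisely an attempt to do this). The paper instead accepts weights of the form $w(E)-w(xy)$, which makes the cut value $|X|\cdot|Y|\cdot w(E)-w(X,Y)$; the dominant $|X|\cdot|Y|$ term forces local optima to be (odd) bisections, on which a single flip leaves $|X|\cdot|Y|$ unchanged, and the residual objective is $-w(X,Y)$. Hence the correct source problem is the cardinality-constrained \minbisection{} (with the \flip{} neighborhood), whose \PLS{}-completeness is what the long chain through \textsc{Distinct Max Cut-5} and the \textsc{Odd Half Pos NAE SAT} variants establishes; for \emaxcutnosq{} the paper then chooses the coordinates so that each squared distance is a perfect square, giving $\|x-y\|=C\cdot w(E)-w(xy)$ exactly rather than approximately. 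A secondary issue with your write-up (moot given the above, but worth noting) is that the paper's formulation of \emaxcut{} requires integer weights for \PLS{} membership, so a non-integer $\delta=1/(3n)$ would in any case have to be removed by rescaling.
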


We note that \PLS{}-\iflong completeness \else hardness \fi results for Euclidean local optimization problems
are rather uncommon. We are only aware of one earlier result by
Brauer \cite{brauerComplexitySingleSwapHeuristics2017},
who proved \PLS{}-completeness of a local search heuristic for
a discrete variant of \textsc{$k$-Means}. This variant chooses $k$ cluster
centers among the set of input points, after which points are assigned to their
closest center. The heuristic they consider removes one point from the set
of cluster centers and adds another. In their approach, they first construct a metric
instance, and then show that this instance can be embedded into $\mathds{R}^d$ using
a theorem by Schoenberg \cite{schoenbergMetricSpacesPositive1938}. In contrast, we
directly construct instances in $\mathds{R}^d$.

In addition to showing \PLS{}-hardness of \khartigan{} and \emaxcut{}, we also
show that there exist specific hard instances of these problems, as well as all other problems
considered in this paper.

\begin{theorem}\label{thm:hardinstances}
    Each local search problem~$L$ considered in this work (see~\Cref{sec:definitions}) fulfills the following properties:
    \begin{enumerate}
     \item $L$ is \PLS-\iflong complete\else hard\fi.
     \item 
     For each~$n$, one can compute in polynomial time an instance of~$L$ of size~$n^{\mathcal{O}(1)}$ with an initial solution that is exponentially far away from any local optimum.
    \item The problem of computing the locally optimal solution obtained from performing a standard
    local search algorithm based on the neighborhood of $L$
    is \PSPACE{}-\iflong complete \else hard \fi for $L$.\label{PSCPACE prop}
    \end{enumerate}
\end{theorem}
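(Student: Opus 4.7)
The plan is to obtain \Cref{thm:hardinstances} by strengthening every PLS-reduction used to establish \Cref{thm:hartigan,thm:maxcut} (together with the reductions between the auxiliary problems of \Cref{sec:definitions}) to a \emph{tight} PLS-reduction in the sense of Sch\"affer and Yannakakis~\cite{schafferSimpleLocalSearch1991}. Once tightness is verified, all three claimed properties follow simultaneously from standard machinery: property~1 because a tight reduction is in particular a PLS-reduction, and properties~2 and~3 because tight reductions preserve both the existence of initial solutions at exponential distance from every local optimum and the PSPACE-completeness of computing the output of the standard hill-climbing algorithm.

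I would anchor the reduction chain at \maxcut{} and its degree-bounded variant \maxcutd[5], both of which are PLS-complete under tight reductions~\cite{schafferSimpleLocalSearch1991,elsasserSettlingComplexityLocal2011} and both of which are already known to admit instances witnessing properties~2 and~3. Recall that a PLS-reduction $(f,g)$ from $(P_1,\mathcal{N}_1)$ to $(P_2,\mathcal{N}_2)$ is tight if, for every instance $I$ of $P_1$, one can designate a set $R$ of solutions of $f(I)$ that contains all local optima such that (i) $g(R)$ covers the solutions of $I$, (ii) initial solutions can be chosen inside $R$, and (iii) every directed improving path through $R$ in $f(I)$ projects via $g$ to an improving path in $I$ of no greater length. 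My goal is to exhibit such a set $R$ for each of the reductions in the paper.

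The main obstacle is verifying condition~(iii) for the reductions whose target lives in $\mathds{R}^d$ and whose local move is geometric --- a single-point reassignment for \khartigan{} and a single-vertex flip for \emaxcut{} and \emaxcutnosq{}. For \khartigan{} I would fix $R$ to be the set of clusterings in which the auxiliary gadget points sit in their intended clusters; tuning the gadget weights so that any improving move from such a clustering either directly corresponds to a variable-vertex flip in the source \maxcut{} instance or is an internal gadget stabilization staying inside $R$ that can be charged to the source sequence. For \emaxcut{} an analogous accounting goes through because squared Euclidean distances behave linearly with respect to cluster centroids. The non-squared variant \emaxcutnosq{} is the delicate case, since the usual centroid identity is unavailable; I would likely place the gadgets on a coarse integer grid with a dominant scale separation between ``structural'' and ``decorative'' coordinates, so that the sign of the cut change under any candidate flip is determined by the combinatorial structure of the underlying \maxcut{} flip and not by geometric perturbation terms.

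Given tightness, property~1 is immediate, and for property~2 I would compose the tight reductions with a \maxcut{}/\flip{} instance together with an initial solution that is known to be exponentially far from every local optimum; condition (iii) of tightness then inflates this distance to the image. Property~3 transfers along tight reductions from the PSPACE-completeness of the standard-algorithm problem for \maxcut{}/\flip{}~\cite{schafferSimpleLocalSearch1991}, with no additional work beyond what is needed for tightness itself.
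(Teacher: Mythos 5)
Your transfer machinery is the right one and is indeed what the paper uses: all three properties follow once one has a chain of \emph{tight} reductions anchored at a problem that already possesses properties 2 and 3. The gap is in where you anchor the chain and in the assumption that ``every reduction used for \Cref{thm:hartigan,thm:maxcut} can be strengthened to a tight one.'' The paper's chain does \emph{not} start tightly from \maxcutd[5]: the very first step, which reduces \maxcutd[5] to \maxcutdistinctd[5] (neighboring solutions must have pairwise distinct costs, a condition the subsequent reduction to \halfposnaesat[3] genuinely needs), is obtained by perturbing the edge weights (scaling, adding dummy vertices of odd degree, adding $1$ to every weight), and this reduction is provably not tight in the relevant sense: the perturbation turns zero-gain flips of the source instance into strictly improving flips of the target, so the transition graph of the image acquires arcs that do not project back to arcs of the source, violating condition~(iii). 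Your plan of ``make every reduction tight'' offers no way around this, and nothing in your proposal addresses how properties 2 and 3 reach the distinct-cost problem from which the tight portion of the chain actually departs. The paper resolves this not by tightening that step but by re-examining the known hard instances: it proves (via the type I/II/III vertex analysis of Monien and Tscheuschner) that their \maxcutd[4] instances, which witness the exponential-distance and \PSPACE{} properties, already have distinct costs on neighboring solutions, hence are instances of \maxcutdistinctd[5]; only from there do the tight reductions of \Cref{fig:reduction_graph} carry the three properties to all remaining problems.

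A secondary issue is that your treatment of the geometric endpoints is a sketch whose constructions do not match workable reductions and whose tightness is exactly the part left unargued. There are no ``auxiliary gadget points'' to keep in intended clusters for \khartigan{}: the paper reduces \densestcut{} (without isolated vertices) to \hartigan{} via an incidence-type matrix and an exact cost identity $w(E)-n\cdot w(Q,R)/(|Q|\cdot|R|)$, with tightness obtained by taking $\R$ to be the clusterings with two non-empty clusters; similarly \emaxcut{} and \emaxcutnosq{} are reached from \minbisection{} via an explicit embedding in which the cut value becomes $|X|\cdot|Y|\cdot w(E)-w(X,Y)$ (resp.\ its unsquared analogue), with $\R$ the balanced cuts. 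Your ``coarse integer grid with scale separation'' for the non-squared case and the appeal to centroid linearity for \emaxcut{} (a max-cut problem with no centroids) would need to be replaced by such explicit constructions and an explicit choice of $\R$ before condition~(iii) can be checked; as written, the proposal asserts rather than establishes tightness at every step.
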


A formal definition of Property~\ref{PSCPACE prop} is given in~\Cref{section PLS}. 
In particular, this result shows that there exists an instance of \textsc{$k$-Means}
such that there exists an initial solution to this instance that is exponentially
many iterations away from \emph{all} local optima~\cite{schafferSimpleLocalSearch1991}.
By contrast, the earlier result \cite{mantheyWorstCaseSmoothedAnalysis2024b}
only exhibits an instance with a starting solution that is exponentially
many iterations away from \emph{some} local optimum. Moreover, \Cref{thm:hardinstances}
yields instances where \textsc{Flip} has exponential running time in \textsc{Squared Euclidean Max Cut}
on \emph{complete} geometric graphs, unlike the older construction which omitted some edges
\cite{etscheidWorstCaseAnalysisMaxCut2018}.

\section{Preliminaries and Notation}

Throughout this paper, we will consider all graphs to be undirected unless
explicitly stated otherwise.
Let $G = (V, E)$ be a graph. For $v \in V$, we denote
by $d(v)$ the \emph{degree} of $v$ in $G$, and by $N(v)$ the set of \emph{neighbors} of $v$. 

Let $S, T \subseteq V$. We write $E(S, T)$ for the set of edges
with one endpoint in $S$ and one endpoint in $T$. 
For $S \subseteq V$, we write $\delta(S) = E(S, V \setminus S)$ for the \emph{cut induced by
$S$}. We will also refer to the partition $(S, V \setminus S)$ as a \emph{cut}; which
meaning we intend will be clear from the context.
If $| \, |S| - |V \setminus S| \, | \leq 1$, we will call
the cut $(S, V \setminus S)$ a \emph{bisection}.
Given $v \in V$ we will also write $\delta(v) = \delta(\{v\})$,
which is the set of \emph{edges incident to $v$}.

Let $F \subseteq E$. Given a
function $f : E \to \mathds{R}$,
we denote by $f(F) = \sum_{e \in F} f(e)$ the \emph{total value} of
$f$ on the set of edges $F$.
If $F = E(X, Y)$ for some sets $X, Y \subseteq V$, we will abuse notation
to write $f(X, Y) = f(F)$.

\subsection{The Class PLS}\label{section PLS}

For convenience, we summarize the formal definitions of
local search problems and the associated complexity class \PLS{}, as
devised by Johnson et al.\ \cite{johnsonHowEasyLocal1988}.

A \emph{local search problem} $P$ is defined by a set of instances $I$,
a set of feasible solutions~$F_I(x)$ for each instance $x \in I$,
a \emph{cost function} $c$ that maps pairs of a solution of $F_I(x)$ and an instance $x$ to $\mathds{Z}$,
and a \emph{neighborhood function}
$\mathcal{N}$ that maps a solution of $F_I(x)$ and an instance $x$ to a subset of $F_I(x)$.
Typically, the neighborhood function 
is constructed so that it is easy to compute some $s' \in \mathcal{N}(s, x)$ for
any given $s \in F_I(x)$.

This characterization of local search problems gives rise to the
\emph{transition graph} defined by an instance of such a problem.

\begin{definition}
    Given an instance $x \in I$ of a local search problem $P$, we define the
    \emph{transition graph} $T(x)$ as the directed graph with vertex set $F_I(x)$,
    with an edge from $s$ to~$s'$ if and only if $s' \in \mathcal{N}(s, x)$ and $c(s, x) < c(s', x)$ (assuming $P$ is
    a maximization problem; otherwise, we reverse the inequality).
    The height of a vertex $s$ in $T(x)$ is the length  of the shortest
    path from $s$ to a sink of $T(x)$.
\end{definition}

The class \PLS{} is defined to capture the properties of local search
problems that typically arise in practical applications. Formally,
$P$ is contained in \PLS{} if the following are all true:
\begin{enumerate}
    \item There exists a deterministic polynomial-time algorithm $A$ that, given an instance $x \in I$, computes
        some solution $s \in F_I(x)$.
        
    \item There exists a deterministic polynomial-time algorithm $B$ that, given $x \in I$
        and $s \in F_I(x)$, computes the value of $c(s, x)$.

    \item There exists a deterministic polynomial-time algorithm $C$ that, given $x \in I$
        and $s \in F_I(x)$, either computes a solution $s' \in \mathcal{N}(s, x)$ with
        $c(s', x) > c(s, x)$ (in the case of a maximization problem),
        or outputs that such a solution does not exist.
\end{enumerate}

Intuitively, algorithm~$A$ gives us some initial solution from which to start an optimization
process, algorithm~$B$ ensures that we can evaluate the quality of solutions efficiently, and
algorithm~$C$ drives the local optimization process by either determining that a solution
is locally optimal or otherwise giving us an improving neighbor. 
Based on the algorithms~$A$ and~$C$, one can define the ``standard algorithm problem'' for~$P$ as follows.

\begin{definition}
Let~$P$ be a local search problem and let~$I$ be an instance of~$\textsc{P}$.
Moreover, let~$s^*(I)$ be the unique local optimum obtained by starting with the solution outputted by algorithm~$A$ and replacing the current solution by the better solution outputted by~$C$, until reaching a local optimum.
The~\emph{standard algorithm problem} for~$P$ asks for a given instance~$I$ of~$P$ and a locally optimal solution~$s'$ for~$I$ with respect to~$\mathcal{N}$, whether~$s'$ is exactly the solution~$s^*(I)$.
\end{definition}

It was shown that for many local search problems the standard algorithm problem is~\PSPACE-complete~\cite{brauerComplexitySingleSwapHeuristics2017,michielsTheoreticalAspectsLocal2007,monienPowerNodesDegree2010,schafferSimpleLocalSearch1991}.


Given problems $P, Q \in \text{\PLS{}}$, we say that $P$ is \PLS{}-\emph{reducible}
to $Q$ (written $P \leq_\text{\PLS} Q$) if the following is true. 
\begin{enumerate}
    \item There exist polynomial-time computable functions $f,g$,
        such that $f$ maps instances $x$~of~$P$ to instances $f(x)$ of $Q$, and
        $g$ maps pairs $(\text{solution $s$ of $f(x)$}, x)$ to feasible solutions of $P$.
    \item If a solution $s$ of $f(x)$ is locally optimal for $f(x)$, then
        $g(s, x)$ is locally optimal for $x$.
\end{enumerate}

The idea is that, if $Q \in \text{\PLS{}}$ is efficiently solvable, then $P$ is also efficiently solvable:
simply convert an instance of $P$ to $Q$ using $f$, solve $Q$, and convert the resulting
solution back to a solution of $P$ using $g$. As usual in complexity theory,
if $P$ is \emph{complete} for \PLS{} and~$P \leq_\text{\PLS} Q$, then $Q$ is also
complete for \PLS{}.

In addition to this standard notion of a \PLS{}-reduction, Sch\"affer and Yannakakis
\cite{schafferSimpleLocalSearch1991}
defined so-called \emph{tight} reductions. Given \PLS{} problems $P$ and $Q$ and
a \PLS-reduction $(f, g)$ from $P$ to $Q$, the reduction is called \emph{tight} if for any
instance $x$ of $P$ we can choose a subset~$\R$ of the feasible solutions of
$f(x)$ of $Q$ such that:
\begin{enumerate}
    \item $\R$ contains all local optima of $f(x)$.
    \item For every feasible solution $s$ of $x$, we can construct a feasible
        solution $q \in \R$ of $f(x)$ such that $g(q, x) = s$.\
    \item Suppose the transition graph $T(f(x))$ of $f(x)$ contains
        a directed path from $s$ to~$s'$ such that $s, s' \in \R$, but all internal
        vertices lie outside of $\R$, and let $q = g(s, x)$ and~$q' = g(s', x)$.
        Then either $q = q'$, or the transition graph $T(x)$ of $x$ contains an
        edge from $q$ to $q'$.
\end{enumerate}
The set $\R$ is typically called the set of \emph{reasonable solutions to $f(x)$}. Here,
the intuition is that tight reductions make sure that the height of a vertex
$s$ of $T(f(x))$ is not
smaller than that of $g(s, x)$ in $T(x)$. Note that a reduction $(f, g)$ is trivially
tight if $T(f(x))$ is isomorphic to $T(x)$.

Tight \PLS{}-reductions have two desired properties \cite[Chapter 2]{aartsLocalSearchCombinatorial2003}.
Suppose $P$ reduces to~$Q$ via a tight reduction. First, if the standard algorithm problem
for $P$ is \PSPACE{}-complete, then the standard algorithm problem for $Q$ is \PSPACE{}-complete
as well. Second, if there exists an instance $x$ of $P$ such that there exists a solution
of $x$ that is exponentially far away from any local optimum, then such an instance exists 
for $Q$ as well. Note that this first property holds irrespective of the choices made by
the algorithm $C$ for $Q$ \cite{schafferSimpleLocalSearch1991}.

\subsection{Definitions of Local Search Problems}\label{sec:definitions}

We will be concerned with various local search problems. 
In the following we provide
a summary of the problems that appear in this paper, and
provide definitions for each. 
Some of the problems considered in this paper are not the most natural ones, but we need them as intermediate problems for our reductions.
Moreover, these problems might be useful to show \PLS{}-hardness of other problems having cardinality constraints.

Before introducing the problems themselves, we first provide a more abstract
view of the problems, since they have many important aspects in common. Each problem in the list
below is a type of partitioning problem, where we are given a finite set $S$ and are asked
to find the ``best'' partition of $S$ into $k$ sets (indeed, for all but one problem, we have
$k = 2$). What determines whether some partition is better than another varies;
this is determined by the cost function of the problem in question.

\begin{definition}
    Given a partition $\mathcal{P} = \{S_1, \ldots, S_k\}$ of $S$,
    a partition $\mathcal{P}'$ is a neighbor of~$\mathcal{P}$ in the \emph{\textsc{Flip} neighborhood}
    if $\mathcal{P}'$ can be obtained by
    moving exactly one element from some~$S_i \in \mathcal{P}$ to some other~$S_j \in \mathcal{P}$.
    In other words, if $\mathcal{P}' = \{S_1, \ldots, S_i', \ldots, S_j', \ldots, S_k\}$
    where for some $v \in S_i$ we have~$S_i' = S_i \setminus \{v\}$ and $S_j' = S_j \cup \{v\}$.
\end{definition}

The \textsc{Flip} neighborhood as defined above is perhaps the simplest neighborhood
structure for a partitioning problem.
For each problem in the list below, we consider only
the \textsc{Flip} neighborhood in this paper. 
Recall that the \textsc{Flip} heuristic for \textsc{$k$-Means} is also referred to as the \textsc{Hartigan--Wong} method~\cite{hartiganAlgorithm136KMeans1979}.

\problemdef{Max Cut}
{A graph~$G = (V, E)$ with non-negative edge weights~$w: E \to \mathds{Z}_{\geq 0}$.}
{A partition~$(X,Y)$ of~$V$ such that~$w(X, Y)$ is maximal.}

We will mainly be concerned with several variants of \textsc{Max Cut}.
For some fixed integer~$d$, by \textsc{Max Cut-$d$} we denote the restriction of the problem to graphs with maximum degree~$d$.
In \textsc{Densest Cut}, one aims to maximize $\frac{w(X, Y)}{|X|\cdot|Y|}$ rather than just $w(X, Y)$. The minimization
version of this problem is called \textsc{Sparsest Cut}.
The problem \textsc{Odd Max Bisection} is identical to \textsc{Max Cut}, with the added restrictions that the number of vertices must
be odd and that the two halves of the cut differ in size by exactly one. The minimization version
of the problem is called \textsc{Odd Min Bisection}.

The definitions of \textsc{Odd Max/Min Bisection} are somewhat unconventional, as one usually
considers these problem with an even number of vertices and
with the \textsc{Swap} neighborhood, where
two solutions are neighbors if one can be obtained from the other by
swapping a pair of vertices between parts of the partition. 
Hardness of \textsc{Max Bisection/Swap} was shown by Sch\"affer and Yannakakis \cite{schafferSimpleLocalSearch1991}
in a short reduction from \maxcut{}, providing as a corollary also a simple hardness proof
for the Kernighan-Lin heuristic \cite{kernighanEfficientHeuristicProcedure1970}.
The reason we require the \textsc{Flip} neighborhood is that we aim to reduce
this problem to \emaxcut{}, where we run
into trouble when we use the \textsc{Swap} neighborhood (see \Cref{sec:reduce to bisection} for details).

\iflong
\problemdef{Squared Euclidean Max Cut}
{A complete graph~$G=(V,E)$ with non-negative edge weights~$w: E \to \mathds{Z}_{\geq 0}$, such that there exists an
injection~$f: V \to \mathds{R}^d$ such that $w(uv) = \|f(u) - f(v)\|^2$ for all edges $uv \in E$.}
    {A partition~$(X,Y)$ of~$V$ such that $w(X, Y)$ is maximal.}
\else
\problemdef{Squared Euclidean Max Cut}
{A set of $n$ points $\X \subseteq \mathbb{R}^d$.}
{A partition~$(X,Y)$ of~$\X$ such that $\sum_{x \in X}\sum_{y \in Y} \|x - y\|^2$ is maximal.}
\fi
    
\textsc{Euclidean Max Cut} is defined similarly; the only difference is that
\iflong
we have~$w(uv) = \|f(u) - f(v)\|$.
\textsc{Squared Euclidean Max Cut} is more commonly defined in terms of an input point set~$\X \subseteq \mathds{R}^d$, with
solutions being partitions~$(X, Y)$ of~$\X$ with cost~$\sum_{x \in X}\sum_{y \in Y} \|x - y\|^2$.
However, in this formulation, membership of \emaxcut{} in \PLS{} is no longer necessarily
true\footnote{
    The problem \textsc{Sum of Square Roots} asks whether for given numbers $a_i$ and $k_i$ the
    linear combination~$\sum_i k_i \sqrt{a_i}$ evaluates to zero (among other variants).
    It is not currently known whether this problem can be solved in time polynomial in the number of
    bits needed to represent the given linear combination, although a randomized algorithm can solve
    the problem in polynomial time \cite{blomerComputingSumsRadicals1991}. Since computing the cost of a solution
    and deciding local optimality for \textsc{(Squared) Euclidean Max Cut/Flip} and \khartigan{} requires
    the computation of such linear combinations when formulated in the traditional way, it is not clear
    whether these formulations are in \PLS{}. Nevertheless, our reductions still imply \PLS{}-hardness
    for these traditional formulations, as well as the existence of certain hard instances.
}. 
Hence, we opt for the above graph-theoretical formulation.
\else
the actual distances between points enter the objective function, rather than the squared distances.
\fi

\iflong
\problemdef{$k$-Means}
{A complete graph~$G=(V,E)$ with non-negative edge weights~$w: E \to \mathds{Z}_{\geq 0}$, such that there exists a 
injection~$f: V \to \mathds{R}^d$ such that $w(uv) = \|f(u) - f(v)\|^2$ for all edges $uv \in E$.}
{A partition~$(C_1,\ldots, C_k)$ of~$V$ such that~$\sum_{i=1}^k \frac{1}{|C_i|}\sum_{u, v \in C_i} w(uv)$ is minimal.}
\else
\problemdef{$k$-Means}
{A set of $n$ points $\X \subseteq \mathbb{R}^d$ and an integer $k \geq 2$.}
{A partition~$(C_1,\ldots, C_k)$ of~$\X$ such that~$\sum_{i=1}^k \sum_{x \in C_i} \|x - \cm(C_i)\|^2$ is minimal.}
\fi

The sets~$\{C_1, \ldots, C_k\}$ are called \emph{clusters}.
\iflong
As with \textsc{Squared Euclidean Max Cut}, 
in the literature \textsc{$k$-Means} is usually specified with an input point
set $\X \subseteq \mathds{R}^d$, and the cost of a partition $\{C_1,\ldots, C_k\}$ of $\X$ is given as
$\sum_{i=1}^k \sum_{x \in C_i} \|x - \cm(C_i)\|^2$, where~$\cm(C_i)$ denotes to \emph{centroid} of~$C_i$. 
This is equivalent to our formulation when
the coordinates of the points are algebraic numbers
\cite[Proposition 1]{pyatkinNPHardnessBalancedMinimum2017}.
In \Cref{lemma:densestcut_to_hartigan,lemma:k-means} we will also use this notation.
\else 
Note that in this formulation, both \khartigan{} and \emaxcut{} are not contained in \PLS{}, as their
cost functions can take non-integer values.
However, we still obtain \PLS{}-hardness for each of these problems,
and the existence of specific hard instances (cf.\ \Cref{thm:hardinstances}). Moreover, this hardness still
holds for restricted versions of the problems which do belong to \PLS{}. More technical details are given
in the full version.
\fi

\problemdef{Positive Not-All-Equal $k$-Satisfiability (Pos NAE $k$-SAT)}
{A boolean formula with clauses of the form $\nae(x_1, \ldots, x_\ell)$ with $\ell \leq k$, where each clause is satisfied
if its constituents, all of which are positive,
are neither all true nor all false. Each clause $C$ has a weight~$w(C)\in\mathds{Z}$.}
{A truth assignment of the variables such that the sum of the weights of the satisfied clauses is maximized.}

In \textsc{Odd Half Pos NAE $k$-SAT}, additionally the number of variables is odd and it is required that the number of \true{} variables and the number of \false{} variables in any solution differ by exactly one. This is analogous to the relationship
between \textsc{Max Cut} and \textsc{Odd Max Bisection}.

\subsection{Strategy}\label{sec:strategy}

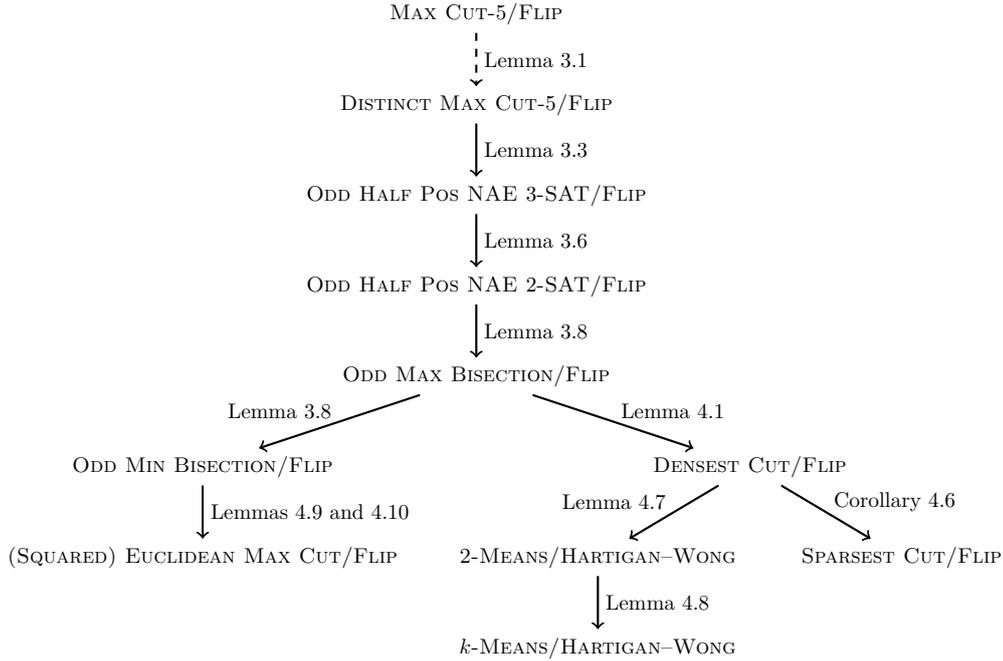
\begin{figure}[t]
\begin{center}
\scalebox{0.8}{
\begin{tikzpicture}[thick]
    \node (maxcut5) at (0, 0) {\maxcutd[5]};
    \node (maxcut5_distinct) at (0, -1.5) {\maxcutdistinctd[5]};
    \node (halfposnae3sat) at (0, -3) {\halfposnaesat[3]};
    \node (halfposnae2sat) at (0, -4.5) {\halfposnaesat[2]};
    \node (maxbisection) at (0, -6) {\maxbisection};
    \node (minbisection) at (-4.5, -7.5) {\minbisection};
    \node (squaredeucmaxcut) at (-4.5, -9) {\textsc{(Squared) Euclidean Max Cut/Flip}};
    
    \node (densestcut) at (4.5, -7.5) {\densestcut};
    \node (hartigan) at (2, -9) {\hartigan};
    \node (khartigan) at (2, -10.5) {\khartigan};
    
    \node (sparsestcut) at (7, -9) {\sparsestcut};
    
    \draw[dashed, ->] (maxcut5) -- (maxcut5_distinct) node[midway, right] {\Cref{lemma:maxcut_distinct_costs}};
    \draw[->] (maxcut5_distinct) -- (halfposnae3sat) node[midway, right] {\Cref{lemma:halfposnaesat}};
    \draw[->] (halfposnae3sat) -- (halfposnae2sat) node[midway, right] {\Cref{lemma:reduce3to2}};
    \draw[->] (halfposnae2sat) -- (maxbisection) node[midway, right] {\Cref{lemma:naetobisection}};
    \draw[->] (maxbisection) -- (minbisection) node[midway, left, yshift=5] {\Cref{lemma:naetobisection}};
    \draw[->] (minbisection) -- (squaredeucmaxcut) node[midway, right] {\Cref{lemma:bisection_to_emaxcut,lemma:bisection_to_emaxcut_nosquare}};
    \draw[->] (maxbisection) -- (densestcut) node[midway, right, yshift=5] {\Cref{lemma:bisection_to_densestcut}};
    \draw[->] (hartigan) -- (khartigan) node[midway, right] {\Cref{lemma:k-means}};
    \draw[->] (densestcut) -- (hartigan) node[midway, left, yshift=5] {\Cref{lemma:densestcut_to_hartigan}};
    
    \draw[->] (densestcut) -- (sparsestcut) node[midway, right, yshift=5] {\Cref{cor:sparsest-cut}};

\end{tikzpicture}
}
\end{center}
\caption{Graph of the \PLS-reductions used in this paper. Reductions represented by solid lines are
tight, reductions represented by dashed lines are not.}
\label{fig:reduction_graph}
\end{figure}

Both \textsc{Squared Euclidean Max Cut} and \textsc{$k$-Means} are \NP-hard~\cite{ageevComplexityWeightedMaxcut2014,ADHP09}.
The reductions used to prove this are quite similar, and can be straightforwardly adapted into
\PLS{}-reductions: In the case of \emaxcut{}, we obtain a reduction
from \minbisection{}, while for \khartigan{} we obtain a reduction from
\densestcut{}. The latter reduction even works for $k = 2$. These results
are given in \Cref{lemma:densestcut_to_hartigan} (\textsc{$k$-Means}), and \Cref{lemma:bisection_to_emaxcut_nosquare,lemma:bisection_to_emaxcut} (\textsc{(Squared) Euclidean Max Cut}).

What remains then is to show that the problems we reduce from are also
\PLS{}-complete, which takes up the bulk of the work. \Cref{fig:reduction_graph}
shows the reduction paths we use. 

The starting point will be the \PLS{}-completeness of
\maxcutd[5], which was shown by Els\"asser and Tscheuschner \cite{elsasserSettlingComplexityLocal2011}.
An obvious next step might then be to reduce from this problem to
\textsc{Max Bisection/Swap}, and then further reduce to \maxbisection{}.
Unfortunately, this turns out to be rather difficult, as the extra power afforded
by the \textsc{Swap} neighborhood is not so easily reduced back to the \textsc{Flip}
neighborhood. Using this strategy, we can only obtain \PLS{}-hardness of the
\textsc{2-Flip} neighborhood for \textsc{Squared Euclidean Max Cut},
where two points may flip in a single iteration.

We thus take a detour through \halfposnaesat[3] in \Cref{lemma:halfposnaesat}, which then reduces down to
\halfposnaesat[2] in \Cref{lemma:reduce3to2} and finally to \maxbisection{}
in \Cref{lemma:naetobisection},
using a reduction by Sch\"affer and Yannakakis \cite{schafferSimpleLocalSearch1991}.

From this point, hardness of \emaxcut{} (and with a little extra work,
\emaxcutnosq{}) is easily obtained. For \khartigan, we need some more effort,
as we still need to show hardness of \densestcut{}. Luckily, this can be done by reducing from \maxbisection{} as well,
as proved in \Cref{lemma:bisection_to_densestcut}.

\iflong
\else
Due to space constraints, our proofs are deferred to the full version.
\fi

\section{Reduction to Odd Min/Max Bisection}
\label{sec:reduce to bisection}

The goal of this section is to obtain \PLS{}-completeness of \textsc{Odd Min/Max Bisection/Flip},
from which we can reduce further to our target problems; see \Cref{fig:reduction_graph}.
We will first construct a \PLS{}-reduction from \maxcutd[5] to \halfposnaesat[3]
in \Cref{lemma:halfposnaesat}. 

A subtlety is that the reduction
only works when we assume that the \textsc{Max Cut-5} instance we reduce from
has distinct costs for any two neighboring solutions. The following lemma ensures that we can make
this assumption.

\begin{lemma}\label{lemma:maxcut_distinct_costs}
    \maxcutdistinctd[5] is \PLS{}-complete. 
    More precisely, there exists a \PLS{}-reduction from \maxcutd[5] to \maxcutdistinctd[5].
\end{lemma}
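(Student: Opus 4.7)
The plan is to perturb the edge weights of the given \maxcutd[5] instance so that no two cuts with distinct edge sets receive the same weight, while preserving the set of \flip-local optima.

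First, assume without loss of generality that $G$ has no isolated vertices: isolated vertices contribute nothing to any cut and can simply be removed from $G$ before the reduction and placed on either side of the partition by the solution map $g$ without affecting local optimality.

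Given an instance $(G,w)$ of \maxcutd[5] with $G=(V,E)$ and $m=|E|$, I would fix an arbitrary ordering $e_1,\dots,e_m$ of the edges and define new integer weights
\[
    w'(e_i) \;=\; w(e_i)\cdot 2^{m+1} \;+\; 2^{i}.
\]
The reduction $f$ outputs $(G,w')$, and $g$ is the identity on partitions of $V$ (re-adding any removed isolated vertices arbitrarily). Clearly $(G,w')$ is a valid instance of \maxcutd[5]: the graph and its maximum degree are unchanged, and the new weights are positive integers with bit-length polynomial in the input.

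Next I would verify the distinct-cost property. For any edge subset $F\subseteq E$,
\[
    w'(F) \;=\; 2^{m+1}\,w(F) \;+\; \sum_{e_i\in F} 2^{i},
\]
where the perturbation term is bounded above by $\sum_{i=1}^m 2^i = 2^{m+1}-2<2^{m+1}$. For any two distinct edge subsets $F_1\neq F_2$: if $w(F_1)=w(F_2)$, the perturbation terms differ by uniqueness of binary expansions, so $w'(F_1)\neq w'(F_2)$; if $w(F_1)\neq w(F_2)$, the integer gap $|w(F_1)-w(F_2)|\geq 1$ is amplified by the factor $2^{m+1}$ in $w'$, which strictly dominates any perturbation difference. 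Since $G$ has no isolated vertices, every \flip-move $S \mapsto S\triangle\{v\}$ strictly alters the cut-edge set, so neighboring solutions necessarily receive distinct $w'$-values, as required by \maxcutdistinctd[5].

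For preservation of local optima, let $S$ be a \flip-local optimum of $(G,w')$ and suppose for contradiction that some \flip-neighbor $S'$ satisfies $w(\delta(S'))>w(\delta(S))$. Integrality gives $w(\delta(S'))-w(\delta(S))\geq 1$, so
\[
    w'(\delta(S')) - w'(\delta(S)) \;\geq\; 2^{m+1} - (2^{m+1}-2) \;=\; 2 \;>\; 0,
\]
contradicting local optimality of $S$ under $w'$. Hence $S$ is a \flip-local optimum of $(G,w)$ as well, giving a valid \PLS{}-reduction.

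The main subtle point is that this reduction is \emph{not} tight, which is why the arrow in \Cref{fig:reduction_graph} is dashed: the perturbation may break ties in $T(x)$, so that a former sink of $T(x)$ now has an improving neighbor in $T(f(x))$ with no counterpart in $T(x)$. Consequently, the height-of-vertex property used to transfer running-time and \PSPACE{}-completeness results in a tight reduction cannot be preserved here. The remaining details—bit-length bounds and isolated-vertex removal—are routine.
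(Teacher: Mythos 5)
Your proof is correct, but it takes a different route from the paper. The paper does not use a binary perturbation: it first multiplies all weights by $10$, then attaches a zero-weight pendant (dummy) vertex to every vertex of even degree so that every vertex has odd degree while the maximum degree stays at five, and finally adds $1$ to every edge weight. A flip then changes an odd number of edge weights by $+1$ relative to the original instance, so a zero cost difference becomes nonzero, while an originally nonzero difference (at least $10$ after scaling) is perturbed by at most $5$ and survives; the same bound shows local optima are preserved. Your construction $w'(e_i)=2^{m+1}w(e_i)+2^i$ achieves the same end by making the cost function injective on edge subsets, and your observations that isolated vertices can be stripped off and that every flip of a non-isolated vertex changes the cut-edge set are exactly what is needed; the local-optimality argument via $2^{m+1}\cdot 1$ dominating the perturbation range $2^{m+1}-2$ is sound. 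The trade-off: your weights grow exponentially in value (still polynomial in bit-length, so the reduction remains polynomial-time and valid for \PLS{}), and you get the stronger property that \emph{all} cuts with distinct edge sets have distinct costs, without touching the graph; the paper's parity trick keeps the weight magnitudes polynomially bounded (indeed within a constant factor plus one of the originals) at the price of adding dummy degree-one vertices and only guaranteeing distinctness for neighboring solutions, which is all the downstream reduction needs. Both reductions are non-tight for the same reason you identify, and both suffice for the lemma as stated.
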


\iflong

\begin{proof}
    Let $I$ be an instance of \maxcutd[5] on a graph $G = (V(G), E(G))$. Let $S$ be the set of feasible solutions
    to $I$, and write $c(s, I)$ for the cost of a solution $s$ to $I$.
    By definition of the local search problem, the edge weights are integral, and so the cost takes only
    integer values. Thus, the smallest nonzero difference $\Delta_{\min{}}$ between any two neighboring solutions is at least
    $1$. 
    
    First, we multiply the value of all edge weights by $10$; since the cost function
    is linear in the edge weights
    this yields an equivalent instance, and $\Delta_{\min{}}$ becomes at least $10$. Next, for every
    vertex $v \in V$ with even degree we add a dummy vertex to $G$ and connect it to~$v$ with
    an edge of zero weight to create a graph $H = (V(H), E(H))$.
    Every vertex of $H$ has odd degree, and~$H$ has at most
    $2\cdot |V|$ vertices.
    Since the dummy vertices only have zero-weight incident edges, this instance is still
    equivalent to the original instance; for any cut $(A, B)$ in~$H$, just take $\left(A \cap V(G), B \cap V(G)\right)$
    as a cut for $G$. This cut then has the same cost as~$(A, B)$.
    Observe also that $H$ still has maximum degree five, since only vertices with degree at most
    four get an extra incident edge.
    
    We modify $I$ into an instance $I'$ by adding $1$ to all edge weights.
    
    \proofsubparagraph{Correctness.}
    Let $s$ and $s'$ be two neighboring solutions to $I'$.
    The neighbor $s'$ is obtained by flipping some vertex.
    Observe that the only edge weights that enter $c(s', I) - c(s, I)$ are those incident
    to this vertex. Thus, since every vertex of $H$ has odd degree, we have an odd
    number of terms. Since every term is changed by exactly $1$ in $I'$, if
    $c(s', I) = c(s, I)$, then~$c(s', I') \neq c(s, I')$. Moreover, if $c(s', I) \neq c(s, I)$,
    then the two differ by at least 10, while the value of $c(s', I') - c(s, I')$ differs
    from $c(s', I) - c(s, I)$ by at most 5. Hence, $c(s, I) \neq c(s', I)$ implies
    $c(s, I') \neq c(s', I')$, and so
    any two neighboring solutions to $I'$ have distinct cost.
    
    We claim that a locally optimal solution to $I'$ is also locally
    optimal for $I$. We prove the contrapositive.
    Suppose a solution $s$ to $I$ is not locally optimal, so there exists a neighbor~$s'$ of~$s$
    such that $c(s', I) - c(s, I) \geq 10$.
    Following the same argument as in the previous paragraph, $c(s', I') - c(s, I')$~differs
    from~$c(s', I) - c(s, I)$ by at most $5$. Hence,~$c(s', I') - c(s, I') \geq 5$.
    Therefore, $s$ is also not locally optimal for $I'$. 
\end{proof}

\fi

\iflong
Note that the reduction used in \Cref{lemma:maxcut_distinct_costs} is not tight.
\else
Unfortunately, this reduction is not tight. 
\fi
Hence, to prove the last two items of \Cref{thm:hardinstances}, simply applying
the reductions from \Cref{fig:reduction_graph} is not sufficient, as these
properties (viz.\ \PSPACE{}-completeness of the standard algorithm problem and the existence
of certain hard instances)
do not necessarily hold for \maxcutdistinctd[5]. We must instead
separately prove that they hold for this problem.
To accomplish this, we recall a construction
by Monien and Tscheuschner \cite{monienPowerNodesDegree2010}
that shows these properties for \maxcutd[4]{}. It can be verified
straightforwardly that the construction they use is already an instance of
\maxcutdistinctd[4]\iflong{}, as we show\fi{}. 
\iflong
\else
\fi

\iflong

To verify this claim, we first state some terminology used by Monien and Tscheuschner.
Let~$G = (V, E)$ be a graph of maximum degree four. Let~$u \in V$,
and let~$\{a_u, b_u, c_u, d_u\}$ be the edges incident to~$u$; if~$u$ has degree
less than four, we consider the weights of the non-existent edges to be zero.
Assume in the following
that~$w(a_u) \geq w(b_u) \geq w(c_u) \geq w(c_u)$.
We distinguish three types of vertices. A vertex~$u$ is of:

\begin{enumerate}[Type I,]
 \item if~$w(a_u) > w(b_u) + w(c_u) + w(d_u)$;
 \item if~$w(a_u) + w(d_u) > w(b_u) + w(c_u)$ and~$w(a_u) < w(b_u) + w(c_u) + w(d_u)$;
 \item if~$w(a_u) + w(d_u) < w(b_u) + w(c_u)$.
\end{enumerate}

While general weighted graphs may contain vertices not of these three types,
the graphs constructed by Monien and Tscheuschner only contain Type I, II and
III vertices.

Given a cut of~$V$, a vertex~$u$ is called \emph{happy} if flipping~$u$
cannot increase the weight of the cut. Monien and Tscheuschner note that
a vertex of type I is happy if and only if the edge~$a_u$ is in the cut, a vertex of type II
is happy if and only if~$a_u$ and at least one of~$\{b_u, c_u, d_u\}$ is in the cut,
and a vertex of type III is happy if and only if at least two of $\{a_u, b_u, c_u\}$ are in
the cut.

\begin{lemma}
    Let the graph above define an instance~$x$ of \maxcutd[4]{}, and let~$T(x)$ be its transition graph.
    Suppose~$s$ and~$s'$ are neighbors in~$T(x)$ which differ by the flip of
    a vertex of type I, type II, or type III. Then~$c(s, x) \neq c(s', x)$.
\end{lemma}

\begin{proof}
    Let~$u$ be the vertex in which~$s$ and~$s'$ differ, with incident edges
    $\{a_u, b_u, c_u, d_u\}$. Assume the weights of these edges are ordered as
    $w(a_u) \geq w(b_u) \geq w(c_u) \geq w(d_u)$.
    Furthermore, we assume w.l.o.g.\ that~$u$
    is happy in~$s$. We consider the three possibilities for the type of $u$.
    
    \proofsubparagraph{Type I.} If $u$ is happy in $s$, then
    $a_u$ is in the cut. Thus, we have~$c(s, x) - c(s', x) \geq w(a_u) - (w(b_u) + w(c_u) + w(d_u)) > 0$. The claim
    follows.
    
    \proofsubparagraph{Type II.} Since~$u$ is happy in~$s$, we know that~$a_u$ and
    at least one of~$\{b_u, c_u, d_u\}$ are in the cut in~$s$.
    Assume that~$d_u$ is in the cut with~$a_u$.
    Then~$c(s, x) - c(s', x) \geq w(a_u) + w(d_u) - (w(b_u) + w(c_u)) > 0$. Now observe that
    if~$d_u$ is swapped out for~$b_u$ or~$c_u$, then~$c(s, x) - c(s', x)$ cannot decrease,
    since~$w(d_u) \leq w(c_u) \leq w(b_u)$ by assumption. Thus, flipping
    $u$ changes the weight of the cut if $u$ is of type II.
    
    \proofsubparagraph{Type III.} Since~$u$ is happy in $s$, at least two of
    $\{a_u, b_u, c_u\}$ are in the cut. We distinguish two cases.
    \begin{description}
        \item[Case 1.] Assume first that~$a_u$ is not in the cut, so that
            $b_u$ and $c_u$ are both in the cut.
             Then~$c(s, x) - c(s', x) \geq w(b_u) + w(c_u) - (w(a_u) + w(d_u)) > 0$.
        \item[Case 2.] Now we assume that~$a_u$ is in the cut.
            Then at least one of $b_u$ and $c_u$ is in the cut with $a_u$.
            Thus, since $w(a_u) \geq w(b_u) \geq w(c_u)$, we have
            $c(s, x) - c(s', x) \geq w(a_u) + w(c_u) - (w(b_u) + w(d_u))
            \geq w(b_u) + w(c_u) - (w(a_u) + w(d_u)) > 0$. Again, flipping
            $u$ strictly changes the weight of the cut. 
    \end{description}

    This covers all possibilities, and so we conclude that flipping a vertex
    of any of the three types defined above strictly changes the weight of the cut.
\end{proof}

Since the graphs constructed by Monien and Tscheuschner only contain vertices of these three
types, we can conclude that these graphs are instances of \textsc{Distinct Max Cut-4} and
thus also of \textsc{Distinct Max Cut-5}.

In the remainder of \Cref{sec:reduce to bisection} and subsequently in \Cref{sec:reduce to clustering} we will obtain
the \PLS{}-reductions shown in \Cref{fig:reduction_graph}. Since the reductions from
\maxcutdistinctd[5] are tight, and by the properties of tight \PLS{}-reductions and the
results of Monien and Tscheuschner,
\Cref{thm:hardinstances} then follows.
\fi

In the remainder of this work, we present a sequence of tight reductions starting from \maxcutdistinctd[5]{} to all of our other
considered problems.
First, we reduce from \maxcutdistinctd[5] to \halfposnaesat[3].

\begin{lemma}\label{lemma:halfposnaesat}
    There exists a tight \PLS{}-reduction from \maxcutdistinctd[5]{} to \halfposnaesat[3].
\end{lemma}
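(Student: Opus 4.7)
The plan is to construct a tight \PLS-reduction $(f, g)$ from \maxcutdistinctd[5] to \halfposnaesat[3] via a vertex-doubling gadget with a parity variable. Given an instance $I=(G,w)$ of \maxcutdistinctd[5] with $n=|V(G)|$, I would first rescale all edge weights by an integer factor $K\ge 2$ (this preserves distinctness of neighbouring costs and ensures every Max~Cut improvement is at least $K$). Then the variables of $f(I)$ would be $\{x_v,y_v:v\in V\}\cup\{z\}$, giving $2n+1$ (odd) variables. The clauses would be: (i) a 2-literal edge clause $\text{NAE}(x_u,x_v)$ of weight $Kw_{uv}$ for every $uv\in E$, (ii) a symmetric mirror clause $\text{NAE}(y_u,y_v)$ of weight $Kw_{uv}$ for every $uv\in E$, and (iii) a pair clause $\text{NAE}(x_v,y_v)$ of weight $1$ for every $v\in V$. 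The variable $z$ appears in no clause; it only shifts parity so both balances $|T|=n$ and $|T|=n+1$ are feasible. The map $g$ sends any NAE-SAT assignment $s$ to the cut $(X,Y)$ with $X=\{v:x_v(s)=T\}$, and the reasonable set is $\R=\{s:x_v(s)\neq y_v(s)\text{ for all }v\}$.

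First I would dispose of the easy conditions. For condition~(2) of tightness, any cut $(X,Y)$ has the canonical preimage $x_v=\mathds{1}[v\in X]$, $y_v=\mathds{1}[v\in Y]$, $z=F$, which is in $\R$ and satisfies $|T|=n\in\{n,n+1\}$. For condition~(1), I would show no local optimum of $f(I)$ can have a broken pair: the balance constraint always permits a single valid flip of some $y_v$ from a broken state that gains the pair weight $+1$, and because $y_v$ interacts only with mirror edge clauses (bounded by the $K$-scaled degree-5 contribution), a careful choice of $K$ guarantees this flip is strictly improving using the distinct-cost property. The local-optimum correspondence then follows: at reasonable $s$ with $|T|=n$, the only improving F$\to$T flips are $x_v$ for $v\in Y$ or $y_v$ for $v\in X$, each with gain $K\Delta_v^{MC}-1$, which is positive iff $\Delta_v^{MC}\ge 1$, i.e.\ iff $v$ is Max~Cut-improvable. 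The $|T|=n+1$ case is symmetric, so reasonable NAE-SAT local optima are exactly Max~Cut local optima.

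The hard part will be condition~(3) of tightness: ruling out improving reasonable-to-reasonable paths whose internal vertices are all non-reasonable and which correspond to more than one Max~Cut flip. The natural ``good'' path has length $2$ (break a pair, then restore it), and by the cost computation above it contributes exactly $K\Delta_v^{MC}$ to the NAE-SAT cost, matching a single Max~Cut edge. The danger is a 4-step ``interleaved'' improving path that breaks a pair at $v$, then breaks another pair at some $w$, then restores $w$, then restores $v$, realising two Max~Cut flips without a reasonable intermediate. To preclude this, I would use the distinct-cost hypothesis together with max-degree $5$: after breaking the pair at $v$, any attempt to break another pair at $w$ via a single flip incurs a cost change of the form $K(\Delta_w^{MC}\pm 2w_{vw}\mathds{1}[v\sim w])-1$, and I would prove by case analysis on the adjacency of $v,w$ and the signs of the $\Delta$'s that the distinct-cost lower bound forces at least one of the four steps to be non-strictly-improving (i.e.\ not in the transition graph), so the interleaved path cannot exist. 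This reduces all non-reasonable-intermediate reasonable-to-reasonable improving paths to the clean length-$2$ case, each corresponding to one Max~Cut edge, which gives the tight reduction.
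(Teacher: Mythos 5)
The key claim in your second paragraph --- that no local optimum of $f(I)$ can contain a broken pair because restoring the pair gains the weight $+1$ and ``a careful choice of $K$'' plus distinctness makes that flip improving --- is where the argument breaks, and the flaw is fatal to the construction itself, not only to its tightness. Restoring a pair at $v$ by flipping $x_v$ or $y_v$ also changes the edge clauses of that copy by $K$ times the corresponding Max Cut difference, which may be negative; the $+1$ from the pair clause cannot compensate, taking $K$ larger only makes things worse, and the distinct-cost property merely says the difference is nonzero, not that it is favorable. Moreover, the balance constraint only permits flips from the larger side, so a pair broken with both variables \false{} may admit no feasible repairing flip at all. Consequently the two copies can desynchronize into stable states, and such states can map to cuts that are not locally optimal.

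Concretely, take the triangle on $\{1,2,3\}$ with $w(12)=1$, $w(13)=2$, $w(23)=4$ (a valid instance of \maxcutdistinctd[5]{}, $n=3$), and consider the assignment $x_1=\true$, $x_2=x_3=\false$, $y_1=y_2=\true$, $y_3=\false$, $z=\true$, so $|T|=4=n+1$ and the only feasible flips are those of $x_1$, $y_1$, $y_2$, $z$ to \false{}. Their cost changes are $-3K+1$, $-K+1$, $-3K-1$ and $0$ respectively, so for $K\ge 2$ none is strictly improving and the assignment is locally optimal for $f(I)$; yet $g$ maps it to the cut $(\{1\},\{2,3\})$ of weight $3$, which is not locally optimal, since flipping vertex $2$ gives weight $6$. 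Hence local optima of $f(I)$ need not map to local optima of $I$, violating the defining property of a \PLS{}-reduction (your analysis of reasonable states is fine, but nothing confines local optima to $\R$; even fully synchronized broken states with $x_v=y_v$ for all $v$ encoding a locally optimal cut are local optima outside $\R$). The paper avoids exactly this trap by keeping a single copy of the vertex variables and adding auxiliary variables $q_v$ plus $2n+1$ extra variables with negatively weighted clauses, calibrated so that whenever the induced cut has an improving vertex stuck on the smaller side, some auxiliary variable on the larger side admits a strictly improving feasible flip; your mirror-copy gadget provides no such mechanism. Incidentally, the difficulty you anticipate in condition~(3) does not arise in the paper's proof, since there $\R$ is taken to be the set of all feasible assignments, so the paths in question are single arcs.
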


As mentioned in \Cref{sec:strategy}, it may seem more straightforward to reduce from
\textsc{Min Bisection/Swap} to \minbisection{}.
The problem with this approach
is that a solution to \minbisection{} can be locally optimal for two reasons: either 
no vertex can flip to obtain a cut of larger weight, or the vertices that could are in the smaller
part of the partition. This makes the \textsc{Flip} neighborhood much less powerful than \textsc{Swap} in this problem variant;
we were thus not able to find a direct reduction from \textsc{Min Bisection/Swap}. Instead, 
we apply a new technique that allows us to prove \PLS{}-hardness for this very restricted problem.

We first prove \PLS{}-hardness of 
\halfposnaesat[3], and subsequently use existing reductions to obtain hardness of \minbisection{}.
With the expressiveness of this \textsc{SAT} variant we gain a great deal of freedom to handle the problem restrictions.
The main challenge is in
encoding the restriction that the number of true and false variables must differ by exactly one
without weakening the neighborhood.

\iflong{}
We achieve this in two steps.
First, by adding a large number of extra variables, we can embed any solution to our original instance of
\textsc{Max Cut-5} into \textsc{Odd Half Pos NAE 3-SAT} while still satisfying the size constraints.
Second, we must restore the power of the \textsc{Flip} neighborhood in this instance. 
By adding new clauses and variables to the instance, we simulate a flip in the original
\textsc{Max Cut-5} instance by a sequence of at most two flips in its image instance. 
This requires carefully constructing clauses
which encode subsets of neighborhoods of vertices in the original \textsc{Max Cut-5} instance. Since each vertex in this
instance has bounded degree, there are a constant number of subsets of the neighborhood of any vertex, yielding
a polynomial-time reduction.

As far as we are aware, this technique for overcoming size constraints in local search problems is novel.
We believe that it may be useful to prove \PLS{}-hardness results for simple heuristics for other size-constrained problems,
such as balanced clustering problems.
\fi

Next, we briefly sketch and motivate some of the ideas in the reduction in more detail\iflong{}, before proceeding to the proof\fi{}.
See also \Cref{fig:halfposnaesat}.

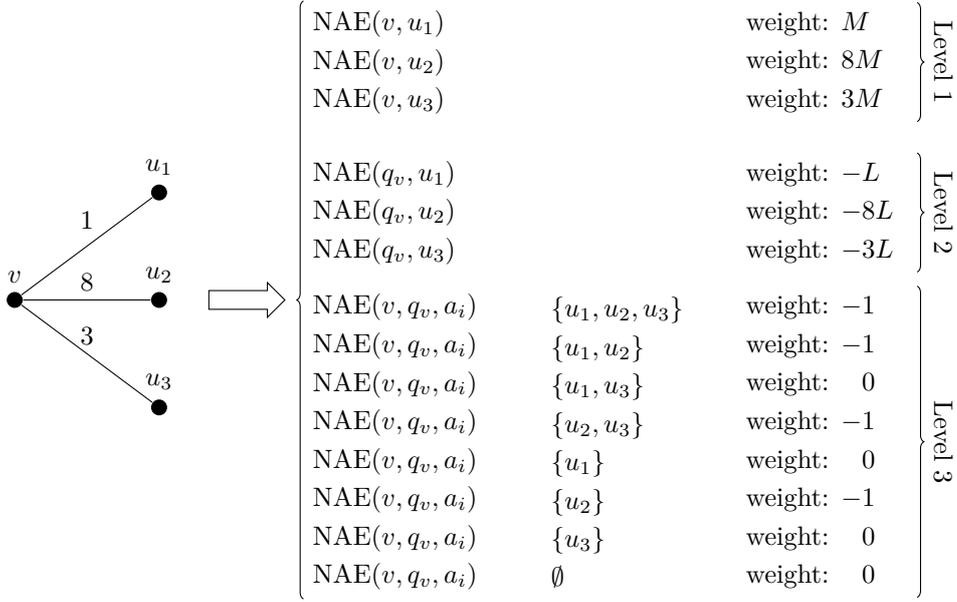
\begin{figure}[t]
\centering
\begin{tikzpicture}[scale=0.95]

    \node[anchor=west, align=left] (top) at (4, 2.1)
    {
        $\nae(v, u_1)$\\[1.5pt]
        $\nae(v, u_2)$\\[1.5pt]
        $\nae(v, u_3)$
    };
    \node[anchor=west, align=left](level1) at (10, 2.1) 
    {
        weight: $M$\\[1.5pt] 
        weight: $8M$\\[1.5pt] 
        weight: $3M$
    };
    
    \coordinate (p) at (level1.north east);
    \coordinate (q) at (level1.south east);
    \draw[
            decoration={
        brace,
    }, decorate] let \p1=(p), \p2=(q) in (12.5, \y1) -- (12.5, \y2) node[sloped, pos=0.5, above=0.1cm] {Level 1};
    
    \node[anchor=west, align=left] at (4, 0)
    {
        $\nae(q_v, u_1)$\\[1.5pt]
        $\nae(q_v, u_2)$\\[1.5pt]
        $\nae(q_v, u_3)$
    };
    \node[anchor=west, align=left](level2) at (10, 0) 
    {
        weight: $-L$\\[1.5pt] 
        weight: $-8L$\\[1.5pt] 
        weight: $-3L$
    };
    
    \coordinate (p) at (level2.north east);
    \draw[
            decoration={
        brace,
    }, decorate] let \p1=(p) in (12.5, \y1) -- (12.5, {-\y1}) node[sloped, pos=0.5, above=0.1cm] {Level 2};

    \node[anchor=west, align=left] (bot) at (4, -3.2) 
    {
        $\nae(v, q_v, a_i)$\\[1.5pt] 
        $\nae(v, q_v, a_i)$\\[1.5pt]
        $\nae(v, q_v, a_i)$\\[1.5pt] 
        $\nae(v, q_v, a_i)$\\[1.5pt] 
        $\nae(v, q_v, a_i)$\\[1.5pt] 
        $\nae(v, q_v, a_i)$\\[1.5pt] 
        $\nae(v, q_v, a_i)$\\[1.5pt] 
        $\nae(v, q_v, a_i)$
    };
    \node[anchor=west, align=left] at (7.3, -3.2) 
    {
        ${\{u_1, u_2, u_3\}}$\\[1.5pt] 
        ${\{u_1, u_2\}}$\\[1.5pt]
        $\{u_1, u_3\}$\\[1.5pt] 
        ${\{u_2, u_3\}}$\\[1.5pt] 
        ${\{u_1\}}$\\[1.5pt] 
        ${\{u_2\}}$\\[1.5pt] 
        ${\{u_3\}}$\\[1.5pt] 
        $\emptyset$
    };
    \node[anchor=west, align=left] (level3) at (10, -3.2) 
    {
        weight: $-1$\\[1.5pt] 
        weight: $-1$\\[1.5pt] 
        weight: $\phantom{-}0$\\[1.5pt] 
        weight: $-1$\\[1.5pt] 
        weight: $\phantom{-}0$\\[1.5pt] 
        weight: $-1$\\[1.5pt] 
        weight: $\phantom{-}0$\\[1.5pt] 
        weight: $\phantom{-}0$
    };
    
    \coordinate (p) at (level3.north east);
    \coordinate (q) at (level3.south east);
    \draw[
            decoration={
        brace,
    }, decorate] let \p1=(p),\p2=(q) in (12.5, \y1) -- (12.5, \y2) node[sloped, pos=0.5, above=0.1cm] {Level 3};

    \draw[
            decoration={
        brace,
    }, decorate] (bot.south west) to coordinate[pos=.5] (o) (top.north west) node[sloped, pos=0.5, above=0.1cm] {};

    \gettikzxy{(o)}{\ox}{\oy}
    
    \vertex[label=$v$] (v) at (0, \oy) {};
    \vertex[label=$u_1$](u1) at (2, {\oy+1.5cm}) {};
    \vertex[label=$u_2$](u2) at (2, {\oy+0}) {};
    \vertex[label=$u_3$](u3) at (2, {\oy-1.5cm}) {};

    \draw (v) -- node[above=3pt] {$1$} ++ (u1);
    \draw (v) -- node[above] {$8$} ++ (u2);
    \draw (v) -- node [above] {$3$} ++ (u3);

    \node[single arrow, draw=black,
      minimum width = 10pt, single arrow head extend=3pt,
      minimum height=10mm] at (3.15, \oy) {}; 
\end{tikzpicture}

\caption{Schematic overview of the reduction used in the proof of \Cref{lemma:halfposnaesat}.
On the left we have a vertex $v \in V$ and its neighbors $\{u_1, u_2, u_3\}$ in a \textsc{Max Cut} instance,
with weights on the edges between $v$ and its neighbors.
The $\nae$ clauses on the right are the clauses constructed from~$v$.
In the actual reduction, these clauses are added for all level 3 variables.
The right-most column shows the weights assigned to the clauses. 
The middle column shows for the level 3 clauses which subset of~$N(v)$ corresponds to
which clause.
The constants
$L$ and $M$ are chosen so that $1 \ll L \ll M$.}
\label{fig:halfposnaesat}
\end{figure}

\proofsubparagraph{Sketch of Proof for \Cref{lemma:halfposnaesat}.} We first embed the
\textsc{Distinct Max Cut-5} instance,
given by a weighted graph $G = (V, E)$,
in \textsc{Pos NAE SAT}. This can be done rather straightforwardly, by a reduction
used by Sch\"affer and Yannakakis \cite{schafferSimpleLocalSearch1991}: each vertex
becomes a variable, and an edge $uv$ becomes a clause $\nae(u, v)$. This instance
is directly equivalent to the original \textsc{Distinct Max Cut-5} instance. We call these
variables the \emph{level 1 variables}, and the clauses the \emph{level 1 clauses}.
A level 1 clause $\nae(u, v)$ gets a weight $M\cdot w(uv)$ for some large integer $M$.

A solution to the original \textsc{Distinct Max Cut-5} instance is obtained by
placing the \true{} level 1 variables in a feasible truth assignment on one side of the cut,
and the \false{} level 1 variables on the other side.

Given the reduction so far, suppose we have some locally optimal feasible truth assignment~$s$.
We partition the variables into the sets
$T$ and $F$ of \true{} and \false{} variables; thus, $(T, F)$ is the cut induced by $s$.
Suppose $|T| = |F| + 1$. If
no level 1 variable can flip from $T$ to $F$, then also no vertex can flip
from $T$ to $F$ in the induced cut.
However, we run into trouble when there exists some $v \in F$ that can flip in the cut.
Since $|F| < |T|$, we are not allowed to flip the level 1 variable $v$, and so
the truth assignment may be locally optimal even though the induced cut is not.

To deal with this situation, we will introduce two more levels of variables and clauses.
The weights of the clauses at level $i$ will be scaled so that they are
much larger than those at level $i+1$. In this way,
changes at level $i$ dominate changes at level $i+1$, so that
the \textsc{Distinct Max Cut-5} instance can exist independently at level 1.

For each vertex $v \in V$, we add a variable $q_v$ to the instance,
and for each $u \in N(v)$, we add a clause~$\nae(q_v, u)$ with weight proportional to
$-w(uv)$.
We call these variables the \emph{level 2 variables}, and
these clauses the \emph{level 2 clauses}.

Finally, we add $N = 2n+1$ more variables $\{a_i\}_{i=1}^N$,
which we call the \emph{level 3 variables}. The number $N$
is chosen so that for any truth assignment such that
the number of \true{} and \false{} variables differ by one, there must
exist a level 3 variable in the larger of the two sets. We then add more
clauses as follows: for each level 3 variable $a_i$, for each $v \in V$,
for each~$Q \subseteq N(v)$, we add a clause $C_i(v, Q) = \nae(v, q_v, a_i)$.
We give this clause a weight of~$-1$ if and only if $v$ can flip when each 
of the vertices in $Q$ are present in the same half of the cut as~$v$. We call these
the \emph{level 3 clauses}

Now consider the aforementioned situation, where a truth assignment
$s$ is locally optimal, but there exists some $v \in V \cap F$ that can flip
in the induced cut.
Carefully investigating the structure of such a locally optimal truth assignment shows that
some level 2 or level~3 variable can flip for a strict improvement
in the cost. This contradicts local optimality, and so we
must conclude that locally optimal truth assignments induce locally optimal
cuts, satisfying the essential property of \PLS{}-reductions. \hfill$\blacksquare$
\smallskip

\iflong\else

As far as we are aware, this technique for overcoming size constraints in local search problems is novel.
We believe that it may be useful to prove \PLS{}-hardness results for simple heuristics for other size-constrained problems,
such as balanced clustering problems.
\fi

\iflong

\begin{proof}[Proof of \Cref{lemma:halfposnaesat}]
    We now proceed to state the reduction formally and prove its correctness.
    Let $G = (V, E)$ be an instance of \textsc{Distinct Max Cut-5}
    with weights $w$ on the edges, and let~$n = |V|$. We write $\Delta_{\min{}}$ for the smallest absolute difference in cost
    between any two neighboring cuts,
    and~$\Delta_{\max{}}$ for the largest. By definition of \textsc{Distinct Max Cut-5},
    we have~$\Delta_{\min{}} \geq 1$.
    Note that the values~$\Delta_{\min{}}$ and~$\Delta_{\max{}}$ can be computed in polynomial time for a given instance
    of \maxcutdistinctd[5]. Finally, we define $N = 2n + 1$, $L = 2^6 N$, and
    $M = 5\cdot\Delta_{\max{}} \cdot L + 2^5N + 1$.

    To begin, we embed the \textsc{Max Cut} instance in our
    \textsc{Half Pos NAE 3-Sat} instance.
    For each vertex $v \in V$
    we create a variable, and for each edge $e = uv \in E$ we add a clause~$C_e = \nae(u, v)$ with weight $w(C_e) = M\cdot w(e)$.
    We call these variables the \emph{level 1 variables} and these clauses the \emph{level 1 clauses}.
    A truth assignment of
    the variables $V$ naturally induces a cut in $G$.
    Then $C_e$ is satisfied if and only if $u$ and $v$ have opposite truth values,
    which is the case if and only if $e$ is in the corresponding cut. The
    total weight of all satisfied clauses is exactly the weight of the cut times $M$,
    and a flip of a variable corresponds to a flip of a vertex from
    one side to the other.

    To this instance, we add for each vertex $v \in V$ a variable
    $q_v$; we call the variables $\{q_v\}_{v \in V}$ the \emph{level 2 variables}.
    We additionally add $N=2n+1$ more variables $\{a_1, \ldots, a_{N}\}$,
    which we call the \emph{level 3 variables}.
    Now, consider a vertex $v \in V$ and its neighborhood~$N(v)$.
    Observe that the change in cost due to a flip of $v$ in the given
    \textsc{Max Cut} instance is determined fully by
    which of its neighbors are in the same half of the cut as $v$.
    For each $Q \subseteq N(v)$ and each level 3 variable, we add a clause 
    \[
        C_i(v, Q) = \nae\left(v, q_v, a_i\right).
    \]
    See \Cref{fig:halfposnaesat} for
    an example. We call these clauses the \emph{level 3 clauses}.
    We set the weight of $C_i(v, Q)$ equal to
    \[
        w(C_i(v, Q)) = \begin{cases}
            -1 & \text{if } \sum_{u \in Q} w(vu) - \sum_{u \in N(v) \setminus Q} w(vu) > 0, \\
            0 & \text{otherwise.}
        \end{cases}
    \]
    Note the minus sign. By this construction, $C_i(v, Q)$ has nonzero weight if and only if
    the vertex $v \in V$ can flip in the \textsc{Max Cut} instance when its neighbors
    in $Q$ are in the same half of the cut as $v$.
    
    Observe that
    each subset of $N(v)$ yields a clause consisting of the same three variables.
    We consider these clauses to be distinct, rather than merging them into a single
    clause of the same total weight, since this simplifies the verification of the reduction later on.
    
    For each $v \in V$ we also add the clauses
    \[
        C(q_v, u) = \nae(q_v, u)
    \]
    for each $u \in N(v)$. Each clause
    $C(q_v, u)$ has weight equal to $-L\cdot w(uv)$ (note again the minus sign).
    We call these clauses the \emph{level 2 clauses}.
   
    We only add a polynomial number of variables and clauses.
    Let the set of all variables be $S$.
    The total number of variables is then $|S| = 2n + N = 4n+1$.
    Observe that this choice of $N$
    ensures that any feasible truth assignment of $S$ with $T$ the true
    variables and $F$ the false variables,
    we must have at least one level 3 variable in the larger of $T$ and $F$.
    
    \proofsubparagraph{Correctness.}
    It now remains to show that a locally optimal truth assignment induces
    a locally optimal cut.
    In the following, given a truth assignment~$s$, we will denote 
    by~$T(s)$ the set of variables assigned \true{} by $s$ and with~$F(s)$
    the set assigned \false{} by~$s$. 
    If the truth assignment~$s$ is clear from the context, we may simply write~$T$ and~$F$.
    
    We first require the following claim.

    \begin{claim}\label{claim:move_preservation}
        Let~$s$ be a truth assignment for~$S$. Assume~$|T| = |F| + 1$.
        If either $V \cap T$ is empty or flipping a vertex~$v \in V \cap T$ improves the cut induced
        by~$s$, then~$s$ is not locally optimal.
    \end{claim}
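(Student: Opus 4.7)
The plan is to treat the two clauses of the hypothesis separately, in each case producing a single variable flip that strictly increases the cost of~$s$. Throughout I will use that, since $|T|=|F|+1$, flipping any variable from~$T$ to~$F$ yields a feasible neighbor with $|F'|=|T'|+1$. I will also assume (after a trivial preprocessing step on the \maxcutdistinctd[5] instance) that the graph~$G$ has no isolated vertices.

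First I would handle the case where some $v\in V\cap T$ has an improving flip in the induced cut: I flip~$v$ itself in the \halfposnaesat[3] instance. By the construction, the only clauses involving~$v$ are the level~1 clauses $\nae(u,v)$ for $u\in N(v)$ and the level~3 clauses $C_i(v,Q)$. The level~1 clauses encode the induced cut weight scaled by~$M$, so their contribution changes by exactly $M\cdot\Delta_{\mathrm{cut}}$, where $\Delta_{\mathrm{cut}}\geq\Delta_{\min{}}\geq 1$ by the distinctness assumption. The level~3 contribution changes by at most $N\cdot 2^{|N(v)|}\leq 32N$ in absolute value, since each level~3 clause has weight in $\{-1,0\}$ and $|N(v)|\leq 5$. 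The choice $M>32N$ then forces the net change to be positive.

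Second I would handle $V\cap T=\emptyset$. Now every level~1 variable is false, so no level~1 flip preserves feasibility and I must locate an improving flip among the level~2 or level~3 variables. The constraint $|T|=2n+1>n$ forces at least one level~3 variable to be true. If some~$q_v$ is true (which, by our preprocessing, can be chosen with $v$ non-isolated when possible), I flip~$q_v$: this toggles each level~2 clause $\nae(q_v,u)$ with $u\in N(v)$ from satisfied to unsatisfied, changing the cost by $L\sum_{u\in N(v)}w(uv)\geq L$, while the change on the level~3 clauses $C_i(v,Q)$ is non-negative by a quick case analysis on the truth value of~$a_i$. Otherwise every true~$q_v$ corresponds to an isolated vertex of~$G$, so $q_v=F$ for every non-isolated~$v$; I then flip some true~$a_i$. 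Since $v$ is false throughout, only the level~3 clauses $C_i(v,Q)$ with $q_v=F$ switch status, all going from satisfied to unsatisfied, so the change equals $-\sum_{v:\,q_v=F,\,Q}w(C_i(v,Q))$, which is strictly positive because taking $Q=N(v)$ for any non-isolated~$v$ contributes a summand of~$+1$.

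I expect the second case to be the main obstacle: there is no direct level~1 witness of non-optimality, and one must combine the cardinality constraint $|T|=|F|+1$ with the absence of isolated vertices to force a level~2 or level~3 flip to witness non-optimality. Once this structural observation is made, the weight hierarchy $1\ll L\ll M$ ensures that changes at lower levels dominate changes at higher levels, and the remaining computations are mechanical.
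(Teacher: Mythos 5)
Your first case has a genuine gap: you assert that the only clauses involving the level~1 variable~$v$ are the level~1 clauses $\nae(u,v)$ and the level~3 clauses $C_i(v,Q)$, but $v$ also occurs in the level~2 clauses $C(q_u,v)=\nae(q_u,v)$ for every $u\in N(v)$ (each vertex~$u$ generates such a clause for each of its neighbors, and $v$ is a neighbor of~$u$). These clauses have weights $-L\cdot w(uv)$, so flipping~$v$ can also toggle them and decrease the cost by up to roughly $5\cdot\Delta_{\max{}}\cdot L$, a quantity that is not controlled by requiring $M>32N$: since $L=2^6N$ and edge weights can be arbitrarily large, the level~2 loss can easily exceed $M\cdot\Delta_{\mathrm{cut}}$ if $M$ were only larger than $32N$ (take an improving flip with cut gain~$1$ but $w(\delta(v))$ huge). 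What saves the argument is the construction's actual choice $M=5\cdot\Delta_{\max{}}\cdot L+2^5N+1$, and to invoke it your accounting must include the level~2 term; the paper's proof does exactly this, bounding the level~2 contribution by $5\cdot\Delta_{\max{}}\cdot L$ and the level~3 contribution by $2^5N$.

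Your second case is essentially sound and close to the paper's (which splits on whether all $q_v$ are true rather than whether some $q_v$ is true, flipping a $q_v$ resp.\ an $a_i$), and your observation that the level~3 change under a flip of a true~$q_v$ is non-negative is correct and even sharper than the paper's conservative $-2^5N$ bound. However, the inequality $\sum_{u\in N(v)}w(uv)\geq 1$ does not follow from non-isolation alone, because edges may have weight zero; it follows instead from distinctness (the neighboring cuts $(\emptyset,V)$ and $(\{v\},V\setminus\{v\})$ have costs $0$ and $w(\delta(v))$, so $w(\delta(v))\geq 1$ for every vertex), or, as the paper does, from choosing~$v$ to be a vertex whose flip improves the induced cut $(\emptyset,V)$. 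The same point is needed in your flip-of-$a_i$ branch to guarantee that the clause with $Q=N(v)$ has weight~$-1$. Your preprocessing step removing isolated vertices is vacuous for the same reason and should not be needed; as written it also quietly modifies the instance the claim is about. With the level~2 terms accounted for in the first case and the positivity of $w(\delta(v))$ justified in the second, the proposal becomes a correct proof along the same lines as the paper's.
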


    \begin{claimproof}
        First, consider the case~$V \cap T = \emptyset$. Suppose it holds that~$q_v \in T$
        for all~$v \in V$. Consider some~$v \in V$ that can flip in the
        cut induced by~$s$. Note that~$v$ must exist, since one half of the cut is empty;
        therefore the induced cut cannot be locally optimal.
        A flip of $q_v$ unsatisfies all level 2 clauses, of the form
        $\nae(q_v, u)$, in which it occurs. This yields an improvement to the cost
        of at least $L$. To see this, note that at least one edge $uv$ incident to $v$ must have nonzero
        weight, and thus unsatisfying $C(q_v, u)$ contributes at least $+L$. 
        Flipping $q_v$ may in addition satisfy
        at most $2^5\cdot N$ level 3 clauses. Thus, flipping $q_v$ yields an improvement to the
        cost of at least
        \[
            L - 2^5\cdot N = 2^6 N - 2^5 N > 0.
        \]
        Now suppose that there exists some~$q_v \in F$. Consider a clause~$C_i(v, Q) = \nae(v, q_v, a_i)$
        for some~$a_i \in T$ (which exists due our choice
        of~$N$). If we flip~$a_i$, this clause becomes unsatisfied, which yields an
        improvement of $+1$ to the cost. Moreover, since $v \in F$ for all $v \in V$, there
        are no level 3 clauses that can become satisfied due to the flip of $a_i$. Hence,
        $s$ is not locally optimal.
        
        Now suppose that there exists some $v \in V \cap T$ that can
        flip in the cut induced by $s$. We then have
        \[
            \sum_{u \in N(v) \cap T} w(uv) - \sum_{u \in N(v) \cap F} w(uv) \geq 1.
        \]
        A flip of the level 1 variable $v$ satisfies the level 1 clauses $\nae(u, v)$ for $u \in N(v) \cap T$
        and unsatisfies those for $u \in N(v) \cap F$. Thus, this causes a change in the cost of at least~$M\cdot \Delta_{\min{}} \geq M$.

        Among the level 2 clauses,
        $v$ occurs in at most $5$ clauses, one clause $C(q_u, v)$ of weight $-Lw(uv)$ for each $u \in N(v)$.
        Thus, a flip of $v$ contributes at most
        $5 \cdot \Delta_{\max{}} \cdot L$ from the level 2 clauses.

        Finally, among the level 3 clauses, $v$ occurs in at most $2^5\cdot N$ clauses: for each level 3 variable,
        $v$ occurs in one clause for each subset of its neighborhood. Thus, the level 3 clauses contribute at most
        $2^5\cdot N$ when $v$ flips.

        Combining these estimates, a flip of $v$ increases the cost function by at least
        \[
            M - 5 \cdot \Delta_{\max{}} \cdot L - 2^5\cdot N > 0.
        \]
        Thus, $v$ can flip from \true{} to \false{} in $s$, proving the claim. 
    \end{claimproof}

    Let $s$ be a locally optimal truth assignment for $S$. 
    For the remainder of the proof, we assume that $|T(s)| = |F(s)| + 1$; this
    is w.l.o.g.,
    since the problem is symmetrical under inversion of the truth value of all variables
    simultaneously.
    Since $s$ is locally optimal, no level 1 variable in $T$ can be flipped
    for a positive gain. Hence, no vertex in the corresponding \textsc{Max Cut-5}
    solution can be flipped from $T \cap V$ to $F \cap V$ by \Cref{claim:move_preservation}.

    For any truth assignment with $|T| = |F|+1$, we must have some level 3 variable
    $a_i \in T$. We consider how flipping a level 3 variable $a_i \in T$ would affect the
    cost of $s$. Clearly, this cannot increase the cost,
    as then $s$ would not be locally optimal. The only options are then that the cost
    stays the same or decreases. The following claim excludes the latter possibility.
    
    We stress that we will not actually
    perform such a flip; the assumption that $s$ is locally optimal
    ensures that this is not a legal move.
    We will only consider the effect such a flip would have,
    in order to investigate the structure of locally optimal truth assignments.
    
    \begin{claim}\label{claim:auxiliary_flip}
        Flipping a level 3 variable in a locally optimal solution cannot cause any clauses with negative weight
        to become satisfied.
    \end{claim}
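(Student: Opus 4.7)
The plan is to prove the contrapositive: if flipping some level 3 variable in $s$ causes a negative-weight level 3 clause to become satisfied, then $s$ cannot have been locally optimal.  Throughout, assume w.l.o.g.~$|T(s)| = |F(s)| + 1$.  Since a single flip preserves the odd-half cardinality constraint only when the flipped variable lies in $T(s)$, the only legally flippable level 3 variables are those $a_i \in T$.  A clause $C_i(v^*, Q^*) = \nae(v^*, q_{v^*}, a_i)$ becomes satisfied by flipping $a_i$ precisely when all three of its variables agreed in $s$; combined with $a_i \in T$, this forces $v^*, q_{v^*} \in T$.  The central idea is to use this structural coincidence to build a strictly improving flip of $q_{v^*}$, contradicting local optimality of $s$.

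First I would analyze the level 2 contribution of flipping $q_{v^*}$.  The variable $q_{v^*}$ appears in exactly the $d(v^*)$ level 2 clauses $\nae(q_{v^*}, u)$ for $u \in N(v^*)$, each of weight $-L \cdot w(v^* u)$.  A direct computation then shows that the change in cost from these clauses equals $-L\cdot S_{v^*}$, where
\[
  S_{v^*} \;=\; \sum_{u \in N(v^*) \cap T} w(v^* u) \;-\; \sum_{u \in N(v^*) \cap F} w(v^* u)
\]
is exactly the change in cut value that a flip of $v^*$ would produce in the induced cut.  By \Cref{claim:move_preservation}, no vertex of $V \cap T$ admits an improving flip in the induced cut, so $S_{v^*} \leq 0$; combined with the distinct-costs property of the \maxcutdistinctd[5]{} instance and integrality of the edge weights, we get $S_{v^*} \leq -1$.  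Hence the level 2 contribution to the flip of $q_{v^*}$ is at least $L$.

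Next I would bound the level 3 contribution.  The variable $q_{v^*}$ also appears in exactly the $N \cdot 2^{d(v^*)}$ level 3 clauses $C_i(v^*, Q)$ for $i \in \{1, \ldots, N\}$ and $Q \subseteq N(v^*)$, each with weight in $\{0, -1\}$.  Hence the total change in cost contributed by these clauses is bounded in absolute value by $N \cdot 2^{d(v^*)} \leq 2^5 N$.  Summing the two contributions, the net change from flipping $q_{v^*}$ is at least $L - 2^5 N = 2^6 N - 2^5 N = 2^5 N > 0$, contradicting the local optimality of $s$.

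\textbf{Main obstacle.}  The conceptual hurdle is realizing that the mere coincidence $v^*, q_{v^*} \in T(s)$ already suffices to break local optimality, regardless of which particular subset $Q^*$ witnessed the negative weight.  The quantitative hurdle is then to verify that the level 2 gain of order $L$ from flipping $q_{v^*}$ strictly dominates the worst-case level 3 perturbation of order $2^5 N$; the choice $L = 2^6 N$ made in the reduction was calibrated precisely so that this inequality goes through.
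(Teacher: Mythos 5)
Your proposal is correct and follows essentially the same argument as the paper: from the assumption that a negative-weight clause $C_i(v,Q)$ becomes satisfied by flipping $a_i \in T$ you deduce $v, q_v \in T$, then show via \Cref{claim:move_preservation} and the distinct-costs property that flipping $q_v$ gains at least $L$ from the level 2 clauses against a level 3 perturbation of at most $2^5 N$, contradicting local optimality. The only cosmetic difference is that you phrase it as a contrapositive and write $S_{v^*}\le -1$ where the paper writes $-\Delta_{\min{}}$, which is the same bound.
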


    \begin{claimproof}
        Consider a clause $C_i(v, Q) = \nae(v, q_v, a_i)$ constructed from
        $v \in V$ with negative weight, that becomes satisfied in a locally optimal
        solution $s$ due to flipping $a_i \in T$.
        Then we have~$v, q_v, a_i \in T$. (Recall the w.l.o.g.\ assumption that $|T| = |F| + 1$.)
      
        We know from
        \Cref{claim:move_preservation} that $v$ is unable to
        flip in the cut induced from $s$, as otherwise $s$ would not be locally optimal. Thus, 
        \begin{align}\label{eq:infeasible_flip}
            \sum_{u \in N(v) \cap T} w(uv) - \sum_{u \in N(v) \cap F} w(uv) \leq -\Delta_{\min{}}.
        \end{align}
        Now consider the change in cost that would occur due to a flip of $q_v$. Since
        $q_v$ is \true{}, flipping $q_v$ would satisfy the level 2 clauses
        $\nae(q_v, u)$ for $u \in N(v) \cap T$, and unsatisfy those for which
        $u \in N(v) \cap F$. By construction, this changes the cost by
        \[
            \left(\sum_{u \in N(v) \cap T} (-w(uv)) - \sum_{u \in N(v) \cap F} (-w(uv))\right)\cdot L 
            \geq \Delta_{\min{}} \cdot L,
        \]
        where the inequality follows from \Cref{eq:infeasible_flip}. 
        Flipping $q_v$ may also change the status of some level 3 clauses. For a fixed $a_i$,
        the variable $q_v$ occurs in all level 3 clauses generated from $v$. Thus, $q_v$ occurs in at most
        $2^5\cdot N$ level 3 clauses,
        once for each $a_i$ and for each subset of $N(v)$.
        The change in cost from flipping $q_v$ due to these clauses is then at most $2^5 N$
        in absolute value. Flipping $q_v$ then increases the cost by at least
        \[
            \Delta_{\min{}} \cdot L - 2^5\cdot N \geq L - 2^5\cdot N = 2^6\cdot N - 2^5 \cdot N > 0.
        \]
        We see that flipping $q_v$ would strictly increase the cost, contradicting local optimality
        of $s$. Thus, we must have $q_v \in F$, which then implies that flipping
        $a_i$ from \true{} to \false{} does not change the status of $C_i(v, Q)$.
        The claim follows.
    \end{claimproof}
    
    Suppose there exists some vertex $v \in V \cap F$ that can flip in
    the induced cut for a positive improvement in the cut weight. Let $Q = N(v) \cap F$.
    Consider the clause~$C_i(v, Q) = \nae(v, q_v, a_i)$ for some $a_i \in T$.
    Since flipping the vertex $v$ increases the weight of the cut, we know by construction that
    this clause has negative weight.
    
    We will show that we must have $q_v \in T$.
    Suppose towards a contradiction that $q_v \in F$.
    Then flipping $a_i \in T$ unsatisfies $C_i(v, Q)$, yielding a positive contribution to
    the cost. Moreover, by \Cref{claim:auxiliary_flip}, flipping $a_i$ cannot yield
    any negative contributions to the cost, and hence
    this flip is improving. This contradicts local optimality of $s$, and thus we must have~$q_v \in T$.

    Now consider a flip of $q_v$. This would satisfy the level 2 clauses $\nae(q_v, u)$
    for which~$u \in T$ and unsatisfy those for which $u \in F$.
    The cost would then change by at least
    \[
        \left(\sum_{u \in N(v) \cap F} w(uv) - \sum_{u \in N(v) \cap T} w(uv)\right)\cdot L - 2^5\cdot N
    \]
    as in the proof of \Cref{claim:auxiliary_flip}. But the quantity in brackets is exactly
    the change in cost that
    would be incurred by flipping $v$ in the induced cut, which was assumed
    improving; hence, flipping $q_v$ would increase
    the cost by at least $L - 2^5\cdot N = 2^6N - 2^5N > 0$, contradicting
    local optimality of $s$. Thus, no $v \in V \cap F$ can be eligible to flip in the induced cut.
    
    Together with \Cref{claim:move_preservation}, we can conclude that if $s$ is locally
    optimal, then there exists no vertex $v \in V$ that can flip in the cut induced by $s$.
    Thus, local optima for \halfposnaesat[3] map to a local
    optima of \maxcutd[5], satisfying the essential property of \PLS{}-reductions.
    Moreover, the reduction can be achieved in polynomial time, and so it is
    a valid \PLS{}-reduction.
    
    \proofsubparagraph{Tightness.} Let $x$ be an instance of
    \textsc{Distinct Max Cut-5} and let $(f, g)$ be the reduction given above.
    Let~$\R$ be the entire
    set of feasible truth assignments for $f(x)$. This obviously contains all local optima of
    $f(x)$, satisfying property 1 of tight reductions.
    Given some cut~$(A, B)$ in~$G$, we can construct~$s \in \R$ by
    setting the variables in~$A \cap S$ to \true{} and those in~$B \cap S$ to \false{}, assigning
    the level 2 variables a truth value arbitrarily, and assigning the level 3 variables
    such that~$|T| = |F| + 1$. This satisfies property~2 of tight reductions.
    For property~3, suppose the transition graph of the reduced instance~$f(x)$ contains
    a directed path~$P$ from~$s \in \R$ to~$s' \in \R$ with all internal vertices outside of~$\R$. Since~$\R$
    is the entire set of feasible truth assignments, this implies that~$P$ consists
    of the single arc~$ss'$. If~$s$ and~$s'$ differ in the flip of a
    level 2 or level 3 variable, then~$g(s, x) = g(s', x)$, and we are done.
    Suppose then that~$s$ and~$s'$ differ in
    the flip of a level 1 variable $v$. To conclude, we must show that the transition graph of $x$
    contains an arc from $g(s, x)$ to $g(s', x)$. Suppose to the contrary that no such arc exists.
    This implies that flipping the vertex $v$ in $g(s, x)$ \emph{reduces} the cut value by
    at least $1$. Thus, flipping the variable $v$ in $s$ reduces the cost of the truth
    assignment by at least $M - 5\Delta_{\max{}}L - 2^5N > 0$, which contradicts
    the existence of the arc~$ss'$.
\end{proof}

\fi

A reduction from \posnaesat[3] to \maxcut{} was provided by
Sch\"affer and Yannakakis \cite{schafferSimpleLocalSearch1991}.
Since \textsc{Max Cut} is equivalent to \textsc{Pos NAE 2-Sat},
we can use the same reduction to reduce from \halfposnaesat[3] to \halfposnaesat[2].
\iflong
We write the proof here for the sake of completeness.
\fi

\begin{lemma}[Sch\"affer and Yannakakis \cite{schafferSimpleLocalSearch1991}]
\label{lemma:reduce3to2}
    There exists a tight \PLS{}-reduction from \halfposnaesat[3] to \halfposnaesat[2].
\end{lemma}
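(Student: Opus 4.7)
The plan is to adapt the classical trick underlying Schäffer and Yannakakis's reduction from \posnaesat[3] to \maxcut{}: an NAE triple equals the sum of its three pairwise NAE relations, up to a factor of two. Given an instance of \halfposnaesat[3] on variable set $X$ with clauses $\mathcal{C}$, I first double all clause weights so that halving remains integral. Then for every 3-clause $C = \nae(x, y, z)$ of weight $2w$, I remove $C$ and add three 2-clauses $\nae(x, y)$, $\nae(y, z)$, $\nae(x, z)$, each of weight $w$. Existing 2-clauses are kept with their doubled weights. The variable set $X$ is preserved verbatim, so the odd cardinality of $X$ and the side-constraint $| \, |T| - |F| \, | = 1$ transfer for free. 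The map $g$ on solutions is simply the identity on truth assignments.

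The core observation is that $\nae(x, y, z)$ is satisfied if and only if at least one of $x$, $y$, $z$ disagrees with the other two, in which case exactly two of the pairs $\{x, y\}$, $\{y, z\}$, $\{x, z\}$ disagree; if all three variables agree, none of the pairs disagree. Hence, under any truth assignment, the contribution of $C$ to the cost equals the combined contribution of its three replacement clauses, with the factor $2$ absorbing the ``count of two pairs''. The identity is linear in $w$ and so accommodates negative weights without modification. Summing over all clauses, the cost of every assignment is preserved exactly.

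Since the variable set and (up to the harmless global factor $2$) the cost function are unchanged, the transition graph of the reduced instance is isomorphic to that of the original via~$g$. This gives the \PLS-reduction, and tightness follows at once by choosing $\R$ to be the entire feasible region: conditions~(1) and~(2) of tightness are trivial, and condition~(3) holds because every arc of the reduced transition graph maps to an arc of the original under the identity. I expect no real obstacle here; the only bookkeeping is to maintain integrality (handled by the initial doubling) and to observe that the parity side-constraint of the ``Half'' variant is automatically inherited because no variables are added or removed.
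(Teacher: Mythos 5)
Your proposal is correct and follows essentially the same route as the paper: replace each weight-$W$ 3-clause by its three pairwise 2-clauses of weight $W/2$ (the paper assumes even weights w.l.o.g. where you double explicitly, an equivalent move), use that a satisfied NAE triple satisfies exactly two of the pairs, and conclude cost preservation, with tightness immediate since the transition graphs coincide.
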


\iflong

\begin{proof}
    Let $I$ be an instance of \halfposnaesat[3]. We assume without loss of generality that
    the weights of the clauses are all even. For every clause of size two in~$I$,
    we create an identical clause with the same weight in an instance $I'$ of
    \halfposnaesat[2]. For each clause of size three, we create three clauses:
    \[
        \nae(x_1, x_2, x_3) \to \nae(x_1, x_2), \nae(x_2, x_3), \nae(x_1, x_3).
    \]
    If the weight of the original clause is $W$, then the resulting clauses each get
    a weight of $W/2$. Note that this may result in several clauses consisting of the same
    pair of variables. For the purposes of this analysis, we consider each of these clauses to
    be distinct; however, one could equivalently merge these clauses into one clause with weight
    equal to the total weight of the merged clauses.
    
    Let $s$ be some feasible solution. If a clause $C$ of size three of weight $W$
    in $I$ is unsatisfied
    under $s$, then its resulting clauses in $I'$ are also unsatisfied. If $C$ is satisfied
    under $s$, then exactly two of its three resulting clauses are satisfied, yielding a contribution
    of $W$ to the objective function. Thus, the cost of a solution $s$ in $I$
    is exactly the cost of $s$ in $I'$. The reduction clearly preserves local optima,
    and is tight since the transition graphs of the two instances are identical.
\end{proof}

\fi

While our reductions so far have used negative-weight clauses in \halfposnaesat[k],
it may be of interest to have a \PLS{}-completeness result also when all clauses
have non-negative weight. 
\iflong
We briefly show that this can be done, as a corollary of \Cref{lemma:reduce3to2}. 
\fi

\begin{corollary}\label{cor:naesat_positive_weight}
    \halfposnaesat[2] is \PLS{}-complete even when all clauses have non-negative weight.
    More precisely, there exists a tight \PLS{}-reduction from \halfposnaesat[2] to \halfposnaesat[2] where all clauses have non-negative weight.
\end{corollary}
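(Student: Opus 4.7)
The plan is to exploit the bisection constraint to shift all clause weights into the non-negative range without altering the transition graph at all. The crucial observation is that for any feasible truth assignment $s$, with $n = |T(s)| + |F(s)|$ odd and $||T(s)|-|F(s)||=1$, the product $|T(s)|\cdot|F(s)|$ equals the constant $(n^2-1)/4$, independent of the specific solution.

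Given an instance $I$ of \halfposnaesat[2], I would first merge any duplicate clauses on the same pair of variables by summing their weights, obtaining an equivalent instance with at most one clause per pair. Let $w_{\min}$ denote the smallest resulting clause weight and set $W := \max(0, -w_{\min}) \in \mathds{Z}_{\geq 0}$. I then construct $I'$ by including, for every unordered pair $\{x_i, x_j\}$ of variables, a single clause $\nae(x_i, x_j)$ of weight $w_{ij} + W \geq 0$, where $w_{ij}$ is the merged weight of $\nae(x_i, x_j)$ in $I$, taken as $0$ if no such clause exists. By the choice of $W$, every clause weight in $I'$ is non-negative, and the construction is clearly polynomial-time computable.

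For any feasible truth assignment $s$, the added contribution from the uniform weight shift equals $W$ times the number of pairs $\{x_i, x_j\}$ with $s(x_i) \neq s(x_j)$, which is exactly $W \cdot |T(s)| \cdot |F(s)| = W \cdot (n^2-1)/4$. Hence $c(s, I') = c(s, I) + W \cdot (n^2-1)/4$ uniformly across feasible solutions, so the transition graphs $T(I)$ and $T(I')$ coincide under the identity map on feasible solutions. Taking $g(s, I) := s$ and letting $\mathcal{R}$ be the entire set of feasible solutions of $I'$ then trivially satisfies all three conditions of a tight \PLS-reduction.

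I do not anticipate any real obstacle: the entire argument reduces to the single observation that $|T(s)| \cdot |F(s)|$ is a constant of the bisection constraint, which is immediate from $n$ being odd and $||T(s)| - |F(s)|| = 1$. The only mild care required is the routine bookkeeping of merging clauses on the same pair to ensure that, after the uniform shift by $W$, the resulting weight on every pair is indeed non-negative.
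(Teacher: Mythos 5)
Your proposal is correct and follows essentially the same route as the paper: merge duplicate clauses, place a clause on every pair, shift all weights by a uniform constant, and use that under the odd-bisection constraint the number of satisfied pair-clauses $|T(s)|\cdot|F(s)|$ is the same for every feasible assignment, so the cost changes only by a fixed additive term and the transition graph is unchanged. The only cosmetic difference is your shift $W=\max(0,-w_{\min})$ versus the paper's strictly larger constant $M$, which is immaterial since only non-negativity is required.
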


\iflong

\begin{proof}
    Given an instance of \halfposnaesat[2], we first merge any clauses consisting of the
    same pair of variables into a single clause of the same total weight. Next,
    we add clauses for each pair of variables
    $x, y$ that is not already part of some clause, giving each new clause a weight of zero.
    Note that this is equivalent to the original instance.
    Let the number of variables be $2n + 1$.
    Observe that for any truth assignment, exactly $n(n+1)$ clauses are satisfied.
    
    Next, we add to the weight of every clause a large constant $M > -\min_{\text{clauses $C$}} w(C)$,
    so that each clause has non-negative weight. The cost of a truth assignment is
    now \[ n(n+1)\cdot M + \sum_{\text{satisfied clauses $C$}}w(C),\] which is simply the weight of the same assignment
    in the original instance, plus a fixed term. Thus, the local optima of the reduced instance
    are identical to those of the original instance. Tightness follows since the transition graphs of
    the two instances are identical.
\end{proof}

\fi

Finally, we reduce from \halfposnaesat[2] to \minbisection{}.

\begin{lemma}\label{lemma:naetobisection}
    There exists a tight \PLS{}-reduction from \halfposnaesat[2]{} to both
    \maxbisection{} and \minbisection{}.
\end{lemma}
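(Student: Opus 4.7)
The plan is to use the natural graph encoding, composed with a weight-complement step for the minimization direction. By~\Cref{cor:naesat_positive_weight}, I may assume without loss of generality that all clause weights in the input instance~$I$ of \halfposnaesat[2]{} are non-negative integers, and let $n$ denote the (odd) number of variables. I build the graph $G=(V,E)$ whose vertices are the variables, and for each clause $\nae(x,y)$ of weight $w$ I add an edge $xy$ of weight $w$ (summing weights if the same pair appears in several clauses). A truth assignment $s$ is identified with the partition $(T(s),F(s))$; the feasibility constraint $\bigl||T(s)|-|F(s)|\bigr|=1$ is exactly the definition of an odd bisection on $n$ vertices. A clause $\nae(x,y)$ is satisfied iff its endpoints lie on opposite sides of $(T(s),F(s))$, so the total weight of satisfied clauses equals the cut weight. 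Taking $g$ to be this identification makes the \textsc{Flip} neighborhoods correspond one-to-one; the transition graphs of $I$ and the constructed \maxbisection{} instance are therefore isomorphic, and selecting $\R$ as the full set of feasible bisections yields a tight reduction.

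For \minbisection{}, I would post-process the above instance by completing $G$ with zero-weight edges (this does not alter any cut weight) and then replacing each edge weight $w(uv)$ by $W-w(uv)$ for some fixed integer $W>\max_{e\in E} w(e)$, so that all new edge weights are positive integers. The crucial observation is that every odd bisection $(A,B)$ on $n$ vertices in a complete graph has exactly
\[
    |A|\cdot|B| \;=\; \lfloor n/2\rfloor \cdot \lceil n/2\rceil
\]
crossing edges, a constant independent of $(A,B)$. Hence the new cut weight equals this constant times $W$, minus the corresponding cut weight of the \maxbisection{} instance. Since minimizing the former is equivalent to maximizing the latter, the transition graph of the \minbisection{} instance is isomorphic, arc for arc, to that of the \maxbisection{} instance. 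Again taking $g$ to be the identity and $\R$ to be the full set of feasible odd bisections yields a tight reduction.

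I expect no serious technical obstacle; the only point needing care is the identity $|A|\cdot|B|=\lfloor n/2\rfloor\lceil n/2\rceil$, which holds precisely because we first complete the graph and because the odd-bisection constraint fixes the pair $(|A|,|B|)$ up to swap. Everything else is routine bookkeeping: the \textsc{Flip} neighborhoods are preserved verbatim by the identity map, and since the two cost functions differ by a constant (possibly with a sign), the orderings of neighboring solutions by cost are preserved, from which tightness is immediate.
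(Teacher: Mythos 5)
Your proposal is correct and follows essentially the same route as the paper: the natural variables-to-vertices encoding (the reverse of the embedding used for \Cref{lemma:halfposnaesat}, due to Sch\"affer and Yannakakis) combined with \Cref{cor:naesat_positive_weight} for \maxbisection{}, and then completing the graph with zero-weight edges and complementing the weights against a large constant for \minbisection{}, using the fact that the bisection constraint fixes the number of crossing edges so the cost only changes by a fixed term; tightness via identical transition graphs is exactly the paper's argument as well.
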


A reduction from \posnaesat[2]{} to \maxcut{} is given by
Sch\"affer and Yannakakis \cite{schafferSimpleLocalSearch1991}\iflong ; it is essentially
the reverse direction of the reduction given at the start of the proof of
\Cref{lemma:halfposnaesat}\fi. It is easy to see
that this reduction also works with our constraint on the number
of \true{} and \false{} variables, which yields a reduction to
\maxbisection{}.

\iflong

\begin{proof}
    By \Cref{cor:naesat_positive_weight}, we can assume that
    the clauses in the original instance all have non-negative weight, and thus this
    also holds for the edges in the resulting instance of \textsc{Odd Max Bisection}.
    
    To reduce to the minimization version, we first
    add all missing edges to the instance with a weight of zero, which does not
    affect the local optima.
    We then multiply the weight of each edge by $-1$, and
    add a large constant to the weight of each edge; ``large'' here meaning
    large enough to make all weights non-negative. Since the number of edges in
    the cut is the same for every feasible solution, this only changes the cost function
    by a fixed term.
    
    We take the resulting
    instance as an instance of \minbisection{}. The local maxima of
    the \maxbisection{} instance become local minima, and so locally
    optimal solutions are preserved, concluding the proof. For tightness, we observe
    that the transition graphs of the instances are identical.
\end{proof}

\fi

\section{Reduction to Clustering Problems}
\label{sec:reduce to clustering}

Armed with the \PLS{}-completeness of \minbisection{} (see \Cref{lemma:naetobisection}), we now proceed to prove hardness
of the Euclidean clustering problems of interest.

\iflong
\subsection{\boldmath \texorpdfstring{$k$}{k}-Means}
\else
\subparagraph{\boldmath \texorpdfstring{$k$}{k}-Means.}
\fi

We provide a tight \PLS-reduction from \minbisection{} to \khartigan{}. 
This is done in three steps (see \Cref{fig:reduction_graph}).
First, we show \PLS{}-completeness of \densestcut{}.
The construction of the proof of our \PLS{}-completeness of \densestcut{} is rather simple (we only add a large set of isolated edges), but the analysis of the correctness is quite technical.
Second, we show \PLS{}-\iflong completeness \else hardness \fi of \hartigan{} by slightly adapting an \NP-hardness reduction of \textsc{$2$-Means}~\cite{ADHP09}.
Finally, we extend this result to \khartigan{}.

Now, we show \PLS{}-completeness of \densestcut{}.
We impose the additional~constraint that there are no isolated vertices in the reduced instance.
This is a technical condition which is utilized in \Cref{lemma:densestcut_to_hartigan} for the \PLS-hardness of \khartigan{}.

\begin{lemma}\label{lemma:bisection_to_densestcut}
    There exists a tight \PLS{}-reduction from \maxbisection{} to \densestcut{}
    without isolated vertices.
\end{lemma}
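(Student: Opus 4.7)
The plan is to reduce from \maxbisection{} as follows. Given an instance $G$ of \maxbisection{} on $n$ (odd) vertices with edge weights $w$, I would construct the \densestcut{} instance $H$ by adjoining $n^4$ vertex-disjoint edges $\{u_i^0 u_i^1 : i \in [n^4]\}$ of weight $M$, where $M$ is a polynomial in the input chosen carefully to make the subsequent analysis go through (see \Cref{fig max cut hardness}). The mapping $g$ takes a cut $(X, Y)$ of $H$ and returns the induced partition of $V(G)$, possibly after a canonical adjustment to ensure feasibility as a bisection.

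The heart of the proof is to establish, for every local optimum $(X, Y)$ of \densestcut{} on $H$, two structural properties:
\emph{(i)} every auxiliary edge is cut, that is, $|X \cap \{u_i^0, u_i^1\}| = 1$ for each $i \in [n^4]$;
\emph{(ii)} the induced partition of $V(G)$ is balanced, namely, $|\, |X \cap V(G)| - |Y \cap V(G)|\, | = 1$.

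For~(i), I would show that if some auxiliary edge has both endpoints on the same side, then flipping one endpoint adds $M$ to the cut weight while perturbing the denominator $|X||Y|$ by a controlled amount; for $M$ sufficiently large relative to the density, this flip strictly improves density. For~(ii), I would argue that whenever the $V(G)$-partition is unbalanced by more than one, there exists a $V(G)$-vertex whose flip strictly improves the density while moving the partition toward balance. Once (i) and (ii) are in place, the densest-cut local-optimality condition on $H$, restricted to flipping $V(G)$-vertices, reduces to the \maxbisection{} local-optimality condition on $G$ (applied on the larger side of the bisection), so that $g$ sends local optima to local optima. For tightness, I would take the set of reasonable solutions $\R$ to consist of all cuts of $H$ satisfying (i) and (ii); then $\R$ contains every local optimum by the structural claim, every bisection of $V(G)$ lifts to an element of $\R$ by cutting all auxiliary edges, and any arc of the transition graph between two elements of $\R$ corresponds to a single $V(G)$-vertex flip that matches the flip in $G$ produced by $g$.

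The main obstacle is property (ii): since the denominator $|X||Y|$ is maximized when the cut is balanced, the density ratio $w(X,Y)/|X||Y|$ in \densestcut{} generically favors \emph{unbalanced} cuts, so the argument hinges on a delicate interaction between $M$ and the weighted structure of $G$. The parameter $M$ must be calibrated so that the gain from $G$'s cut weight in the numerator, whenever a balancing flip is performed, exceeds the ``loss'' arising from the denominator's increase toward balance, while $M$ must simultaneously be large enough to make uncut auxiliary edges improvable; this joint calibration is the technical heart of the lemma.
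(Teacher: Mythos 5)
Your construction is the same as the paper's (adjoin $n^4$ disjoint edges of large weight $M$ and map back by intersecting with $V(G)$), but there is a genuine gap exactly at the point you flag as the "technical heart": property (ii) cannot be obtained by calibrating $M$ alone when you reduce from an \emph{arbitrary} \maxbisection{} instance. The paper first replaces the given instance $\widehat{G}$ by an equivalent one on the complete graph in which every edge weight is shifted by $n^9\cdot\widehat{w}_{\max}$, so that $w_{\min}$ and $w_{\max}$ agree up to a factor $1+O(n^{-9})$. This preprocessing is what makes the balance argument work: on a complete graph with near-uniform weights, moving \emph{any} vertex from the larger side of an unbalanced partition strictly increases the number of cut edges (by at least $2$), hence increases the cut weight by more than $w_{\min}\approx w_{\max}$, and this guaranteed numerator gain dominates the $O(n/n^8)$ relative increase of the denominator caused by moving toward balance. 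Without it, a balancing flip can \emph{decrease} the numerator (a vertex may have all of its heavy edges going to the smaller side, or only zero-weight edges) while the denominator still grows, so no single balancing flip need be improving.

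Concretely, take $G$ with one edge $ab$ of weight $W$ and all other pairs of weight zero, and the cut of $H$ in which every auxiliary edge is split, $a$ is alone on one side among $V(G)$, and the other $n-1$ vertices of $V(G)$ (including $b$) are on the other side. With $M$ of order $n\cdot w_{\max}$ (which is roughly what your property (i) forces as a lower bound, and what the paper uses), one can check that no single flip -- of $a$, of $b$, of a zero-weight vertex, or of an auxiliary endpoint -- improves the density, so this is a local optimum of \densestcut{} whose $V(G)$-part is far from a bisection; pushing $M$ up to repair this only creates other spurious local optima (e.g.\ with all of $V(G)$ on one side), because then flips that \emph{increase} the imbalance become improving. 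So the "delicate interaction between $M$ and the weighted structure of $G$" you appeal to does not exist for general inputs; the missing idea is to first homogenize the instance (complete graph, all weights within a $1\pm n^{-9}$ factor, which preserves the transition graph because every bisection cuts the same number of edges) and only then add the $n^4$ matching edges. With that step inserted, the rest of your outline (properties (i) and (ii), the mapping of local optima, and the choice of $\R$ as the cuts satisfying (i) and (ii) for tightness) matches the paper's proof.
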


\proofsubparagraph{Sketch of Proof for \Cref{lemma:bisection_to_densestcut}.}
Similar to the obstacle to prove \Cref{lemma:halfposnaesat}, it is not difficult to show hardness for \textsc{Densest Cut} for swapping two vertices: the main technical hurdle arises when considering \flip, the most fundamental neighborhood.
The construction of our \densestcut{} instance is based on the simple idea of adding $n^4$~isolated edges to~$G$, 
where~$(G,w)$ is an instance of~\maxbisection{}; see \Cref{fig max cut hardness}.
The weight~$M$ of these newly added edges has to be chosen carefully:  On the one hand,~$M$ has to be larger than the maximum weight assigned by~$w$, 
but on the other hand~$M$ cannot be too large to ensure a specific structure of locally optimal solutions.

In a first step, we show that each locally optimal solution contains all these additional edges.
Intuitively, this holds due to the large weight of these edges.
This implies that each locally optimal solution has to contain at least~$n^4$ vertices on both parts of the partition.
Hence, for such a partition, flipping a vertex from one part of the partition to the other part only slightly changes the denominator in the objective function.
We use this observation to show in a second step that for each locally optimal solution~$(A,B)$, the number of vertices of~$V(G)$ in~$A$ and the number of vertices of~$V(G)$ in~$B$ differ by exactly one. 
This implies that~$(A\cap V(G),B\cap V(G))$ is a valid solution for the~\maxbisection{}-instance.

To show this statement, we exploit the special structure of the \maxbisection{}-instance. 
More precisely, if both parts of the cut differ in their size by more than one, then we can flip an arbitrary vertex from the larger part to the smaller part and increase the total weight of the resulting partition.
This is true since we will ensure that~$G$ is a complete graph where an edge set of larger cardinality is always guaranteed to have a larger total weight.
Based on these observations, we then show the second step, that is, that for each locally optimal solution~$(A,B)$, the number of vertices of~$V(G)$ in~$A$ and the number of vertices of~$V(G)$ in~$B$ differ by exactly one. 

Finally, to show that each locally optimal solution~$(A,B)$ for the~\densestcut{}-instance maps back to a locally optimal solution~$(A\cap V(G),B\cap V(G))$ for the~\maxbisection{}-instance, we exploit that the denominator of the objective function for~\densestcut{} stays the same when flipping a single vertex from the larger to the smaller part of the partition.  
Hence, if for some vertex~$v\in V(G)\cap B$, flipping~$v$ from~$B$ to~$A$ yields a better solution for the~\maxbisection{}-instance than~$(A\cap V(G),B\cap V(G))$, then we can conclude that flipping~$v$ from~$B$ to~$A$ yields a better solution for the~\densestcut{}-instance than~$(A,B)$.
This then implies that the reduction is correct. \hfill$\blacksquare$


\begin{figure}[t]
\begin{center}
\begin{tikzpicture}[xscale=.8,yscale=1]
\tikzstyle{knoten}=[circle,fill=white,draw=black,minimum size=7pt,inner sep=0pt]
\tikzstyle{blocked}=[rectangle,fill=white,draw=black,minimum size=7pt,inner sep=0pt]
\tikzstyle{bez}=[inner sep=0pt]

\pgfdeclarelayer{background}
\pgfdeclarelayer{foreground}
\pgfsetlayers{background,main,foreground}

    \begin{scope}[xshift=-6.8cm]
        \draw[rounded corners, fill=gray!30] (-2.4, 2.4) rectangle (1.4, -0.4) {};
		\node (v) at (-.5,1) {$G$};
		\node[knoten] (v) at (-2,0) {};
		\node[knoten] (vu) at ($(v) + (1,0)$) {};
		\node[knoten] (vw) at ($(vu) + (1,0)$) {};
		\node[knoten] (vx) at ($(vw) + (1,0)$) {};

        \node[knoten] (wu) at ($(vu) + (0,2)$) {};
		\node[knoten] (ww) at ($(vw) + (0,2)$) {};
		\node[knoten] (wx) at ($(vx) + (0,2)$) {};

    \node[single arrow, draw=black,
      minimum width = 10pt, single arrow head extend=3pt,
      minimum height=10mm] at ($(vx) + (1.3,1)$) {};

    \end{scope}

        \begin{pgfonlayer}{background}
        \draw[rounded corners, fill=gray!30] (-2.4, 2.4) rectangle (1.4, -0.4) {};
  \end{pgfonlayer}

		\node (v) at (-.5,1) {$G$};
		\node[knoten] (v) at (-2,0) {};
		\node[knoten] (vu) at ($(v) + (1,0)$) {};
		\node[knoten] (vw) at ($(vu) + (1,0)$) {};
		\node[knoten] (vx) at ($(vw) + (1,0)$) {};

        \node[knoten] (wu) at ($(vu) + (0,2)$) {};
		\node[knoten] (ww) at ($(vw) + (0,2)$) {};
		\node[knoten] (wx) at ($(vx) + (0,2)$) {};

		\node[knoten] (v) at (-2,0) {};
		\node[knoten] (vu) at ($(v) + (1,0)$) {};
		\node[knoten] (vw) at ($(vu) + (1,0)$) {};
		\node[knoten] (vx) at ($(vw) + (1,0)$) {};

        \node[knoten] (wu) at ($(vu) + (0,2)$) {};
		\node[knoten] (ww) at ($(vw) + (0,2)$) {};
		\node[knoten] (wx) at ($(vx) + (0,2)$) {};

		\node[knoten] (M1) at ($(vx) + (2,0)$) {};
		\node[knoten] (O1) at ($(M1) + (0,2)$) {};
		
		\node[knoten] (M2) at ($(M1) + (1.4,0)$) {};
		\node[knoten] (O2) at ($(M2) + (0,2)$) {};
		
		
		\node[] (dots) at ($(M2) + (1.2,1)$) {\huge $\dots$};
		
		\node[knoten] (M5) at ($(M2) + (2.2,0)$) {};
		\node[knoten] (O5) at ($(M5) + (0,2)$) {};

		\draw[thick] (M1) -- (O1) node [midway, left] {$M$};
		\draw[thick] (M2) -- (O2) node [midway, left] {$M$};
		\draw[thick] (M5) -- (O5) node [midway, right] {$M$};
\begin{scope}[yshift=-1cm]
		\node (a)[label=left:{$B$}] at ($(v) + (-.5,0)$) {};
		\node (b)[label=left:{$A$}] at ($(a) + (0,2)$) {};

\draw[very thick,decorate,decoration={brace,amplitude=3pt}] 
    ($(O1) + (-.2,0.4)$) coordinate (t_k_unten) -- ($(O5) + (.2,0.4)$) coordinate (t_k_opt_unten); 
\node at ($0.5*(O1)+0.5*(O5) + (0.1,.9)$) {$n^4$};

        \begin{pgfonlayer}{background}
    \draw[rounded corners, fill=red!30] ($(v) + (-.3,1.7)$) rectangle ($(O5) + (.3,0.3)$) {};
    \draw[rounded corners, fill=green!30] ($(v) + (-.3,-.3)$) rectangle ($(M5) + (.3,.3)$) {};
  \end{pgfonlayer}
\end{scope} 
		
		\end{tikzpicture}
\end{center}
\caption{Schematic overview of the reduction used in the proof of \Cref{lemma:bisection_to_densestcut}.
On the left side we have an instance of \maxbisection{} and on the right side we have the corresponding instance of \densestcut{}. 
The edges inside of~$G$ together with their weights are not depicted but are identical in both instances.
Let~$(A,B)$ be the partition corresponding to some locally optimal solution of the \densestcut{} instance.
Then,~$A$ contains exactly one endpoint of each of the $n^4$~isolated edges and~$|\, |A\cap V(G)|-|B\cap V(G)|\, |=1$.}
\label{fig max cut hardness}
\end{figure}

\iflong

\begin{proof}[Proof of \Cref{lemma:bisection_to_densestcut}]
    Before we present our construction for the instance of \densestcut{}, we present a construction for an auxiliary instance  of \maxbisection{} such that all edge-weights are close to each other.
    This property is important to verify that locally optimal solutions can be mapped.

\proofsubparagraph{Construction of an auxiliary instance.}
%
    Let~$\widehat{G}=(V,\widehat{E})$ be an instance of~\maxbisection{} with weights $\widehat{w}$ on the edges and let~$n \coloneqq |V|$.
    We now construct an instance~$G=(V,E)$ of~\maxbisection{} with weight vector~$w$ such that both instances share the same locally optimal solutions and the minimal weight and the maximal weight assigned by~$w$ are \emph{close} to each other.
    Note that~$G$ and~$\widehat{G}$ share the same vertex set~$V$.
    We define~$G$ to be the complete graph on the vertex set~$V$.
    Let~$\widehat{w}_{\max} \coloneqq \max_{e\in \widehat{E}} w(e)$.
    We define the weights $w \in \mathds{Z}_{\geq 0}^E$ as follows:
    For each edge~$uv\in \widehat{E}$, we set~$w(uv) \coloneqq n^9 \cdot \widehat{w}_{\max} + w(uv)$ and for each~$uv\in E\setminus \widehat{E}$, we set~$w(uv) \coloneqq n^9 \cdot \widehat{w}_{\max}$.
    Now, let~$w_{\min} \coloneqq \min_{e\in E} w(e)$ and~$w_{\max} \coloneqq \max_{e\in E} w(e)$. 
    Note that based on the defined weights, we obtain the following property:
    \begin{align}
        \left(1 - \frac{1}{n^9}\right) \cdot w_{\max} \leq w_{\min} \leq \left(1 + \frac{1}{n^9}\right) \cdot w_{\max}.\label{lab-edge-weights}
    \end{align}
    
\proofsubparagraph{Correctness of the auxiliary instance.}    We now show that both~$G$ and~$\widehat{G}$ share the same locally optimal solutions.

    \begin{claim}\label{claim-g-and-ghat-equivalent}
        $G$ and~$\widehat{G}$ share the same locally optimal solutions.
    \end{claim}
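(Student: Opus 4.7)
The plan is to show that for every bisection $(A,B)$ of $V$ and every flip-neighbor $(A',B')$, the change in cost from $(A,B)$ to $(A',B')$ is the same in $G$ as in $\widehat{G}$. This immediately yields that a bisection is locally optimal in $G$ if and only if it is locally optimal in $\widehat{G}$, since the corresponding transition graphs coincide.

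First I would observe that, since $G$ and $\widehat{G}$ share the same vertex set $V$, they have the same set of feasible solutions, namely all bisections $(A,B)$ of $V$ with $\bigl||A|-|B|\bigr|=1$. For the same reason, the Flip neighborhood of any such $(A,B)$ is the same set of partitions in both instances. By the bisection constraint, a valid flip-neighbor of $(A,B)$ with $|A|=k+1$ and $|B|=k$ is obtained by moving some $v\in A$ to $B$, since any flip out of the smaller side would produce a partition whose parts differ in size by~$3$.

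The key computation is then straightforward. For $v\in A$, writing $(A',B')=(A\setminus\{v\},\,B\cup\{v\})$, we have
$$w(A,B)-w(A',B') \;=\; \sum_{u\in B} w(vu) \;-\; \sum_{u\in A\setminus\{v\}} w(vu).$$
Substituting $w(vu)=n^9\,\widehat{w}_{\max}+\widehat{w}(vu)$, with the convention $\widehat{w}(vu)=0$ for edges $vu\notin\widehat{E}$, the $n^9\widehat{w}_{\max}$ contributions to the two sums cancel exactly because $|B|=|A|-1=k$. What remains is precisely $\widehat{w}(A,B)-\widehat{w}(A',B')$. The symmetric case $|A|=k$, $|B|=k+1$ is handled identically, and so a flip is strictly improving in $G$ if and only if it is strictly improving in $\widehat{G}$.

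I do not expect any real obstacle. The argument hinges on two structural facts: $G$ is the complete graph on $V$, and legal flip moves from a bisection preserve the multiset $\{|A|,|B|\}$. Consequently, the bulk weight $n^9\widehat{w}_{\max}$, which is attached uniformly to every edge, tracks only the quantity $|A|\cdot|B|$, and therefore drops out of every cost \emph{difference} between neighboring bisections. The residual difference equals the corresponding difference in $\widehat{G}$, which proves the claim. Note that inequality~\eqref{lab-edge-weights} is not needed for this step; it will be used later when analyzing the actual \densestcut{} instance, whose denominator $|A|\cdot|B|$ is more sensitive to flips.
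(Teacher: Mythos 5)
Your proof is correct and rests on the same insight as the paper's: because $G$ is complete and every feasible solution is a bisection, the uniform bulk weight $n^9\widehat{w}_{\max}$ contributes the same fixed amount to every solution, so cost comparisons under $w$ and $\widehat{w}$ coincide. The paper states this globally (for every bisection, $|A|\cdot|B| = \frac{n+1}{2}\cdot\frac{n-1}{2}$, hence $w(A,B) = \frac{n+1}{2}\cdot\frac{n-1}{2}\cdot n^9\widehat{w}_{\max} + \widehat{w}(A,B)$), whereas you verify the cancellation per flip move; this is only a presentational difference, and your observation that inequality~\eqref{lab-edge-weights} is not needed here is also accurate.
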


\begin{claimproof}
        Recall that each solution~$(A,B)$ for~$G$ or~$\widehat{G}$ is a partition of~$V$ such that~$|\,|A| - |B|\,| = 1$.
        Let~$(A,B)$ be a solution for~$G$ or~$\widehat{G}$.
        By the above, $|A|\cdot |B| = \frac{n+1}{2} \cdot \frac{n-1}{2}$.
        Hence, $$w(A,B) = |A|\cdot |B| \cdot n^9 \cdot \widehat{w}_{\max} + \widehat{w}(A,B) = \frac{n+1}{2} \cdot \frac{n-1}{2} \cdot n^9 \cdot \widehat{w}_{\max} + \widehat{w}(A,B).$$ 
        This implies that a solution~$(A',B')$ for~$G$ or~$\widehat{G}$ improves over~$(A,B)$ with respect to~$w$ if and only if~$(A',B')$ improves over~$(A,B)$ with respect to~$\widehat{w}$.
        Consequently, $G$ and~$\widehat{G}$ share the same locally optimal solutions and the reduction is tight since the transition graphs of both instances are identical.
\end{claimproof}
        
\proofsubparagraph{Implications.} The bounds between~$w_{\min}$ and~$w_{\max}$ (see \Cref{lab-edge-weights}) are important to ensure that
    one can always improve over unbalanced partition by at least~$w_{\min}$ by flipping only a single vertex to the other part of the partition if the cardinality constraint is omitted.
    
    \begin{claim}\label{claim inbalance implies large improvement}
        Let~$X \subseteq V$ and let~$Y \coloneqq V \setminus X$.
        If~$|X| < |Y|-1$, then for each vertex~$v\in Y$, $w(X\cup \{v\},Y\setminus \{v\}) - w(X,Y) > w_{\min}$.
    \end{claim}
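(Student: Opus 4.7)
The plan is to directly bound the change in cut weight when flipping $v \in Y$ to $X$. Since $G$ is complete, the difference $\Delta := w(X \cup \{v\}, Y \setminus \{v\}) - w(X, Y)$ equals
\[
    \Delta = \sum_{u \in Y \setminus \{v\}} w(uv) - \sum_{u \in X} w(uv),
\]
because the edges incident to $v$ with endpoint in $Y \setminus \{v\}$ enter the cut while those with endpoint in $X$ leave it. Applying the crude estimates $w(uv) \geq w_{\min}$ and $w(uv) \leq w_{\max}$ term by term yields $\Delta \geq (|Y| - 1) w_{\min} - |X| w_{\max}$.

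Next I would exploit the parity of $n = |V|$. Because the original instance comes from \maxbisection{}, $n$ is odd, and since $|X| + |Y| = n$ the quantity $|Y| - |X|$ must be odd as well. Together with the hypothesis $|X| < |Y| - 1$ this forces $|Y| - |X| \geq 3$, so $|Y| - 1 \geq |X| + 2$. Substituting in gives
\[
    \Delta \geq (|X| + 2) w_{\min} - |X| w_{\max} = 2 w_{\min} - |X|\,(w_{\max} - w_{\min}).
\]

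Finally I would close the inequality using \Cref{lab-edge-weights}, which forces $w_{\min}$ and $w_{\max}$ to be extremely close. From $w_{\min} \geq (1 - 1/n^9) w_{\max}$ one obtains $w_{\max} - w_{\min} \leq w_{\max}/n^9 \leq 2 w_{\min}/n^9$ (since $w_{\max} \leq 2 w_{\min}$ whenever $n \geq 2$), and consequently $|X|\,(w_{\max} - w_{\min}) \leq 2 w_{\min}/n^8$ using $|X| \leq n$. Therefore $\Delta \geq 2 w_{\min} - 2 w_{\min}/n^8 > w_{\min}$, which is precisely the bound claimed.

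The only real subtlety lies in the parity step: without the observation that $n$ is odd, the hypothesis $|X| < |Y| - 1$ only yields $|Y| - 1 \geq |X| + 1$, which degrades the estimate to $\Delta \geq w_{\min} - |X|\,(w_{\max} - w_{\min})$, just short of $w_{\min}$. Gaining the extra additive unit from the oddness of $|Y| - |X|$ creates precisely the slack that the near-equality of $w_{\min}$ and $w_{\max}$, guaranteed by \Cref{lab-edge-weights}, cannot consume.
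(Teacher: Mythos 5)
Your proof is correct and uses essentially the same argument as the paper's: the oddness of $n$ upgrades $|X| < |Y|-1$ to $|Y|-|X|\ge 3$, which yields the gain of $2w_{\min}$, and then the near-equality of $w_{\min}$ and $w_{\max}$ from \Cref{lab-edge-weights} makes the error term involving $w_{\max}-w_{\min}$ negligible. The only (harmless) difference is bookkeeping: you track just the $n-1$ edges incident to $v$, giving an error of $|X|\,(w_{\max}-w_{\min})$, whereas the paper bounds the two full cut weights via $|E(X',Y')|\,w_{\min} - |E(X,Y)|\,w_{\max}$ and absorbs an error of $|E(X,Y)|\,(w_{\max}-w_{\min}) \le n^2\,(w_{\max}-w_{\min})$; both estimates comfortably close the gap.
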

    \begin{claimproof}
        Let~$v$ be an arbitrary vertex of~$Y$ and let~$X' \coloneqq X \cup \{v\}$ and~$Y' \coloneqq Y \setminus \{v\}$.
        Since~$|X| + |Y| = n$ is odd, $|X| < |Y|-1$ implies that~$|X| \leq |Y|-3$.
        Moreover, since~$G$ is a complete graph, 
        \begin{align*}
        |E(X',Y')| - |E(X,Y)| = |X'| \cdot |Y'| - |X| \cdot |Y| = |Y| - |X| - 1 \geq 2.
        \end{align*}
        This implies that:
        \begin{align*}
            w(X',Y') - w(X,Y) &\geq |E(X',Y')|\cdot w_{\min} -|E(X,Y)|\cdot w_{\max} \\
            &\geq (|E(X,Y)| + 2)\cdot w_{\min} -|E(X,Y)|\cdot w_{\max} \\
            &= 2\cdot w_{\min} - |E(X,Y)| \cdot (w_{\max} - w_{\min}) \\
            &\geq 2\cdot w_{\min} - |E(X,Y)| \cdot \left(\left(1+\frac{1}{n^9-1}\right) \cdot w_{\min} - w_{\min}\right) \\
            &= 2\cdot w_{\min} - |E(X,Y)| \cdot \frac{1}{n^9-1} \cdot w_{\min} \\
            &\geq 2\cdot w_{\min} - n^2 \cdot \frac{1}{n^9-1} \cdot w_{\min} \\
            &> 2\cdot w_{\min} -  w_{\min} = w_{\min}.
        \end{align*}
    \end{claimproof}  

\proofsubparagraph{Construction of the \densestcut{} instance.}    
    We now construct the instance~$G'\coloneqq (V',E')$ of~\densestcut{} as follows; see \Cref{fig max cut hardness}.
    To obtain~$G'$, we add a matching of size~$n^4$ with edge set~$E_M$ to~$G$.
    Hence, $V'\setminus V$ denotes the endpoints of the matching edges of~$E_M$.
    Finally, we obtain the weights on the edges of~$E'$ by extending the weight function~$w$:
    We define the weight of each edge of~$E'\setminus E = E_M$ as~$M \coloneqq n \cdot w_{\max}$.
    
    \proofsubparagraph{Correctness.}
    Next, we show that we can map each locally optimal solution of~$G'$ in polynomial time to a locally optimal solution of~$G$ (and thus also for~$\widehat{G}$, according to \Cref{claim-g-and-ghat-equivalent}).
    To this end, we first show that each locally optimal solution fulfills two helpful properties.
    First, we show that in a locally optimal solution all edges of~$E_M$ are contained in~$E(A,B)$.

    \begin{claim}\label{claim matching}
        Let~$(A,B)$ be a partition of~$G'$ such that at least one edge~$uv\in E_M$ is not contained in~$E(A,B)$.
        Then~$(A,B)$ is not locally optimal.
    \end{claim}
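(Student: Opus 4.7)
The plan is to show that whenever the claim's hypothesis holds, some single flip strictly increases the density $w(A,B)/(|A| \cdot |B|)$. Swapping $A$ and $B$ if needed, assume $u, v \in A$, and that both $A, B$ are nonempty. Write $W \coloneqq w(A,B)$, $a \coloneqq |A|$, $b \coloneqq |B|$, $a_V \coloneqq |A \cap V|$, $b_V \coloneqq |B \cap V|$, and let $k_{AA}, k_{AB}, k_{BB}$ count matching edges of $E_M$ having both endpoints in $A$, one endpoint in each part, and both endpoints in $B$, respectively; by hypothesis $k_{AA} \geq 1$.

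The easy case is $a \leq b+1$: flip $u$ to $B$. The numerator gains $M$ (since $u$'s only incident edge in $G'$ is $uv$) and the denominator $(a-1)(b+1) = ab + (a-b-1)$ does not grow, so the density strictly improves.

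The harder case $a \geq b + 2$ I plan to split on the structure of $E_M$. If $k_{BB} \geq 1$, flip a $BB$-matching vertex from $B$ to $A$: $W$ gains $M$ and the denominator drops to $(a+1)(b-1) = ab - (a-b+1) < ab$, so density improves. If $k_{BB} = 0$ but $k_{AB} \geq 1$, argue by contradiction: the non-improvement conditions for flipping $u$ (to $B$) and for flipping an $AB$-matching vertex $w \in B$ (to $A$) reduce, respectively, to $W(a-b-1) \geq abM$ and $W(a-b+1) \leq abM$. Since $a-b+1 > a-b-1 > 0$, these jointly force $W \leq 0$, contradicting $W \geq M > 0$ (which follows from $k_{AB} \geq 1$).

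The remaining sub-case $k_{AB} = k_{BB} = 0$ (so $k_{AA} = n^4$ and all $2n^4$ matching vertices lie in $A$) is the most delicate. Here $B \subseteq V$, $b = b_V$, and since no matching edge crosses the cut, $W$ consists solely of $G$-edges and satisfies $W \leq a_V b_V \cdot w_{\max} = a_V b_V M / n$. Flipping $u$ to $B$ is improving iff $abM > W(a-b-1)$. Substituting $a = a_V + 2n^4$, $b = b_V$, $a_V + b_V = n$, and the above bound on $W$, I expect an elementary manipulation to reduce the inequality to $2n^4 b_V + a_V(2b_V + 1) > 0$, which holds since $b_V \geq 1$. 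I anticipate this final algebraic step to be the main obstacle: it crucially exploits the choice $M = n \cdot w_{\max}$ from the auxiliary construction, which forces $W \leq nM$ rather than $\Theta(n^2 M)$, allowing the term $2 n^4 b_V \cdot M$ inside $abM$ to dominate $W(a-b-1)$ even when the imbalance $a - b$ is as large as $\Theta(n^4)$.
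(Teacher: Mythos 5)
Your argument is correct in substance but takes a genuinely different route from the paper's proof of \Cref{claim matching}. The paper normalizes $|A|\le |B|-1$, disposes immediately of the case where the monochromatic matching edge lies in the smaller part, and then splits on the structure of the smaller part ($|A|=1$ with $A\subseteq V$; $A\cap V=\emptyset$; $A\cap V\ne\emptyset$ and $|A|>1$), in the last case arguing ``either flipping $v$ into $A$ improves, or the inequality expressing that it does not forces a flip of some vertex of $V\cap A$ out of $A$ to improve.'' You instead normalize $u,v\in A$, split on the balance of $a$ versus $b$ and on where the matching endpoints sit, and your improving moves are flips of matching vertices except in the last sub-case; your either/or contradiction in the sub-case $k_{BB}=0$, $k_{AB}\ge 1$ mirrors the paper's Case~3 in structure but flips a matching vertex of $B$ rather than a $G$-vertex of $A$. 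Your final sub-case ($k_{AB}=k_{BB}=0$) replaces the paper's Cases~1--2 by a direct bound $W\le a_V b_V w_{\max}=a_V b_V M/n$; I checked the algebra and the reduction of $abM > W(a-b-1)$ to $2n^4 b_V + a_V(2b_V+1)>0$ is exactly right, so that step goes through. Like the paper's proof of this claim, you use only $M=n\cdot w_{\max}$ and the bound $w\le w_{\max}$, not the near-uniformity of weights from \Cref{lab-edge-weights}.

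One boundary case needs a patch. In the sub-case $k_{BB}=0$, $k_{AB}\ge 1$ you invoke the non-improvement condition for flipping an $A$--$B$ matching endpoint $w\in B$ into $A$. If $b=1$, i.e.\ $B=\{w\}$, this flip empties $B$ and is not a feasible neighbor (the paper explicitly uses that feasible solutions have $|A|,|B|\ge 1$), so local optimality does not yield your second inequality, and the configuration is genuinely possible ($B$ a single matching endpoint, everything else in $A$). As written, the contradiction does not fire there. The fix is one line: with $B=\{w\}$ a matching endpoint we have $W=M$ exactly, and the first non-improvement condition $abM\le W(a-b-1)$ becomes $aM\le (a-2)M$, which is false, so flipping $u$ already improves. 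With that addition, your case analysis is exhaustive and the claim is proved.
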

    
    \begin{claimproof}
        Since~$|V'|$ is odd, we can assume without loss of generality that~$|A| \leq |B|-1$.
        Essentially, the idea is to flip one endpoint of the edge~$uv$ to the other part of the partition.
        We show that either this is improving, or one can obtain a better partition by flipping some vertex of~$V$.
        
        First, we show that flipping one endpoint of the edge~$uv$ to the other part of the partition is improving if~$A$ contains both~$u$ and~$v$.
        More precisely, we show that~$(A\setminus \{v\},B \cup \{v\})$ improves over~$(A,B)$.
        Note that $$w(A\setminus \{v\},B \cup \{v\}) = w(A,B) + w(uv) > w(A,B).$$
        Moreover, since~$|A| \leq |B|-1$, we have $$|A\setminus \{v\}|\cdot |B \cup \{v\}| = |A| \cdot |B| - |B| + |A| - 1 < |A| \cdot |B|.$$
        Hence, we directly obtain~$\frac{w(A\setminus \{v\},B \cup \{v\})}{(|A|-1)\cdot (|B|+1)} > \frac{w(A,B)}{|A|\cdot |B|}$ which implies that~$(A,B)$ is not locally optimal.
        Hence, in the following, we may thus assume that~$B$ contains both~$u$ and~$v$.
        Note that~$|A|\geq 1$ by definition of feasible solutions of the problem.
        
        To prove the claim, we distinguish three cases. 
        In each of them, we analyze the (potentially non-positive) improvement of the partition~$(A' \coloneqq A \cup \{v\}, B' \coloneqq B \setminus \{v\})$ over~$(A,B)$,
        which is given by~$\frac{w(A', B')}{|A'|\cdot |B'|} - \frac{w(A,B)}{|A|\cdot |B|}$.
        To determine whether this improvement is positive, we equivalently
        analyze the quantity~$w(A', B') - \frac{|A'|\cdot |B'|}{|A|\cdot |B|} \cdot w(A,B)$.
        Since~$E(A',B') = E(A,B) \cup \{uv\}$, we obtain that~$w(A',B') = w(A,B) + M$.
        Hence, $w(A', B') - \frac{|A'|\cdot |B'|}{|A|\cdot |B|} \cdot w(A,B)$ can be rewritten as:
        \begin{align*}
            w(A', B') - \frac{|A'| \cdot |B'|}{|A|\cdot |B|}\cdot w(A,B) &= w(A , B) + M - \left(1 + \frac{|B| - |A| - 1}{|A|\cdot |B|}\right)\cdot w(A,B) \\
            &= M - \frac{|B| - |A| - 1}{|A|\cdot |B|}\cdot w(A,B).
        \end{align*}
        
        To show that~$(A,B)$ is not locally optimal, we are finally ready to distinguish three cases based on the size of~$A$ and the intersection of~$A$ and~$V$:
        Either~$A\cap V=\emptyset$ (Case~$2$), or~$A\cap V\ne \emptyset$.
        If~$A\cap V\ne \emptyset$, we additionally consider the size of~$A$.
        Note that this implies~$|A|\ge 1$. 
        More precisely, we consider~$|A|\ge 2$ (Case~$3$), and~$|A|=1$ (Case~$1$).
        Note that~$|A|=1$ and~$A\cap V\ne\emptyset$ implies~$A\subseteq V$.

        \begin{description}
            \item[Case 1:~$|A| = 1$ and~$A \subseteq V$.]
                We show that~$(A',B')$ improves over~$(A,B)$, that is, we show that~$M - \frac{|B| - |A| - 1}{|A|\cdot |B|}\cdot w(A,B) > 0$.
                By the assumption of the case, we have~$|A| = 1$, $|B|-|A|-1 = |V'| - 2$, and~$|A| \cdot |B| = |V'|-1$.
                Hence, it suffices to show that~$M > w(A,B)$.
                
                Since the unique vertex of~$A$ is from~$V$, we see that
                $E(A,B)$ contains the~$n-1$ edges between the unique vertex of~$A$ and each vertex of~$V\setminus A$.
                Since each such edge has weight at most~$w_{\max}$, $w(A,B) < n\cdot w_{\max} = M$.
                Hence, $M > w(A,B)$, which implies that flipping vertex~$v$ from~$B$ to~$A$ is improving.
            
            \item[Case 2:~$A \cap V  = \emptyset$.]
                Again, we show that~$(A',B')$ improves over~$(A,B)$, that is, we show that~$M - \frac{|B| - |A| - 1}{|A|\cdot |B|}\cdot w(A,B) > 0$.
                
                Note that since~$A$ contains no vertex of~$V$, we have $E(A,B) \subseteq E_M$.
                Moreover, since each vertex of~$V'\setminus V$ is incident with only one edge, $|E(A,B)| \leq \min(|A|,|B|) = |A|$.
                These two facts imply that~$w(A,B) \leq |E(A,B)| \cdot M = |A| \cdot M$.
                Consequently: 
                \begin{align*}
                    M - \frac{|B|-|A|-1}{|A|\cdot |B|} \cdot w(A,B) &\geq M - \frac{|B|-|A|-1}{|A|\cdot |B|} \cdot |A| \cdot M\\
                        &= M - \frac{|B|-|A|-1}{|B|} \cdot M >  0.
                \end{align*}
                Hence, flipping vertex~$v$ from~$B$ to~$A$ is improving.
        
            \item[Case 3:~$A\cap V \neq \emptyset$, and~$|A| > 1$.]
                We show that one can obtain a better solution by flipping the vertex~$v$ from~$B$ to~$A$ or by flipping any vertex~$x$ of~$V\cap A$ from~$A$ to~$B$.
                
                If~$(A',B')$ improves over~$(A,B)$, then~$(A,B)$ is not locally optimal and the statement holds.
                Hence, we assume in the following that this is not the case, that is, $\frac{w(A',B')}{|A'|\cdot |B'|} - \frac{w(A,B)}{|A|\cdot |B|} \leq 0$.
                This implies that~$M - \frac{|B| - |A| - 1}{|A|\cdot |B|}\cdot w(A,B) \leq 0$, which is equivalent to
                \begin{align}\label{eq flip the other way}
                    \frac{|B| - |A| - 1}{|A|\cdot |B|}\cdot w(A,B) \geq M.
                \end{align}
                We show that this implies that flipping an arbitrary vertex~$x$ of~$V\cap A$ from~$A$ to~$B$ is improving.
                Let~$A'' \coloneqq A\setminus \{x\}$ and~$B'' \coloneqq B \cup \{x\}$.
                We show that~$\frac{w(A'', B'')}{(|A|-1)\cdot (|B|+1)} - \frac{w(A,B)}{|A|\cdot |B|} > 0$.
                It is equivalent to show that~$w(A'', B'') - \frac{(|A|-1)\cdot (|B|+1)}{|A|\cdot |B|} \cdot w(A,B) > 0$.
                
                Note that $$E(A,B) \setminus  E(A'',B'') = E(\{x\}, B) = E(\{x\}, B \cap V).$$
                Hence, $$w(A'',B'') \geq w(A,B) - (n-1)\cdot w_{\max} > w(A,B) - M.$$
                Consequently, 
                \begin{align*}
                    w(A'', B'') &- \frac{(|A|-1)\cdot (|B|+1)}{|A|\cdot |B|} \cdot w(A,B) \\
                        &>  w(A,B) - M - \left( 1+ \frac{|A|-|B|-1}{|A|\cdot |B|}\right) \cdot w(A,B) \\
                        &= -M - \frac{|A|-|B|-1}{|A|\cdot |B|} \cdot w(A,B) \\
                        &= \frac{|B|-|A|+1}{|A|\cdot |B|} \cdot w(A,B) - M \\
                        &\stackrel{\text{(\ref{eq flip the other way})}}{\geq} \frac{|B|-|A|+1}{|A|\cdot |B|} \cdot w(A,B) - \frac{|B|-|A|-1}{|A|\cdot |B|} \cdot w(A,B) \\
                        &= \frac{2}{|A|\cdot |B|} \cdot w(A,B) > 0.
                \end{align*}
                Hence, $(A,B)$ is not locally optimal.
        \end{description}
    \end{claimproof}

    In the following, we thus assume that for each partition~$(A,B)$ of~$G'$, the cut
    $E(A,B)$ contains all edges of~$E_M$.
    Hence, both~$A$ and~$B$ have size at least~$n^4$.
    We exploit the fact that both~$A$ and~$B$ are large to verify that in a locally optimal solution of~$G'$ 
    the vertices of~$G$ are balanced for both partite sets.
    
    \begin{claim}\label{claim balance}
        Let~$(A,B)$ be a partition of~$G'$ with~$E_M \subseteq E(A,B)$, such that~$|B| - |A| > 1$.
        Then, $(A,B)$ is not locally optimal.
        More precisely, for each vertex~$v\in V \cap B$, the partition~$(A\cup \{v\},B\setminus \{v\})$ improves over the partition~$(A,B)$.
    \end{claim}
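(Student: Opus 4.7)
The plan is to fix an arbitrary $v \in V \cap B$, set $A' \coloneqq A \cup \{v\}$ and $B' \coloneqq B \setminus \{v\}$, and verify directly that $\frac{w(A',B')}{|A'|\cdot|B'|} > \frac{w(A,B)}{|A|\cdot|B|}$. Since the argument will not depend on the choice of $v$, this simultaneously establishes the ``more precisely'' part of the statement. The proof splits into two ingredients: a lower bound on the numerator gain $w(A',B') - w(A,B)$, and an upper bound on the density penalty coming from $|A'|\cdot|B'| > |A|\cdot|B|$.

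For the first ingredient, I observe that $E_M \subseteq E(A,B)$ forces each of $A$ and $B$ to contain exactly one endpoint of every matching edge, so $|A \cap V| = |A| - n^4$ and $|B \cap V| = |B| - n^4$. Consequently, $|B \cap V| - |A \cap V| = |B| - |A| > 1$, which puts the partition $(A \cap V, B \cap V)$ of $V$ in the graph $G$ into the setting of \Cref{claim inbalance implies large improvement}. Since $v \in V$ is not incident to any matching edge, flipping $v$ in $G'$ and flipping $v$ in $G$ change the cut weight by the same amount, which the cited claim bounds from below by $w_{\min}$. Hence $w(A',B') - w(A,B) > w_{\min}$.

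For the second ingredient, a direct computation gives $|A'|\cdot|B'| - |A|\cdot|B| = |B| - |A| - 1 \geq 1$, and the desired density inequality rearranges to $w(A',B') - w(A,B) > w(A,B) \cdot \frac{|B|-|A|-1}{|A|\cdot|B|}$. It is therefore enough to show $w_{\min} \geq w(A,B) \cdot \frac{|B|-|A|-1}{|A|\cdot|B|}$. Crude bounds suffice here: $|A|, |B| \geq n^4$ gives $|A|\cdot|B| \geq n^8$; the total size $|V'| = 2n^4 + n$ yields $|B| - |A| - 1 < n$; and $w(A,B) \leq n^4 \cdot M + \binom{n}{2}\cdot w_{\max} \leq 2n^5 \cdot w_{\max}$, recalling that $M = n \cdot w_{\max}$. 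Thus the right-hand side is $O(w_{\max}/n^2)$, which is easily dominated by $w_{\min} \geq (1 - n^{-9}) w_{\max}$ from (\ref{lab-edge-weights}).

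The main obstacle is purely bookkeeping, namely carrying out these numerical bounds with enough slack. The design choice $M = n \cdot w_{\max}$ is crucial in this step: the matching edges contribute up to $n^4 \cdot M = n^5 \cdot w_{\max}$ to $w(A,B)$, so a substantially larger $M$ would risk letting the denominator penalty overwhelm the single-flip gain of $w_{\min}$, while a smaller $M$ might fail to guarantee $E_M \subseteq E(A,B)$ in \Cref{claim matching}. Together with the auxiliary instance's tight control on $w_{\max}/w_{\min}$ via (\ref{lab-edge-weights}), this is exactly what makes the two ingredients fit together.
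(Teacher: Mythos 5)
Your proposal is correct and takes essentially the same route as the paper's proof: it reduces the single-flip gain to \Cref{claim inbalance implies large improvement} applied to $(A\cap V, B\cap V)$, giving $w(A\cup\{v\},B\setminus\{v\})-w(A,B)>w_{\min}$, and then bounds the density penalty $w(A,B)\cdot\frac{|B|-|A|-1}{|A|\cdot|B|}$ by $w_{\min}$ using the same crude estimates ($|A|\cdot|B|\ge n^8$, $|B|-|A|-1<n$, $w(A,B)\le n^4M+n^2 w_{\max}$, $M=n\cdot w_{\max}$, and $w_{\max}\le 2w_{\min}$ from \cref{lab-edge-weights}) that the paper employs.
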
 
    
    \begin{claimproof}
        Let~$X\coloneqq A \cap V$ and let~$Y \coloneqq B \cap V$.
        Note that since~$E(A,B)$ contains all edges of~$E_M$, we have $|A\setminus V| = |B\setminus V| = n^4$.
        Hence, $|B| - |A| > 1$ implies that~$|Y| - |X| > 1$.
        Let~$v$ be an arbitrary vertex of~$Y$.
        We show that flipping vertex~$v$ from~$B$ to~$A$ is improving. 
        
        To this end, let~$X' \coloneqq X \cup \{v\}$ and~$Y' \coloneqq Y \setminus \{v\}$.
        We show that~$\frac{w(A\cup \{v\}, B \setminus \{v\})}{(|A| + 1) \cdot (|B| - 1)} - \frac{w(A,B)}{|A| \cdot |B|} > 0$.
        It is equivalent to show that~$$w(A\cup \{v\}, B \setminus \{v\}) - \frac{(|A| + 1) \cdot (|B| - 1)}{|A| \cdot |B|} \cdot w(A,B) > 0.$$
        Note that~$$w(A,B) = w(A\setminus V,B\setminus V) + w(X,Y) = n^4 \cdot M + w(X,Y).$$
        %
        Moreover, since~$|X| < |Y| - 1$, \Cref{claim inbalance implies large improvement} implies that~$w(X',Y') > w(X,Y) + w_{\min}$.
        This implies that 
        \begin{align*}
            w(A\cup \{v\}, B \setminus \{v\}) &= w(A\setminus V,B\setminus V) + w(X',Y') \\
            &= n^4 \cdot M + w(X',Y') > n^4 \cdot M + w(X,Y) + w_{\min}.
        \end{align*}
        
        Since~$|A| = n^4 + |X|$ and~$|B| = n^4 + |Y|$, we thus derive:
        \begin{align*}
            w(A &\cup \{v\}, B \setminus \{v\}) - \frac{(|A| + 1) \cdot (|B| - 1)}{|A| \cdot |B|} \cdot w(A,B)\\
            & > n^4 \cdot M + w(X,Y) + w_{\min} - \frac{(|A| + 1) \cdot (|B| - 1)}{|A| \cdot |B|}\cdot (n^4 \cdot M + w(X,Y)) \\
            &= n^4 \cdot M + w(X,Y) + w_{\min} - \left(1 + \frac{|B| - |A| - 1}{|A| \cdot |B|})\cdot (n^4 \cdot M + w(X,Y)\right)\\
            &= w_{\min} - \frac{|B| - |A| - 1}{|A| \cdot |B|}\cdot (n^4 \cdot M + w(X,Y))\\
            &= w_{\min} - \frac{|Y| - |X| - 1}{(n^4 + |X|) \cdot (n^4 + |Y|)}\cdot (n^4 \cdot M + w(X,Y)) \\
            &\geq  w_{\min} - \frac{n}{n^8}\cdot (n^4 \cdot M + w(X,Y)) \\
            &>   w_{\min} - \frac{1}{n^7}\cdot (n^4 \cdot M + n^2 \cdot w_{\max}).
        \end{align*}
        Here, the inequality~$n^2\cdot w_{\max}\ge w(X,Y)$ follows from the fact that there are at most $n^2$~edges between~$X$ and~$Y$, each with weight at most~$w_{\max}$.
        Now, since~$M = n\cdot w_{\max}$ and~$w_{\max} \leq \frac{n^9}{n^9-1} \cdot w_{\min} \leq 2 \cdot w_{\min}$ (see \Cref{lab-edge-weights}), we conclude:
        \begin{align*}
            w_{\min} - \frac{1}{n^7}\cdot (n^4 \cdot M + n^2 \cdot w_{\max}) &= w_{\min} - \frac{1}{n^7}\cdot (n^5 \cdot w_{\max} + n^2 \cdot w_{\max}) \\
            &= w_{\min} - \frac{1}{n^7}\cdot w_{\max} \cdot (n^5 + n^2) \\
            &\geq  w_{\min} - \frac{1}{n^7}\cdot w_{\min} \cdot 2 \cdot (n^5 + n^2) \\
            &\geq  w_{\min} - \frac{1}{n^7}\cdot w_{\min} \cdot n^6  > 0.
        \end{align*}
        Hence, flipping an arbitrary vertex of~$V\cap Y$ from~$B$ to~$A$ provides a better solution.
    \end{claimproof}
    
    As a consequence, \Cref{claim matching,claim balance} imply that each locally optimal solution~$(A,B)$ for~$G'$ fulfills~$E_M \subseteq E(A,B)$ and~$|\,|A\cap V| - |B\cap V|\,| = 1$.
    Hence, $(A\cap V, B\cap V)$ is a feasible solution for the instance~$G$ of~\maxbisection{}.
    
    \proofsubparagraph{Mapping of locally optimal solutions.}
    We are now ready to show that for each solution~$(A,B)$ for~$G'$ with~$E_M \subseteq E(A,B)$ and~$|\,|A\cap V| - |B\cap V|\,| = 1$, $(X\coloneqq A \cap V,Y\coloneqq B \cap V)$ is a locally optimal solution for~$G$ if~$(A,B)$ is a locally optimal solution for~$G'$.
    We show this statement by contraposition, that is, we show that if~$(X,Y)$ is not locally optimal, then~$(A,B)$ is not locally optimal.
    Assume without loss of generality that~$|X| = |Y|-1$.
    Since~$(X,Y)$ is not a locally optimal solution for~$G$, there exists a vertex~$v\in Y$ such that flipping~$v$ from~$Y$ to~$X$ provides a better solution, that is, $w(X\cup \{v\}, Y\setminus \{v\}) - w(X, Y)> 0$.
    Consider the partition~$(A',B')$ with~$A' \coloneqq A \cup \{v\}$ and~$B' \coloneqq B \setminus \{v\}$.
    Since~$|B| = |A| + 1$, this implies that~$|A| \cdot |B| = |A'| \cdot |B'|$.
    Hence, the improvement of~$(A',B')$ over~$(A,B)$ is:
    \begin{align*}
        \frac{w(A',B')}{|A'|\cdot|B'|} - \frac{w(A,B)}{|A|\cdot|B|} &= \frac{w(A',B') - w(A,B)}{|A|\cdot|B|}\\ 
            & = \frac{n^4 \cdot M + w(X\cup \{v\}, Y\setminus \{v\}) - n^4 \cdot M - w(X,Y)}{|A|\cdot|B|} \\
            &= \frac{w(X\cup \{v\}, Y\setminus \{v\}) - w(X,Y)}{|A|\cdot|B|}.
    \end{align*}
    This improvement is positive by the assumption that~$w(X\cup \{v\}, Y\setminus \{v\}) - w(X, Y)> 0$.
    Hence, $(A,B)$ is not a locally optimal solution for~$G'$.
    
    Consequently, for each locally optimal solution~$(A,B)$ for~$G'$, $(A\cap V, B\cap V)$ is a locally optimal solution for~$G$.   

    \proofsubparagraph{Tightness.}
    We obtain tightness of the reduction by setting~$\R$ to be the set of all partitions~$(A,B)$ of~$G'$ that
    fulfill~$E_M \subseteq E(A,B)$ and~$|\,|A| - |B|\,| = 1$.
    Note that by the above, this implies that~$\R$ contains all local optima for~$G'$.
    Moreover, for each bisection~$(X,Y)$ of~$V$, $\R$ contains at least one solution~$(A,B)$ with~$A\cap V = X$ and~$B\cap V = Y$.
    Hence, the first two properties of a tight~\PLS-reduction follow.
    
    It thus remains to show the final property of a tight~\PLS-reduction. 
   To this end, we first show that for each solution in~$\R$ all its improving neighbors are also contained in~$\R$.
    Let~$(A,B) \in \R$ be a solution for~$G'$ such that~$|B| = |A| + 1$.
    We consider the possible flip-neighbors of~$(A,B)$ that are not in~$\R$.
    These are the partitions~$(A',B')$ for which one of the following holds:
\begin{enumerate}[(a)]
   \item  there is some endpoint~$v$ of some edge of~$E_M$ such that~$v\in B$, $A' \coloneqq A \cup \{v\}$, and~$B' \coloneqq B\setminus \{v\}$;
    \item there is some endpoint~$v$ of some edge of~$E_M$ such that~$v\in A$, $A' \coloneqq A \setminus \{v\}$, and~$B' \coloneqq B\cup \{v\}$;
    or
    \item there is some endpoint~$v\in V \cap A$, such that~$A' \coloneqq A \setminus \{v\}$ and~$B'\coloneqq B\cup \{v\}$.
\end{enumerate}
    
    Note that for~(a): $|A'| \cdot|B'| = |A| \cdot |B|$ and thus, the denominator in the objective functions of both
    solutions stays the same.
    On the other hand, $E(A',B')$ is equal to~$E(A,B)$ minus the weight~$M>0$ of the unique edge incident with~$v$.
    Consequently, $(A',B')$ is not an improving flip-neighbor of~$(A,B)$.
    It thus remains to show that for~(b) and~(c), $(A',B')$ is also not an improving flip-neighbor of~$(A,B)$.
    
    In cases~(b) and~(c) the denominator in the objective function is equal.
    For~(c) the numerator is strictly larger than the numerator of~(b).
    This is the case, since in~(b), the cut loses an edge of weight~$M > n \cdot w_{\max}>0$, whereas in~(c), the
    cut loses the weight of at most~$n$ edges of weight at most~$w_{\max}$ each.
    Hence, it suffices to show that for~(c), $(A',B')$ is not improving over~$(A,B)$.
    
    Due to~\Cref{claim balance}, $(A',B')$ can be improved by flipping any vertex of~$V$ from~$B'$ to~$A'$.
    This includes flipping the vertex~$v$ back from~$B'$ to~$A'$.
    Consequently~$(A',B')$ is not improving over~$(A,B)$.
    This implies that for each solution of~$\R$ all improving flip-neighbor are in~$\R$ as well.
    
    Concluding, to prove the third property of a tight~\PLS-reduction, it suffices to show that for each arc~$(s,s')$ in the transition graph for~$G'$ with~$s = (A,B)\in \R$ and~$s'=(A',B')\in \R$, there is an arc between~$(A\cap V,B\cap V)$ and~$(A'\cap V, B'\cap V)$ in the transition graph for~$G$.
    This follows from the same argumentation as the above proof that each solution~$(A,B)$ is locally optimal for~$G'$
    if and only if~$(A\cap V, B\cap V)$ is locally optimal for~$G$.
    Consequently, the described reduction is tight.
\end{proof}

\fi

Next, we show that also the closely related \textsc{Sparsest Cut} is \PLS{}-complete under the \textsc{Flip} neighborhood.
Note that \textsc{Densest Cut} and \textsc{Sparsest Cut} are both \NP-hard~\cite{MatulaS90}.
\textsc{Sparsest Cut} is studied intensively in terms of approximation algorithms~\cite{AroraRV09} and integrality 
gaps~\cite{KaneM13}, and is used to reveal the hierarchical community structure of social networks~\cite{MannMO08}
and in image segmentation~\cite{ShiM00}.

\begin{corollary}\label{cor:sparsest-cut}
    There exists a tight \PLS{}-reduction from \densestcut{}  to \sparsestcut{}.
\end{corollary}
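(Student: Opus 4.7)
The plan is to exploit the fact that the objective of \textsc{Sparsest Cut} is simply the minimization version of the same ratio $w(A,B)/(|A|\cdot|B|)$ that \textsc{Densest Cut} maximizes, so that a straightforward ``complement'' of edge weights on the complete graph yields a tight reduction.

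Let $(G,w)$ with $G = (V,E)$ be an instance of \densestcut{}. I would construct the instance $(G', w')$ of \sparsestcut{} where $G'$ is the complete graph on $V$ and, setting $w_{\max} = \max_{e \in E} w(e)$ and $M = w_{\max} + 1$, I define $w'(uv) = M - w(uv)$ for all $uv \in E$ and $w'(uv) = M$ for all $uv \in E(G') \setminus E$. All edge weights $w'(uv)$ are positive integers. The solution map $g$ is the identity on partitions: any cut $(A,B)$ of $V$ is a feasible solution to both instances.

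The key calculation is that for every cut $(A,B)$ of $V$ we have $w'(A,B) = M\cdot|A|\cdot|B| - w(A,B)$, since $G'$ is complete and thus $|E(A,B)| = |A|\cdot|B|$ in $G'$. Dividing by $|A|\cdot|B|$ gives
\[
\frac{w'(A,B)}{|A|\cdot|B|} = M - \frac{w(A,B)}{|A|\cdot|B|}.
\]
Consequently, for any two cuts $(A,B)$ and $(A',B')$ of $V$, the sparsest-cut objective strictly decreases from the former to the latter if and only if the densest-cut objective strictly increases. In particular, a cut is \flip-locally optimal for $(G',w')$ with respect to \sparsestcut{} if and only if it is \flip-locally optimal for $(G,w)$ with respect to \densestcut{}, which yields the required \PLS{}-reduction.

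For tightness, I would take $\R$ to be the entire set of feasible solutions of $(G',w')$. Properties~1 and~2 of tight reductions are immediate since $g$ is the identity and $\R$ contains every partition. For property~3, by the equivalence above the transition graph of $(G',w')$ is isomorphic to the transition graph of $(G,w)$ via $g$, so any edge $(s, s')$ in the former corresponds to an edge between $g(s) = s$ and $g(s') = s'$ in the latter, with no internal vertices to worry about. I do not anticipate any significant obstacle here: the main thing to get right is the bookkeeping on $M$ so that all weights remain positive integers, and the observation that completing $G$ to $G'$ with weight-$M$ edges changes neither the denominator of the ratio nor the structure of the \textsc{Flip} neighborhood.
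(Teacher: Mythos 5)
Your proposal is correct and follows essentially the same route as the paper: complement the weights on the complete graph so that the ratio becomes $M - w(A,B)/(|A|\cdot|B|)$, making the two instances share local optima and identical transition graphs, which gives tightness immediately. The only (immaterial) difference is your choice $M = w_{\max}+1$ instead of $M = w_{\max}$; non-negative weights suffice, so either works.
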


\iflong

\begin{proof}
    Let~$G=(V,E)$ with weight function~$w$ be an instance of \densestcut{}.
    Without loss of generality we assume that~$G$ is a complete graph by setting~$w(uv)\coloneqq 0$ for each non-edge of~$G$.
    Let~$M$ be the maximum weight of any edge.
    The instance of \sparsestcut{} is constructed as follows: 
    The graph~$G'$ has the same vertex set~$V$ as~$G$, and we 
    set~$w'(uv)\coloneqq  M -w(uv)$ for each edge~$uv\in E$.
    
    Observe that for a partition~$(X,Y)$ of the vertices of~$G'$ we obtain that 
    $$\frac{w'(X,Y)}{|X|\cdot |Y|}=\frac{M\cdot |X|\cdot |Y|-w(X,Y)}{|X|\cdot |Y|}= M -\frac{w(X,Y)}{|X|\cdot |Y|}.$$
    Hence, for any two partitions~$(X,Y)$ and~$(A,B)$, $\frac{w'(X,Y)}{|X|\cdot |Y|} < \frac{w'(A,B)}{|A|\cdot |B|}$ if and only if~$\frac{w(X,Y)}{|X|\cdot |Y|} > \frac{w(A,B)}{|A|\cdot |B|}$.
    Thus, both problem instances share the same locally optimal solutions, which proves the correctness of the reduction.
    Tightness follows by observing that the two instances have the same transition graphs.
\end{proof}

\fi

\iflong 
We are now ready to prove the second step, 
\else
The penultimate step is to show
\fi
that \hartigan{} is \PLS{}-\iflong complete\else hard\fi. 
\iflong
Note that in
the following reductions, as well as in
\Cref{lemma:bisection_to_emaxcut,lemma:bisection_to_emaxcut_nosquare}, we construct
sets of points $\X \subseteq\mathds{R}^d$ as instances to the clustering problems. This makes
the reductions easier to reason about. The point sets we construct have the property
that for each $x, y\in \X$, it holds that $\|x - y\|^2 \in \mathds{Z}$ (or $\|x - y\| \in \mathds{Z}$
in the case of \Cref{lemma:bisection_to_emaxcut_nosquare}). Thus, these point sets
represent valid instances of each clustering problem.
\else
We achieve this by modifying a proof of \NP{}-hardness of 
\textsc{$2$-Means} by Alois et al.\ \cite{ADHP09}.
\fi

\iflong

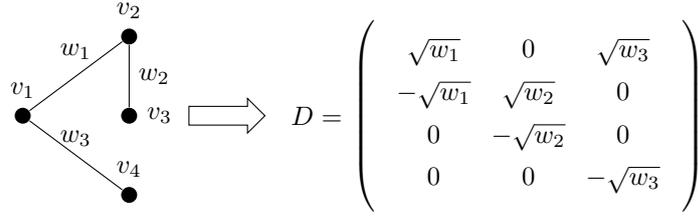
\begin{figure}[t]
\centering
\begin{tikzpicture}[scale=0.7]
    \vertex[label=$v_1$](v1) at (-0.5, 0) {};
    \vertex[label=$v_2$](v2) at (1.5, 1.5) {};
    \vertex[label=right:$v_3$](v3) at (1.5, 0) {};
    \vertex[label=$v_4$](v4) at (1.5, -1.5) {};

    \draw (v1) -- node[above=3pt] {$w_1$} ++ (v2);
    \draw (v2) -- node[right] {$w_2$} ++ (v3);
    \draw (v1) -- node [above] {$w_3$} ++ (v4);

    \coordinate(x) at (2.5, 0) {};
    \coordinate(y) at (3.5, 0) {};

    \node (arrow) [single arrow, draw=black,
      minimum width = 10pt, single arrow head extend=3pt,
      minimum height=10mm] at (3.25, 0) {};

    \node (D) [right=6pt of arrow] {$D=$};
    \node (matrix) [matrix of math nodes,
                right=10pt of D,
                left delimiter = (,right delimiter = ),
                row sep=-1pt,
                column sep = 1pt] {
                    \sqrt{w_1} & 0 & \sqrt{w_3} \\
                    -\sqrt{w_1} & \sqrt{w_2} & 0 \\
                    0 & -\sqrt{w_2} & 0 \\
                    0 & 0 & -\sqrt{w_3} \\
                };
\end{tikzpicture}

\caption{A simple \textsc{Densest Cut} instance.
The matrix $D$ on the right is constructed from the graph on the left in
the proof of \Cref{lemma:densestcut_to_hartigan}. The rows of $D$ correspond to points
in an instance of \hartigan{}. 
The actual instance of \textsc{2-Means} does not use these
points directly, as we only have to encode the weights, which are squared
distances and integral in this reduction.}
\label{fig:hartigan}
\end{figure}

\fi

\begin{lemma}\label{lemma:densestcut_to_hartigan}
    There exists a tight \PLS{}-reduction from \densestcut{} without isolated vertices to \hartigan{}.
\end{lemma}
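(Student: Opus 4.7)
The construction is the one suggested by the figure: represent the \densestcut{} instance $(G,w)$ with $G=(V,E)$ and $n=|V|$ by an $n\times |E|$ matrix $D$ whose rows are the coordinates of the points. For each edge $e=uv\in E$, the column corresponding to $e$ has entry $+\sqrt{w(e)}$ in row $u$, $-\sqrt{w(e)}$ in row $v$ (for a fixed arbitrary orientation), and $0$ in every other row. A direct computation then shows that for every pair of distinct vertices
\[
\|p_u-p_v\|^2 \;=\; d_w(u) + d_w(v) + 2w(uv),
\]
where $d_w(v)\coloneqq\sum_{e\ni v}w(e)$ and $w(uv)\coloneqq 0$ if $uv\notin E$. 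The right-hand side is a non-negative integer, so the squared-distance matrix forms a valid complete-graph \hartigan{} instance in the graph-theoretic formulation. Since $G$ has no isolated vertices, $d_w(v)>0$ for every $v$, which ensures that all constructed points are distinct.

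The heart of the proof is an algebraic identity between the $2$-Means cost and the Densest Cut objective. Using the standard identity $\sum_{x\in C}\|x-\cm(C)\|^2 = \frac{1}{|C|}\sum_{\{u,v\}\subseteq C}\|p_u-p_v\|^2$ together with $\sum_{u\in C_i}d_w(u) = 2w(C_i) + w(C_1,C_2)$, I would compute directly that for every partition $(C_1,C_2)$ of $V$ and $W_E\coloneqq\sum_{e\in E}w(e)$,
\[
\sum_{i=1}^{2}\frac{1}{|C_i|}\sum_{\{u,v\}\subseteq C_i}\|p_u-p_v\|^2 \;=\; 2W_E \;-\; n\cdot\frac{w(C_1,C_2)}{|C_1|\cdot|C_2|}.
\]
The within-cluster $d_w$-contributions should cancel neatly and leave the Densest Cut objective as the only non-constant term. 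This identity is the main technical step; everything else is bookkeeping.

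From the identity, the reduction is straightforward. Let $f$ output the constructed point set and $g$ return the partition $(C_1,C_2)$ unchanged. Since $2W_E$ and $n$ depend only on the instance, strictly decreasing the $2$-Means cost is equivalent to strictly increasing $w(C_1,C_2)/(|C_1|\cdot|C_2|)$. Moreover, the \textsc{Hartigan--Wong} neighborhood is syntactically the same single-element flip as the \textsc{Flip} neighborhood on Densest Cut, with the same non-empty-part feasibility condition on both sides. Hence a flip is improving for the $2$-Means instance if and only if it is improving for the Densest Cut instance, so local minima map to local maxima. This establishes $(f,g)$ as a valid \PLS-reduction.

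Tightness is immediate. Choose $\R$ to be the entire set of feasible $2$-Means partitions of the constructed instance. Property 1 of tight reductions holds because $\R$ contains all local optima, property 2 because $g$ is the identity on partitions, and property 3 because the neighborhoods coincide and the transition graphs of $f(x)$ and $x$ are in fact isomorphic under $g$.
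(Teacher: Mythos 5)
Your construction and key identity coincide with the paper's: the same $\pm\sqrt{w(e)}$ incidence-matrix embedding, and the same computation showing that the $2$-means cost of a partition $(C_1,C_2)$ with both parts non-empty equals a constant minus $n\cdot\frac{w(C_1,C_2)}{|C_1|\cdot|C_2|}$ (your additive constant $2w(E)$ is in fact the correct one). Up to that point the argument matches the paper.

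The genuine gap is your treatment of clusterings with an empty cluster. You dismiss them by asserting that \hartigan{} has "the same non-empty-part feasibility condition" as \densestcut{}, so that the transition graphs are isomorphic and you may take $\R$ to be \emph{all} solutions. In the formulation used here (and as the Hartigan--Wong/\flip{} move is defined: any single point may be moved to the other cluster, possibly emptying it), partitions with an empty cluster \emph{are} feasible solutions of the reduced instance, and they correspond to no feasible \densestcut{} solution, while your cost identity does not even apply to them. Hence the reduction must additionally show that such clusterings are never locally optimal --- this is exactly where the hypothesis "without isolated vertices" is needed: if every vertex has an incident edge of positive weight, then in the clustering $(\emptyset, V)$ the centroid of $V$ is the origin and any point $v$ is at positive distance from it, so moving $v$ into the empty cluster strictly improves the cost. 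Your proposal uses the no-isolated-vertices assumption only to argue the points are distinct, which is a different (and minor) issue, and this is a strong sign the case was overlooked. Consequently your tightness argument also needs repair: the transition graphs are not isomorphic; instead one takes $\R$ to be the clusterings with both clusters non-empty, notes that $\R$ contains all local optima (by the empty-cluster argument above), that every \densestcut{} solution lifts into $\R$, and that leaving $\R$ strictly worsens the cost, so the transition graph restricted to $\R$ agrees with that of the original instance. With that additional case analysis your proof becomes the paper's proof.
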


\iflong

\begin{proof}
    Our reduction is based on an \NP-hardness reduction from \textsc{Densest Cut} to \textsc{2-Means}~\cite{ADHP09}.
    Our construction is basically the same as the one used by Aloise et al.~\cite{ADHP09}. 
    The only difference is that they reduced from the unweighted variant of \textsc{Densest Cut}.
    Since we want to show a \PLS{}-reduction, we have to incorporate edge weights.
    
    Let~$G=(V,E)$ be an instance of \densestcut{} where each vertex is incident with at least one edge.
    We construct an equivalent instance of \hartigan{} as follows:
    For each vertex~$v\in V$ we add a point in~$\mathds{R}^{|E|}$.
    Abusing notation, we may also denote this corresponding point by~$v$.
    In other words, all points together correspond to a $|V|\times |E|$~matrix~$M$.
    For each edge~$e=xy$ the entries of~$M$ are defined as follows:
    For each vertex~$v\not\in e$, we set~$M(v,e)\coloneqq 0$, and we set~$M(x,e)\coloneqq \sqrt{w(e)}$ and~$M(y,e)\coloneqq -\sqrt{w(e)}$.
    Note that it is not important which of the entries~$M(x,e)$ and~$M(y,e)$ has positive sign and which has negative sign; it is only important that they are different.
    Also, note that the only distinction in our construction to the one used by Aloise et al.~\cite{ADHP09} is that we set the 
    matrix entries to~$\pm\sqrt{w(e)}$. In contrast, Aloise et al. set~\cite{ADHP09} the entries to~$\pm 1$ (see \Cref{fig:hartigan}).
    
    \proofsubparagraph{Correctness.}
    We exploit the following observation.
    Let~$(Q,R)$ be a clustering of~$V$ and let~$q=|Q|$, and~$r=|R|$.
    The coordinates of the centroids~$c_Q$ of~$Q$ and~$c_R$ of~$R$ can be computed as the mean of all points in 
    the respective cluster as follows (by~$c_Q(e)$ and~$c_R(e)$ we denote the $e^\text{th}$
    coordinate of~$c_Q$ and~$c_R$ respectively):
    If~$e\in E(Q,R)$, then~$c_Q(e)= \pm\frac{\sqrt{w(e)}}{q}$ 
    and~$c_R(e)= \mp\frac{\sqrt{w(e)}}{r}$.
    Otherwise, if~$e\not\in E(P,Q)$, then~$c_Q(e) = 0$ and also~$c_R(e)= 0$.
    
    Based on this observation, we can now analyze the structure of locally optimal solutions for the constructed instance of~\hartigan{}.
    First, we show that each clustering~$(Q,R)$ of~$V$ with~$Q = \emptyset$ can be improved by flipping any point from~$R$ to~$Q$.
    Note that by the above, this implies that~$E(Q,R) = \emptyset$ and thus~$c_Q =  c_R = 0$.
    Let~$v$ be an arbitrary point of~$R$.
    Since we assumed that each vertex in~$G$ is incident with at least one edge, $M(v,e) \neq 0$ for some edge~$e\in E$.
    Hence, $\|v-c_R\|^2 > 0$.
We set~$(Q' \coloneqq \{v\}, R' \coloneqq R\setminus \{v\})$.
Since~$|Q'|=1$, we have~$v=c_{Q'}$ and thus~$\|v-c_{Q'}\| = 0$.
Furthermore, since~$c_{R'}$ is the centroid of~$R'\subseteq R$ and since~$\|v-c_R\|^2 > 0$ we have~$\sum_{u\in R'} \|u - c_{R'}\|^2 \leq \sum_{u\in R'} \|u - c_{R}\|^2$.
    
    Hence, in the following, we assume that for the clustering~$(Q,R)$, both~$Q$ and~$R$ are non-empty.
    Then, the value of a clustering~$(Q,R)$ can be computed as follows:
    
    \begin{align*}
        \sum_{e\in E} &\left(\text{cost of~$R$ due to the $e^{\text{th}}$~coordinate + cost of~$Q$ due to the $e$-th~coordinate}\right) \\
        &= \sum_{\substack{e=uv\in E(Q,R)\\u\in Q,v\in R}} \ \Biggl(\sum_{x\in R\setminus \{v\}}(x(e)-c_R(e))^2+(v(e)-c_R(e))^2 \\
        &~~~~~~+ \sum_{x\in Q\setminus \{u\}}(x(e)-c_Q(e))^2+(u(e)-c_Q(e))^2\Biggr) + \sum_{e\not\in E(Q,R)} 2\sqrt{w(e)}^2\\
        &= \sum_{e\in E(Q,R)} (r-1)\left(\frac{\sqrt{w(e)}}{r}\right)^2+\left(\sqrt{w(e)}-\frac{\sqrt{w(e)}}{r}\right)^2 \\
        &~~~~~~+ (q-1)\left(\frac{\sqrt{w(e)}}{q}\right)^2+\left(\sqrt{w(e)}-\frac{\sqrt{w(e)}}{q}\right)^2 
            + \sum_{e\not\in E(Q,R)} 2\sqrt{w(e)}^2.
    \end{align*}
    
    This equation can be simplified to:
    \begin{align*}
        \sum_{e\in E(Q,R)} w(e)&\left[\frac{r-1}{r^2}+\left(1-\frac{1}{r}\right)^2 + \frac{q-1}{q^2}+\left(1-\frac{1}{q}\right)^2\right] + \sum_{e\not\in E(Q,R)} 2 w(e)\\
            &= \sum_{e\in E(Q,R)} w(e)\left[1-\frac{1}{r} + 1-\frac{1}{q}\right] + \sum_{e\not\in E(Q,R)} 2 w(e)\\
            &= \left(2-\frac{1}{p}-\frac{1}{q}\right) w(Q,R) + 2\left[w(Q,Q)+w(R,R)\right]
            = w(E)-n\cdot\frac{w(Q,R)}{q\cdot r}.
    \end{align*}
    
    Here, the last equality follows from the fact that~$r+q=n$ since~$(Q,R)$ is a clustering of~$V$.
    Thus, a clustering~$(Q',R')$ of~$V$ where both~$Q'$ and~$R'$ are non-empty improves over~$(Q,R)$ if and only if~$\frac{w(Q,R)}{q\cdot r} > \frac{w(Q',R')}{|Q'|\cdot |R'|}$.
    That is, if and only if~$(Q',R')$ is a denser cut of~$G$ than~$(Q,R)$.
    Hence, both problem instances share the same locally optimal solutions, which proves the correctness of the reduction.
    
    \proofsubparagraph{Tightness.}
    Let $\R$ be the set of all solutions such that neither cluster
    is empty. By the arguments above, we know that $\R$ contains all local optima, and constructing
    a clustering in~$\R$ from a solution to the original \textsc{Densest Cut} instance is trivial. This gives
    us the first two properties of tight reductions.
    Moreover, we know that moving from some solution $s \in \R$ to $s' \notin \R$ makes the clustering strictly worse.
    Thus, property 3 of tight reductions follows by observing that the transition graph of the reduced instance
    restricted to $\R$ is exactly the transition graph of the original instance.
\end{proof}

\fi

Finally, we provide a generic reduction to show \PLS{}-\iflong completeness \else hardness \fi for general~$k$. \iflong\else
\fi

\begin{lemma}\label{lemma:k-means}
    For each~$k\geq 2$, there exists a tight \PLS{}-reduction from \khartigan{} to \kkhartigan{}.
\end{lemma}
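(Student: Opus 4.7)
The plan is to reduce~\khartigan{} to~\kkhartigan{} by adjoining a single new point~$p$ placed so far from the original point set that any local optimum must assign~$p$ to its own cluster. Given an instance~$I=(V,w)$ of~\khartigan{} with $n \coloneqq |V|$ and embedding~$f:V\to\mathds{R}^d$, I fix~$v_0\in V$, translate so~$f(v_0)=\mathbf{0}$, embed in~$\mathds{R}^{d+1}$ by appending a zero coordinate to each~$f(v)$, and place~$p\coloneqq(0,\dots,0,D)$ for a positive integer~$D$ with~$D^2>n(n+1)W$, where~$W\coloneqq\sum_{u,v\in V}w(uv)$. The resulting weights are~$w'(uv)=w(uv)$ for~$u,v\in V$ and~$w'(pv)=w(v_0v)+D^2$, which are non-negative integers, so the constructed graph~$I'$ is a valid~\kkhartigan{} instance (witnessed by the above embedding), and~$D$ has bit-size polynomial in~$I$. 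I define the back-map~$g$ by~$g((C_1,\dots,C_k,\{p\}))\coloneqq(C_1,\dots,C_k)$ whenever~$p$ is alone in its cluster, with any fixed canonical $k$-partition of~$V$ serving as fallback.

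The crux is showing that in every local optimum~$s$ of~$I'$, the vertex~$p$ is alone in its cluster. Suppose for contradiction that~$p\in C$ with~$|C|=m\geq 2$; pick any~$x^*\in C\setminus\{p\}$ and any other cluster~$C'$, which exists since~$k+1\geq 3$. Because~$V$ lies in the hyperplane~$\{z_{d+1}=0\}$ while~$p$ has last coordinate~$D$, the centroid~$c_C$ has last coordinate~$D/m$, and thus~$\|x^*-c_C\|^2\geq D^2/m^2$. Applying the standard Hartigan--Wong cost-change formulas, flipping~$x^*$ from~$C$ to~$C'$ changes the total cost by~$\tfrac{|C'|}{|C'|+1}\|x^*-c_{C'}\|^2-\tfrac{|C|}{|C|-1}\|x^*-c_C\|^2\leq W-\tfrac{D^2}{n(n+1)}<0$, where I used that~$\|x^*-c_{C'}\|^2$ is at most the squared diameter of the original point set and hence at most~$W$. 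This strictly improving flip contradicts local optimality, so~$p$ is alone. Consequently~$g(s)$ inherits local optimality for~$I$, because any improving flip in~$g(s)$ lifts to the same flip in~$s$ with identical cost change (the singleton cluster~$\{p\}$ has cost zero in both).

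For tightness, I set~$\R$ to be the set of solutions of~$I'$ in which~$p$ is alone. Property~1 of a tight reduction follows from the claim above, and property~2 is immediate since~$(C_1,\dots,C_k,\{p\})\in\R$ for every $k$-partition of~$V$. For property~3, an analogous cost estimate shows that no improving edge leaves~$\R$: flipping any~$x\in V$ into~$\{p\}$ raises the cost of the~$\{p\}$-cluster by~$\tfrac{1}{2}(w(v_0x)+D^2)\geq D^2/2$ while lowering the cost elsewhere by at most~$\tfrac{m}{m-1}W\leq 2W$, and flipping~$p$ itself out is infeasible as it would leave~$\{p\}$ empty. Thus the only qualifying paths between~$\R$-solutions are single edges~$s\to s'$ inside~$\R$, which must flip some~$x\in V$ between two~$V$-clusters and so correspond exactly to an improving~\khartigan{} flip in~$I$. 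The main technical obstacle is the Hartigan--Wong cost bookkeeping behind the key claim; once the geometric bound~$\|x^*-c_C\|^2\geq D^2/m^2$ is established by isolating~$p$ in the new coordinate, all remaining estimates are routine.
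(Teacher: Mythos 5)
Your proof is correct and takes essentially the same approach as the paper's: adjoin a single far-away point, show that every local optimum must isolate it in a singleton cluster, and establish tightness by taking $\R$ to be exactly those solutions, whose restricted transition graph matches that of the original instance. The only cosmetic differences are that the paper offsets the new point within the existing coordinates (by $3n\cdot D$ per dimension, $D$ the total squared distance) instead of appending a coordinate, and that under the paper's convention allowing empty clusters the flip of $p$ out of its singleton should be dismissed as never strictly improving rather than as infeasible -- a one-line fix that does not affect your argument.
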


\iflong

\begin{proof}
Let~$\X\subseteq \mathds{R}^d$ be an instance of \khartigan{} consisting of $n$~points. 
By~$D$ we denote the sum of all squared distances between each two points of~$\X$.
We construct an instance~$\X'\subseteq \mathds{R}^d$ of \kkhartigan{} consisting of $(n+1)$~points as follows: 
$\X'$ contains a copy of~$\X$ and one additional point~$z$.
The coordinates of~$z$ are set to the coordinates of an arbitrary but fixed point~$x\in\X$ plus an offset of~$3n\cdot D$ in each dimension.

\proofsubparagraph{Correctness.}
First, observe that it is safe to assume that each of the $(k+1)$~clusters of a solution of~$\X'$ is non-empty:
Let~$C$ be any cluster containing at least two points. 
Since no two points share the same coordinates, at least one point~$p\in C$ has a positive distance from the centroid of~$C$.
Now, we create a new cluster consisting of point~$p$.
Note that this new clustering yields a better solution and thus the assumption is justified.

Second, we show that a clustering is not locally optimal if~$z$ is in a cluster~$C$ with at least one other point~$x\in\X$.
More precisely, we show that moving~$x$ from~$C$ into~$C'$, where~$C'$ is any other cluster, is improving.
Since the centroid of~$C$ has distance at least~$3D$ to~$x$ in each dimension, the contribution of the new cluster~$C\setminus\{x\}$ to the clustering is at least~$2D$ lower than the contribution of cluster~$C$.
Furthermore, since the sum of all distances of points in~$\X$ is~$D$, the contribution to the clustering of~$C'\cup\{x\}$ increases by at most~$D$.
Hence, moving~$x$ from~$C$ to~$C'$ is improving.

Thus, we can safely assume that one cluster only contains~$z$.
It follows that both instances share the same locally optimal solution.
For tightness, we set $\R$ equal to the clusterings such that $z$ is in a cluster by itself.
By the arguments above, $\R$ contains all local optima, and moving from some clustering in $\R$
to a clustering outside of $\R$ strictly worsens the clustering. Thus, tightness follows since the transition graph restricted to
$\R$ is equal to the transition graph of the original instance.
\end{proof}

\fi
Now, \Cref{thm:hartigan} follows by applying the tight \PLS{}-reductions according to 
\Cref{fig:reduction_graph}.

\iflong
\subsection{Squared Euclidean Max Cut}
\else
\subparagraph{Squared Euclidean Max Cut.}
\fi

We construct a \PLS-reduction from \minbisection{} to \emaxcut{}. 
The reduction is largely
based on the \NP-hardness proof of \textsc{Euclidean Max Cut}
of Ageev et al.\
\cite{ageevComplexityWeightedMaxcut2014}. The main difference is that we must
incorporate the weights of the edges of the \minbisection{} instance into the
reduction. 
\iflong Note the similarity of this reduction to the one
used in \Cref{lemma:densestcut_to_hartigan}.
\fi

\begin{lemma}\label{lemma:bisection_to_emaxcut}
    There exists a tight \PLS{}-reduction from \minbisection{} to \emaxcut{}.
\end{lemma}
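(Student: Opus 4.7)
The plan is to adapt Ageev et al.'s \textsc{NP}-hardness construction for \textsc{Euclidean Max Cut} into a local-search-preserving reduction, re-using the per-edge coordinate embedding already seen in \Cref{lemma:densestcut_to_hartigan}. Given an instance $(G,w)$ of \minbisection{}, I would first complete $G$ by adding zero-weight edges and then introduce a ``complementary'' weight function $w''(uv) \coloneqq M - w(uv)$ for a large integer $M \geq w_{\max}$ to be fixed later. Applying the embedding of \Cref{lemma:densestcut_to_hartigan} to $w''$ produces points $\{p_v\}_{v \in V} \subseteq \mathds{R}^{\binom{n}{2}}$ satisfying $\|p_u - p_v\|^2 = 2w''(uv) + d_{w''}(u) + d_{w''}(v) = 2nM - 2w(uv) - d_w(u) - d_w(v)$, which are non-negative integers by the choice of $M$. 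Summing over cut pairs, the resulting \emaxcut{} cost on a partition $(X,Y)$ with sizes $(p,q)$ becomes $2nM \cdot pq - 2w(X,Y) - (qS_X + pS_Y)$, where $S_X \coloneqq \sum_{u \in X} d_w(u)$. The dominant term $2nM \cdot pq$ is maximized exactly when $p$ and $q$ are as balanced as possible, so by picking $M$ polynomially large in $n$ and $w_{\max}$ every locally optimal \emaxcut{} solution must be a bisection of $V$.

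The main obstacle will be controlling the leftover term $qS_X + pS_Y$: on bisections of a general weighted graph it still depends on the partition, not only on its sizes, so the change induced by a single flip in the \emaxcut{} instance need not mirror the change in the \minbisection{} instance. To cure this, I plan to first preprocess $(G,w)$ into a \emph{weight-regular} instance $(\widehat{G},\widehat{w})$ in which every vertex has the same weighted degree $c$ and whose locally optimal bisections restrict to locally optimal bisections of $(G,w)$; this can be realized by attaching to each vertex a compensating gadget with private dummy vertices and balancing edges, calibrated so that the deficit $c - d_w(v)$ is absorbed while the huge ``anchor'' weights inside the gadget force the dummies onto a fixed side of every locally optimal bisection. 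On the regularized instance, $S_X = pc$ and $S_Y = qc$, so $qS_X + pS_Y = 2pqc$ is constant on bisections and the \emaxcut{} cost collapses to $2pq(nM-c) - 2w(X,Y)$. A short computation then shows that flipping a vertex between two bisections is improving in \emaxcut{} if and only if it is improving in \minbisection{}: the two change terms coincide up to a factor of $-2$.

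For tightness, I would take $\R$ to be the set of \emaxcut{} solutions that correspond to bisections of $V$. Property~1 holds by the size-forcing argument above. Property~2 is immediate, since each bisection of $V$ extends canonically to a feasible solution $q \in \R$ of the embedded instance with $g(q,x)$ equal to that bisection. Property~3 follows because the cost gap between reasonable and non-reasonable solutions, being at least of order $nM$ minus a polynomial in $n$ and $w_{\max}$, is strictly positive for large $M$; hence every directed path in $T(f(x))$ connecting two reasonable solutions with internal vertices outside $\R$ consists of a single edge, which is a flip between two bisections and therefore, by the computation in the previous paragraph, corresponds to an edge in $T(x)$.
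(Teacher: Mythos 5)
Your high-level strategy coincides with the paper's: embed the vertices so that the cut value becomes a dominant constant times $|X|\cdot|Y|$ minus the bisection weight, let the dominant term force local optima to be near-balanced, and take $\R$ to be the near-balanced solutions for tightness. The divergence, and the genuine gap, is in how you eliminate the degree-dependent leftover term $qS_X+pS_Y$. You propose to do this combinatorially, by first turning the \minbisection{} instance into a weight-regular one via per-vertex gadgets with private dummies and ``huge anchor weights''. This preprocessing is only asserted, and as described it cannot work. The intermediate problem is a \emph{minimization} problem with non-negative weights, so huge anchor edges can only glue their endpoints onto the \emph{same} side in local optima; they can never force a split. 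Hence ``forcing the dummies onto a fixed side'' can only mean gluing each gadget to its home vertex or collecting dummies on one global side, and both readings destroy the reduction: if, say, each original vertex drags $t\ge 1$ glued dummies with it, a balanced partition of the enlarged point set satisfies $(t+1)\bigl|\,|X\cap V|-|Y\cap V|\,\bigr|\le 1$, forcing $|X\cap V|=|Y\cap V|$, which is impossible since $|V|$ is odd; if instead all dummies sit on one global side, the restriction to $V$ is maximally unbalanced. In either case locally optimal solutions of the enlarged/embedded instance do not restrict to feasible bisections of $V$, so the solution map $g$ and the transfer of local optimality both break down. What you would actually need is a mechanism forcing the dummies to split exactly evenly across the two sides of every local optimum (plus the parity bookkeeping for the ``odd'' constraint), and no such mechanism is given---nor is one easy to build with non-negative weights under minimization.

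The gap is avoidable without any gadgets by regularizing in coordinate space rather than in the graph, which is exactly what the paper does: alongside the incidence-type coordinates (entries $\sqrt{w(e)/2}$), each vertex $v$ receives one private coordinate $\alpha_v=\sqrt{w(E)/2-w(\delta(v))/2}$, so every point has squared norm $w(E)/2$ and $\|x-y\|^2=w(E)-w(xy)$ for every pair. The cost of a partition then collapses to $|X|\cdot|Y|\cdot w(E)-w(X,Y)$ with no leftover degree term, and the rest of your argument---balance forcing by the dominant term, the sign flip turning maximization into \minbisection{}, $\R$ equal to the near-balanced solutions, and the single-arc analysis for tightness---goes through essentially as you wrote it. If you prefer your complement-weight formulation, the same one-extra-coordinate-per-vertex padding applied to $w''$ achieves the weight-regular effect directly, with no dummy vertices and no change to the solution space.
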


\iflong

\begin{proof}
    Let $G = (V, E)$ be an instance
    of \minbisection{} with weights $w$ on the edges,
    with $n = |V|$ and $m = |E|$. We assume that
    there exists at least one edge with nonzero weight. This is not a
    restriction, as the reduction is trivial when all weights are zero. 
    
    Let $M$ be the incidence matrix of $G$,
    modified such that each nonzero entry is replaced by the
    square root of half of its corresponding
    edge weight, and let~$J$ be an $n \times n$ diagonal matrix with rows
    indexed by the vertices of $G$. The entries of $J$ are set equal 
    to~$\alpha_v = \sqrt{w(E)/2 - w(\delta(v))/2}$.
    We then construct a matrix~$D = (M \mid J)$.
    Note that each row of $D$ corresponds to a vertex of $G$.
    
    From this matrix, we construct an instance
    of \emaxcut{}. For each row of $D$, we create a point
    in $\mathds{R}^{m+n}$ with coordinate vector equal to this row.
    In this way, the points are identified with the vertices $V(G)$.
    We call the set formed by these points $\X$.

\proofsubparagraph{Correctness.}
    Let $(X, Y)$ be a solution to this instance of
    \textsc{Squared Euclidean Max Cut}. If $|\,|X| - |Y|\,| = 1$, we map $(X, Y)$ to a solution of
    \textsc{Min Bisection} using the same partition. Otherwise,
    we map $(X, Y)$ to any solution of \textsc{Min Bisection} arbitrarily; we will
    show that these solutions cannot be locally optimal, and thus we may
    ignore them.
    
    From the definition of $D$, it follows that that every point $x \in \X$ has $\|x\|^2 = w(E)/2$.
    Moreover, for two distinct points $x, y \in \X$, their dot product
    $\langle x, y \rangle$ is nonzero if and only if~$x$ and $y$ are adjacent, in which case
    it evaluates to $w(xy)/2$. From these observations, the
    cost of a solution $(X, Y)$ is
    \begin{align*}
         \sum_{x \in X} \sum_{y \in Y} \|x - y\|^2
            &= \sum_{x \in X} \sum_{y \in Y} \left(\|x\|^2 + \|y\|^2
                    - 2\langle x, y\rangle\right)\\
            &= |X|\cdot |Y| \cdot w(E) - w(X, Y).
    \end{align*}

     We consider the first term in the expression above.
     This term is maximized when $|\,|X| - |Y|\,| = 1$; as this difference increases,
     $|X|\cdot |Y|$ decreases monotonically in integer steps of magnitude at least 2.
     Therefore, any flip that increases $|\,|X| - |Y|\,|$ decreases
     the cost by at least $2\cdot w(E)$, and increases it by
     at most $w(E)$, for a strictly negative contribution (since we assume $w(E) > 0$).

     By this argument, no solution $(X, Y)$ with $|\,|X| - |Y|\,| > 1$
     can be locally optimal under the \textsc{Flip} neighborhood.
     Thus, we restrict our attention to solutions such
     that $|\,|X| - |Y|\,| = 1$. For these solutions, $|X|\cdot|Y|$ is a fixed term,
     and so the only relevant part of the cost is the term
     $-w(X, Y)$.
     
     It is now easy to see that any
     locally maximal solution for this instance
     of \emaxcut{} maps to a locally minimal solution
     of \minbisection{}. 
     
     \proofsubparagraph{Tightness.} We let~$\R$ be the set of solutions with~$|\,|X| - |Y|\,| = 1$.
     We already know that $\R$ contains all local optima, and constructing some solution in $\R$
     from a solution to the original \minbisection{} instance is trivial.
     By the arguments above, we also see that moving from a cut in $\R$ to a cut
     outside $\R$ strictly decreases the cut value. Tightness now follows
     by observing that the transition graph of the reduced instance restricted to $\R$ is identical
     to the transition graph of the original instance.
\end{proof}

\fi

With a few modifications, the proof can be adapted to a reduction
to \emaxcutnosq{}. The main challenge in adapting the proof is that
the objective function is now of the form $\sum \|x - y\|$, rather than
$\sum \|x - y\|^2$. However, by suitably modifying the coordinates of the points,
the distances $\|x - y\|$ in the \textsc{Euclidean Max Cut} instance can take the same value
as $\|x - y\|^2$ in the \textsc{Squared Euclidean Max Cut} instance.

\begin{lemma}\label{lemma:bisection_to_emaxcut_nosquare}
    There exists a tight \PLS{}-reduction from \minbisection{} to \emaxcutnosq{}.
\end{lemma}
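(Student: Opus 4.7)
The plan is to adapt the reduction from \Cref{lemma:bisection_to_emaxcut} so that the actual Euclidean distances (not their squares) take integer values that encode the \minbisection{} cost. Concretely, for a sufficiently large integer $K$, I would aim for a point set whose pairwise distances satisfy $\|\tilde x_u - \tilde x_v\| = K - w(uv)$ when $uv \in E$ and $\|\tilde x_u - \tilde x_v\| = K$ otherwise. With such distances, any cut $(X,Y)$ has cost $K\cdot|X|\cdot|Y| - w(X,Y)$, which is structurally identical to the squared cost in \Cref{lemma:bisection_to_emaxcut} with $K$ playing the role of $w(E)$, so the bisection argument and tightness argument will carry over with only cosmetic changes.

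The main obstacle is realising these prescribed distances exactly in Euclidean space. My approach is to place the points on a sphere of radius $K$ around the origin, using three orthogonal blocks of coordinates: a single shared coordinate carrying a common value $a$ on every point; $|E|$ coordinates, one per edge $e$, each assigning a value $c_e$ to both endpoints of $e$ and $0$ elsewhere; and $|V|$ coordinates, one per vertex $v$, assigning a value $b_v$ to $\tilde x_v$ alone. A direct computation gives $\langle \tilde x_u, \tilde x_v\rangle = a^2 + c_{uv}^2$ for adjacent pairs and $a^2$ for non-adjacent pairs, hence $\|\tilde x_u - \tilde x_v\|^2 = 2K^2 - 2\langle \tilde x_u,\tilde x_v\rangle$. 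Matching this against $(K - w(uv))^2$ and $K^2$ respectively forces $a^2 = K^2/2$ and $c_e^2 = w(e)(2K - w(e))/2$, and then $b_v^2$ is pinned down by the sphere constraint as $b_v^2 = K^2/2 - \sum_{e \in \delta(v)} [K\cdot w(e) - w(e)^2/2]$. Choosing any integer $K \geq 2\cdot w(E) + 1$ simultaneously guarantees $b_v^2 \geq 0$ for every $v$ (since $K \geq 2\cdot w(\delta(v))$) and $c_e^2 \geq 0$ for every $e$ (since $K > \max_e w(e)$), so the $n$ points can be constructed in polynomial time in $\mathds{R}^{1+|E|+|V|}$ and yield integer edge weights for the \emaxcutnosq{} instance.

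The rest of the proof mirrors \Cref{lemma:bisection_to_emaxcut}. A flip that widens $|\,|X|-|Y|\,|$ decreases $|X|\cdot|Y|$ by at least two and hence the cost by at least $2K \geq 4\cdot w(E)$, while $w(X,Y)$ can change by at most $w(\delta(v)) \leq w(E)$, so such flips are never improving and every local optimum must be a bisection. Among bisections $|X|\cdot|Y|$ is fixed, so maximising $K\cdot|X|\cdot|Y| - w(X,Y)$ amounts to minimising $w(X,Y)$, giving the required correspondence of local optima. Tightness is obtained by taking $\R$ to be the set of bisections of the reduced instance and noting, exactly as in \Cref{lemma:bisection_to_emaxcut}, that the transition graph restricted to $\R$ is isomorphic to the transition graph of the \minbisection{} instance.
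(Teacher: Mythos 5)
Your proposal is correct and takes essentially the same approach as the paper: with $K$ playing the role of the paper's $C\cdot w(E)$ (for $C\geq 2$), your per-edge values $c_e^2 = w(e)(2K-w(e))/2$ and per-vertex values $b_v^2$ coincide exactly with the paper's modified incidence-matrix and diagonal entries, yielding the same cost $K\cdot|X|\cdot|Y| - w(X,Y)$ and the same bisection and tightness arguments as in \Cref{lemma:bisection_to_emaxcut}. The only (harmless) difference is your additional shared coordinate $a$, which contributes nothing to any pairwise distance and is absent from the paper's construction.
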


\iflong

\begin{proof}
    The reduction is identical to the reduction in the proof of \Cref{lemma:bisection_to_emaxcut}, except
    that the incidence matrix entries become $\sqrt{C\cdot w(e)\cdot w(E) - w(e)^2/2}$ and
    the diagonal matrix entries become 
    \[
        \alpha_v = \sqrt{C^2\cdot w(E)^2/2 - \sum_{e \in \delta(v)}\left(
            C\cdot w(e)\cdot w(E) - w(e)^2/2
        \right)}
    \]
    for some constant $C \geq 2$. For this to work, the quantities under the radicals must be non-negative.
    Clearly, $C\cdot w(e)\cdot w(E) - w(e)^2/2 \geq 0$, so the incidence matrix
    entries are valid. 
     For the numbers $\alpha_v$, it suffices to show that
    \[
        \sum_{e \in \delta(v)} \left(2\cdot C\cdot w(e)\cdot w(E) - w(e)^2\right) \leq 2\cdot C\cdot w(E)^2 - \sum_{e \in \delta(v)} w(e)^2 \leq C^2\cdot w(E)^2.
    \]
    Since $C \geq 2$, this is satisfied.
    
    \proofsubparagraph{Correctness.}    
    Similarly to the proof of \Cref{lemma:bisection_to_emaxcut}, we have for each
    $x \in \X$ that $\|x\|^2 = C^2\cdot w(E)^2/2$. Meanwhile, for $x, y \in \X$,
    the inner product $\langle x, y\rangle$ evaluates to
    $C\cdot w(xy)\cdot w(E) - w(xy)^2/2$ if $x$ and $y$ are adjacent, and zero otherwise. We find
    \begin{align*}
        \|x-y\| &= \sqrt{\|x\|^2 + \|y\|^2 - 2\langle x, y\rangle}\\
            &= \sqrt{C^2\cdot w(E)^2 + w(xy)^2 - 2\cdot C\cdot w(E)\cdot w(xy)} = C\cdot w(E) - w(xy).
    \end{align*}
    Thus, the cost function of the reduced instance is $C\cdot |X|\cdot|Y|\cdot w(E) - w(X, Y)$.
    The remainder of the proof, including the proof that the reduction is tight,
    is identical to that of \Cref{lemma:bisection_to_emaxcut}.
\end{proof}

This completes the proof of \Cref{thm:maxcut}.

\fi

\section{Discussion}

\Cref{thm:hartigan,thm:hardinstances} show that no local improvement algorithm
using the \textsc{Flip} heuristic can find locally optimal clusterings efficiently,
even when $k=2$.
This result augments an earlier worst-case
construction \cite{mantheyWorstCaseSmoothedAnalysis2024b}.
\Cref{thm:maxcut} demonstrates that finding local optima in \textsc{Squared Euclidean Max Cut}
is no easier than for general \textsc{Max Cut} under the \textsc{Flip} neighborhood. Thus,
the Euclidean structure of the problem yields no benefits with respect to the
computational complexity of local optimization. 

\subparagraph{Smoothed Analysis.}
Other \PLS{}-hard problems have yielded under
smoothed analysis. Chiefly, \maxcut{} has polynomial
smoothed complexity in complete graphs \cite{angelLocalMaxcutSmoothed2017,bibakImprovingSmoothedComplexity2021}
and quasi-polynomial smoothed
complexity in general graphs \cite{chenSmoothedComplexityLocal2020a,etscheidSmoothedAnalysisLocal2017}.
We hope that our results here serve to motivate
research into the smoothed complexity of \khartigan{} and \emaxcut{}, with the goal of adding them to the list of hard local search problems
that become easy under perturbations.

\subparagraph{Reducing the Dimensionality.} Our reductions yield instances of
\textsc{$k$-Means} and \textsc{(Squared) Euclidean Max Cut} in
$\Omega(n)$ dimensions.
Seeing as our reductions cannot be obviously adapted for $d = o(n)$,
we raise the question of whether the hardness of \emaxcut{} and \khartigan{} is
preserved for $d = o(n)$. This seems unlikely for
\emaxcut{} for $d = O(1)$,
since there exists an $O(n^{d+1})$-time exact algorithm
due to Schulman \cite{schulmanClusteringEdgecostMinimization2000}.
A direct consequence of \PLS{}-hardness for $d = f(n)$ would thus be an $O\left(n^{f(n)}\right)$-time
general-purpose local optimization algorithm. Concretely, \PLS{}-hardness for
$d = \polylog n$ would yield a quasi-polynomial time algorithm for all problems in \PLS{}.

For \khartigan{}, the situation is similar: For $d = 1$,
\textsc{$k$-Means} is polynomial-time solvable for any $k$.
However, already for $d = 2$,
the problem is \NP-hard \cite{mahajanPlanarKmeansProblem2012} when~$k$ is arbitrary.
When both $k$ and $d$ are constants, the problem is again
polynomial-time solvable,
as an algorithm exists that finds an optimal
clustering in time $n^{O(kd)}$ \cite{hasegawaEfficientAlgorithmsVarianceBased2000}.
Thus, \PLS{}-hardness for $kd \in O(f(n))$ would yield an $n^{O(f(n))}$-time
algorithm for all \PLS{} problems in this case.

\subparagraph{Euclidean Local Search.} There appear to be very few \PLS{}-hardness results for Euclidean local
optimization problems, barring the result of Brauer \cite{brauerComplexitySingleSwapHeuristics2017}
and now \Cref{thm:hartigan} and \Cref{thm:maxcut}. A major challenge in obtaining
such results is that Euclidean space
is very restrictive; edge weights cannot be independently set, so the intricacy often required
for \PLS-reductions is hard to achieve. Even in the present work, most of the work is done
in a purely combinatorial setting. It is then useful to get rid of the Euclidean structure
of the problem as quickly as possible, which we achieved by modifying the reductions of
Ageev et al.\ \cite{ageevComplexityWeightedMaxcut2014} and Alois et al.\ \cite{ADHP09}.

With this insight, we pose the question of what other local search problems remain
\PLS{}-hard for Euclidean instances. Specifically, is \textsc{TSP}
with the $k$-opt neighborhood still \PLS{}-hard in Euclidean (or squared Euclidean)
instances, for sufficiently large $k$? This is known to be the case for general
metric instances for some large constant $k$ \cite{krentelStructureLocallyOptimal1989a} (recently
improved to all $k \geq 17$ \cite{heimannOptAlgorithmTraveling2024}), but Euclidean instances
are still a good deal more restricted.

\bibliographystyle{plain}
\bibliography{bibliography}

\begin{thebibliography}{10}

\bibitem{aartsLocalSearchCombinatorial2003}
{\em Local {{Search}} in {{Combinatorial Optimization}}}.
\newblock Princeton University Press, 2003.

\bibitem{ageevComplexityWeightedMaxcut2014}
A.~A. Ageev, A.~V. Kel'manov, and A.~V. Pyatkin.
\newblock Complexity of the weighted max-cut in {{Euclidean}} space.
\newblock {\em Journal of Applied and Industrial Mathematics}, 8(4):453--457, October 2014.
\newblock \url{https://doi.org/10.1134/S1990478914040012}.

\bibitem{ADHP09}
Daniel Aloise, Amit Deshpande, Pierre Hansen, and Preyas Popat.
\newblock {{NP-hardness}} of {{Euclidean}} sum-of-squares clustering.
\newblock {\em Machine Learning}, 75(2):245--248, May 2009.
\newblock \url{https://doi.org/10.1007/s10994-009-5103-0}.

\bibitem{angelLocalMaxcutSmoothed2017}
Omer Angel, S{\'e}bastien Bubeck, Yuval Peres, and Fan Wei.
\newblock Local max-cut in smoothed polynomial time.
\newblock In {\em Proceedings of the 49th {{Annual ACM SIGACT Symposium}} on {{Theory}} of {{Computing}}}, {{STOC}} 2017, pages 429--437, New York, NY, USA, June 2017. Association for Computing Machinery.
\newblock \url{https://doi.org/10.1145/3055399.3055402}.

\bibitem{AroraRV09}
Sanjeev Arora, Satish Rao, and Umesh Vazirani.
\newblock Expander flows, geometric embeddings and graph partitioning.
\newblock {\em Journal of the ACM}, 56(2):5:1--5:37, April 2009.
\newblock \url{https://doi.org/10.1145/1502793.1502794}.

\bibitem{arthurSmoothedAnalysisKMeans2011}
David Arthur, Bodo Manthey, and Heiko R{\"o}glin.
\newblock Smoothed {{Analysis}} of the k-{{Means Method}}.
\newblock {\em Journal of the ACM}, 58(5):19:1--19:31, October 2011.
\newblock \url{https://doi.org/10.1145/2027216.2027217}.

\bibitem{arthurHowSlowKmeans2006}
David Arthur and Sergei Vassilvitskii.
\newblock How slow is the k-means method?
\newblock In {\em Proceedings of the Twenty-Second Annual Symposium on {{Computational}} Geometry}, {{SoCG}} '06, pages 144--153, New York, NY, USA, June 2006. Association for Computing Machinery.
\newblock \url{https://dl.acm.org/doi/10.1145/1137856.1137880}.

\bibitem{arthurKmeansAdvantagesCareful2007}
David Arthur and Sergei Vassilvitskii.
\newblock K-means++: The advantages of careful seeding.
\newblock In {\em Proceedings of the Eighteenth Annual {{ACM-SIAM}} Symposium on {{Discrete}} Algorithms}, {{SODA}} '07, pages 1027--1035, USA, January 2007. {Society for Industrial and Applied Mathematics}.

\bibitem{barahonaApplicationCombinatorialOptimization1988}
Francisco Barahona, Martin Gr{\"o}tschel, Michael J{\"u}nger, and Gerhard Reinelt.
\newblock An {{Application}} of {{Combinatorial Optimization}} to {{Statistical Physics}} and {{Circuit Layout Design}}.
\newblock {\em Operations Research}, 36(3):493--513, June 1988.
\newblock \url{https://pubsonline.informs.org/doi/10.1287/opre.36.3.493}.

\bibitem{berkhinSurveyClusteringData2006}
P.~Berkhin.
\newblock A {{Survey}} of {{Clustering Data Mining Techniques}}.
\newblock In {\em Grouping {{Multidimensional Data}}: {{Recent Advances}} in {{Clustering}}}, pages 25--71. Springer, Berlin, Heidelberg, 2006.
\newblock \url{https://doi.org/10.1007/3-540-28349-8_2}.

\bibitem{bibakImprovingSmoothedComplexity2021}
Ali Bibak, Charles Carlson, and Karthekeyan Chandrasekaran.
\newblock Improving the {{Smoothed Complexity}} of {{FLIP}} for {{Max Cut Problems}}.
\newblock {\em ACM Transactions on Algorithms}, 17(3):19:1--19:38, July 2021.
\newblock \url{https://doi.org/10.1145/3454125}.

\bibitem{blomerComputingSumsRadicals1991}
J.~Blomer.
\newblock Computing sums of radicals in polynomial time.
\newblock In {\em Proceedings 32nd {{Annual Symposium}} of {{Foundations}} of {{Computer Science}}}, pages 670--677, October 1991.
\newblock \url{https://ieeexplore.ieee.org/document/185434}.

\bibitem{boykovInteractiveGraphCuts2001}
Y.Y. Boykov and M.-P. Jolly.
\newblock Interactive graph cuts for optimal boundary \& region segmentation of objects in {{N-D}} images.
\newblock In {\em Proceedings {{Eighth IEEE International Conference}} on {{Computer Vision}}. {{ICCV}} 2001}, volume~1, pages 105--112 vol.1, July 2001.
\newblock \url{https://ieeexplore.ieee.org/document/937505}.

\bibitem{brauerComplexitySingleSwapHeuristics2017}
Sascha Brauer.
\newblock Complexity of {{Single-Swap Heuristics}} for {{Metric Facility Location}} and {{Related Problems}}.
\newblock In {\em Algorithms and {{Complexity}}}, Lecture {{Notes}} in {{Computer Science}}, pages 116--127, Cham, 2017. Springer International Publishing.

\bibitem{chenSmoothedComplexityLocal2020a}
Xi~Chen, Chenghao Guo, Emmanouil~V. {Vlatakis-Gkaragkounis}, Mihalis Yannakakis, and Xinzhi Zhang.
\newblock Smoothed complexity of local max-cut and binary max-{{CSP}}.
\newblock In {\em Proceedings of the 52nd {{Annual ACM SIGACT Symposium}} on {{Theory}} of {{Computing}}}, {{STOC}} 2020, pages 1052--1065, New York, NY, USA, June 2020. Association for Computing Machinery.
\newblock \url{https://dl.acm.org/doi/10.1145/3357713.3384325}.

\bibitem{elsasserSettlingComplexityLocal2011}
Robert Els{\"a}sser and Tobias Tscheuschner.
\newblock Settling the {{Complexity}} of {{Local Max-Cut}} ({{Almost}}) {{Completely}}.
\newblock In {\em {International Colloquium on Automata, Languages, and Programming}}, Lecture {{Notes}} in {{Computer Science}}, pages 171--182, Berlin, Heidelberg, 2011. Springer.

\bibitem{etscheidWorstCaseAnalysisMaxCut2018}
Michael Etscheid.
\newblock {\em Beyond {{Worst-Case Analysis}} of {{Max-Cut}} and {{Local Search}}}.
\newblock PhD thesis, Universit{\"a}ts- und Landesbibliothek Bonn, August 2018.
\newblock \url{https://bonndoc.ulb.uni-bonn.de/xmlui/handle/20.500.11811/7613}.

\bibitem{etscheidSmoothedAnalysisSquared2015a}
Michael Etscheid and Heiko R{\"o}glin.
\newblock Smoothed {{Analysis}} of the {{Squared Euclidean Maximum-Cut Problem}}.
\newblock In {\em Algorithms - {{ESA}} 2015}, Lecture {{Notes}} in {{Computer Science}}, pages 509--520, Berlin, Heidelberg, 2015. Springer.

\bibitem{etscheidSmoothedAnalysisLocal2017}
Michael Etscheid and Heiko R{\"o}glin.
\newblock Smoothed {{Analysis}} of {{Local Search}} for the {{Maximum-Cut Problem}}.
\newblock {\em ACM Transactions on Algorithms}, 13(2):25:1--25:12, March 2017.
\newblock \url{https://doi.org/10.1145/3011870}.

\bibitem{fiducciaLinearTimeHeuristicImproving1982a}
C.M. Fiduccia and R.M. Mattheyses.
\newblock A {{Linear-Time Heuristic}} for {{Improving Network Partitions}}.
\newblock In {\em 19th {{Design Automation Conference}}}, pages 175--181, June 1982.
\newblock \url{https://ieeexplore.ieee.org/document/1585498}.

\bibitem{hartiganAlgorithm136KMeans1979}
J.~A. Hartigan and M.~A. Wong.
\newblock Algorithm {{AS}} 136: {{A K-Means Clustering Algorithm}}.
\newblock {\em Journal of the Royal Statistical Society. Series C (Applied Statistics)}, 28(1):100--108, 1979.
\newblock \url{https://www.jstor.org/stable/2346830}.

\bibitem{hasegawaEfficientAlgorithmsVarianceBased2000}
Susumu Hasegawa, Hiroshi Imai, Mary Inaba, and Naoki Katoh.
\newblock Efficient {{Algorithms}} for {{Variance-Based}} k-{{Clustering}}.
\newblock {\em Proceedings of Pacific Graphics 1993}, February 2000.

\bibitem{heimannOptAlgorithmTraveling2024}
Sophia Heimann, Hung~P. Hoang, and Stefan Hougardy.
\newblock The \$k\$-{{Opt}} algorithm for the {{Traveling Salesman Problem}} has exponential running time for \$k {\textbackslash}ge 5\$.
\newblock In {\em Automata, {{Languages}} and {{Programming}}}, volume 297 of {\em Lecture {{Notes}} in {{Computer Science}}}, pages 84:1--84:18. Schloss Dagstuhl - Leibniz-Zentrum f{\"{u}}r Informatik, 2024.

\bibitem{jainDataClustering502010}
Anil~K. Jain.
\newblock Data clustering: 50 years beyond {{K-means}}.
\newblock {\em Pattern Recognition Letters}, 31(8):651--666, June 2010.
\newblock \url{https://www.sciencedirect.com/science/article/pii/S0167865509002323}.

\bibitem{johnsonHowEasyLocal1988}
David~S. Johnson, Christos~H. Papadimitriou, and Mihalis Yannakakis.
\newblock How easy is local search?
\newblock {\em Journal of Computer and System Sciences}, 37(1):79--100, August 1988.
\newblock \url{https://www.sciencedirect.com/science/article/pii/0022000088900463}.

\bibitem{KaneM13}
Daniel~M. Kane and Raghu Meka.
\newblock A {{PRG}} for lipschitz functions of polynomials with applications to sparsest cut.
\newblock In {\em Proceedings of the Forty-Fifth Annual {{ACM}} Symposium on {{Theory}} of {{Computing}}}, {{STOC}} '13, pages 1--10, New York, NY, USA, June 2013. Association for Computing Machinery.
\newblock \url{https://doi.org/10.1145/2488608.2488610}.

\bibitem{kanungoEfficientKmeansClustering2002}
T.~Kanungo, D.M. Mount, N.S. Netanyahu, C.D. Piatko, R.~Silverman, and A.Y. Wu.
\newblock An efficient k-means clustering algorithm: Analysis and implementation.
\newblock {\em IEEE Transactions on Pattern Analysis and Machine Intelligence}, 24(7):881--892, July 2002.
\newblock \url{https://ieeexplore.ieee.org/document/1017616}.

\bibitem{kernighanEfficientHeuristicProcedure1970}
B.~W. Kernighan and S.~Lin.
\newblock An efficient heuristic procedure for partitioning graphs.
\newblock {\em The Bell System Technical Journal}, 49(2):291--307, February 1970.
\newblock \url{https://ieeexplore.ieee.org/document/6771089}.

\bibitem{komusiewiczMorawietzFindingSwap2022}
Christian Komusiewicz and Nils Morawietz.
\newblock Finding 3-{{Swap-Optimal Independent Sets}} and {{Dominating Sets Is Hard}}.
\newblock In {\em 47th {{International Symposium}} on {{Mathematical Foundations}} of {{Computer Science}} ({{MFCS}} 2022)}, volume 241 of {\em Leibniz {{International Proceedings}} in {{Informatics}} ({{LIPIcs}})}, pages 66:1--66:14, Dagstuhl, Germany, 2022. Schloss Dagstuhl -- Leibniz-Zentrum f{\"u}r Informatik.
\newblock \url{https://drops.dagstuhl.de/opus/volltexte/2022/16864}.

\bibitem{krentelStructureLocallyOptimal1989a}
M.W. Krentel.
\newblock Structure in locally optimal solutions.
\newblock In {\em 30th {{Annual Symposium}} on {{Foundations}} of {{Computer Science}}}, pages 216--221, October 1989.
\newblock \url{https://ieeexplore.ieee.org/document/63481}.

\bibitem{lloydLeastSquaresQuantization1982}
S.~Lloyd.
\newblock Least squares quantization in {{PCM}}.
\newblock {\em IEEE Transactions on Information Theory}, 28(2):129--137, March 1982.

\bibitem{mahajanPlanarKmeansProblem2012}
Meena Mahajan, Prajakta Nimbhorkar, and Kasturi Varadarajan.
\newblock The planar k-means problem is {{NP-hard}}.
\newblock {\em Theoretical Computer Science}, 442:13--21, July 2012.
\newblock \url{https://www.sciencedirect.com/science/article/pii/S0304397510003269}.

\bibitem{MannMO08}
Charles~F. Mann, David~W. Matula, and Eli~V. Olinick.
\newblock The use of sparsest cuts to reveal the hierarchical community structure of social networks.
\newblock {\em Social Networks}, 30(3):223--234, July 2008.
\newblock \url{https://www.sciencedirect.com/science/article/pii/S0378873308000154}.

\bibitem{mantheyWorstCaseSmoothedAnalysis2024b}
Bodo Manthey and Jesse {van Rhijn}.
\newblock Worst-{{Case}} and {{Smoothed Analysis}} of the {{Hartigan-Wong Method}} for k-{{Means Clustering}}.
\newblock In {\em {{DROPS-IDN}}/v2/Document/10.4230/{{LIPIcs}}.{{STACS}}.2024.52}. Schloss Dagstuhl -- Leibniz-Zentrum f{\"u}r Informatik, 2024.
\newblock \url{https://drops.dagstuhl.de/entities/document/10.4230/LIPIcs.STACS.2024.52}.

\bibitem{matousekApproximateGeometricClustering2000}
J.~Matou{\v s}ek.
\newblock On {{Approximate Geometric}} k -{{Clustering}}.
\newblock {\em Discrete \& Computational Geometry}, 24(1):61--84, January 2000.
\newblock \url{https://doi.org/10.1007/s004540010019}.

\bibitem{MatulaS90}
David~W. Matula and Farhad Shahrokhi.
\newblock Sparsest cuts and bottlenecks in graphs.
\newblock {\em Discrete Applied Mathematics}, 27(1):113--123, May 1990.
\newblock \url{https://www.sciencedirect.com/science/article/pii/0166218X9090133W}.

\bibitem{michielsTheoreticalAspectsLocal2007}
Wil Michiels, Jan Korst, and Emile Aarts.
\newblock {\em Theoretical {{Aspects}} of {{Local Search}}}.
\newblock Monographs in {{Theoretical Computer Science}}, {{An EATCS Series}}. Springer, Berlin, Heidelberg, 2007.
\newblock \url{http://link.springer.com/10.1007/978-3-540-35854-1}.

\bibitem{monienPowerNodesDegree2010}
Burkhard Monien and Tobias Tscheuschner.
\newblock On the {{Power}} of {{Nodes}} of {{Degree Four}} in the {{Local Max-Cut Problem}}.
\newblock In {\em Algorithms and {{Complexity}}}, Lecture {{Notes}} in {{Computer Science}}, pages 264--275, Berlin, Heidelberg, 2010. Springer.

\bibitem{papadimitriouComplexityLinKernighan1992}
Christos~H. Papadimitriou.
\newblock The {{Complexity}} of the {{Lin}}--{{Kernighan Heuristic}} for the {{Traveling Salesman Problem}}.
\newblock {\em SIAM Journal on Computing}, 21(3):450--465, June 1992.
\newblock \url{https://epubs.siam.org/doi/abs/10.1137/0221030}.

\bibitem{pyatkinNPHardnessBalancedMinimum2017}
Artem Pyatkin, Daniel Aloise, and Nenad Mladenovi{\'c}.
\newblock {{NP-Hardness}} of balanced minimum sum-of-squares clustering.
\newblock {\em Pattern Recognition Letters}, 97:44--45, October 2017.
\newblock \url{https://www.sciencedirect.com/science/article/pii/S0167865517301988}.

\bibitem{roughgardenComplexityKmeansMethod2016}
Tim Roughgarden and Joshua~R. Wang.
\newblock The {{Complexity}} of the k-means {{Method}}.
\newblock In {\em 24th {{Annual European Symposium}} on {{Algorithms}} ({{ESA}} 2016)}, volume~57 of {\em Leibniz {{International Proceedings}} in {{Informatics}} ({{LIPIcs}})}, pages 78:1--78:14, Dagstuhl, Germany, 2016. Schloss Dagstuhl--Leibniz-Zentrum fuer Informatik.
\newblock \url{http://drops.dagstuhl.de/opus/volltexte/2016/6419}.

\bibitem{schafferSimpleLocalSearch1991}
Alejandro~A. Sch{\"a}ffer and Mihalis Yannakakis.
\newblock Simple {{Local Search Problems}} that are {{Hard}} to {{Solve}}.
\newblock {\em SIAM Journal on Computing}, 20(1):56--87, February 1991.
\newblock \url{https://epubs.siam.org/doi/10.1137/0220004}.

\bibitem{schoenbergMetricSpacesPositive1938}
I.~J. Schoenberg.
\newblock Metric spaces and positive definite functions.
\newblock {\em Transactions of the American Mathematical Society}, 44(3):522--536, 1938.
\newblock \url{https://www.ams.org/tran/1938-044-03/S0002-9947-1938-1501980-0/}.

\bibitem{schulmanClusteringEdgecostMinimization2000}
Leonard~J. Schulman.
\newblock Clustering for edge-cost minimization (extended abstract).
\newblock In {\em Proceedings of the Thirty-Second Annual {{ACM}} Symposium on {{Theory}} of Computing}, {{STOC}} '00, pages 547--555, New York, NY, USA, May 2000. Association for Computing Machinery.
\newblock \url{https://dl.acm.org/doi/10.1145/335305.335373}.

\bibitem{ShiM00}
Jianbo Shi and J.~Malik.
\newblock Normalized cuts and image segmentation.
\newblock {\em IEEE Transactions on Pattern Analysis and Machine Intelligence}, 22(8):888--905, August 2000.
\newblock \url{https://ieeexplore.ieee.org/document/868688}.

\bibitem{telgarskyHartiganMethodKmeans2010}
Matus Telgarsky and Andrea Vattani.
\newblock Hartigan's {{Method}}: K-means {{Clustering}} without {{Voronoi}}.
\newblock In {\em Proceedings of the {{Thirteenth International Conference}} on {{Artificial Intelligence}} and {{Statistics}}}, pages 820--827. {JMLR Workshop and Conference Proceedings}, March 2010.
\newblock \url{https://proceedings.mlr.press/v9/telgarsky10a.html}.

\bibitem{vattaniKmeansRequiresExponentially2011}
Andrea Vattani.
\newblock K-means {{Requires Exponentially Many Iterations Even}} in the {{Plane}}.
\newblock {\em Discrete \& Computational Geometry}, 45(4):596--616, June 2011.
\newblock \url{https://doi.org/10.1007/s00454-011-9340-1}.

\end{thebibliography}

\end{document}